\numberwithin{equation}{section}
\DeclareMathOperator*{\esssup}{ess\,sup}
\newcommand{\MCG}{\mathcal{G}}
\newcommand{\MCC}{\mathcal{C}}
\newcommand{\MCA}{\mathcal{A}}
\newcommand{\MCF}{\mathcal{F}}
\newcommand{\MCO}{\mathcal{O}}
\newcommand{\MCD}{\mathcal{D}}
\newcommand{\MCK}{\mathcal{K}}
\newcommand{\EE}{\mathbb{E}}
\newcommand{\PP}{\mathbb{P}}
\newcommand{\RR}{\mathbb{R}}
\newcommand{\NN}{\mathbb{N}}
\newcommand{\Ltx}{\mathcal{L}_{t,x}}
\newcommand{\infnorm}[1]{\left\lVert#1\right\rVert_\infty}
\newcommand{\Vz}{V^{\delta}}
\newcommand{\vz}{v^{(0)}}
\newcommand{\vo}{v^{(1)}}
\newcommand{\pz}{{\pi^{(0)}}}
\newcommand{\Vzl}{V^{\pz,\delta}}
\newcommand{\Vzp}{V^{\pi,\delta}}
\newcommand{\pzt}{\widetilde\pi^0}
\newcommand{\pot}{\widetilde\pi^1}
\newcommand{\abs}[1]{\left|#1\right|}
\newcommand{\average}[1]{\left\langle#1\right\rangle}
\newcommand{\mc}[1]{\mathcal{#1}}
\newcommand{\ud}{\,\mathrm{d}}
\newcommand{\Wh}[1]{W^{(H)}_{#1}}
\newcommand{\Zh}[1]{Z^{\delta,H}_{#1}}
\newcommand{\half}{\frac{1}{2}}
\newtheorem{theo}{Theorem}[section]
\newtheorem{lem}[theo]{Lemma}
\newtheorem{rem}[theo]{Remark}
\newtheorem{prop}[theo]{Proposition}
\newtheorem{assump}[theo]{Assumption}
\newtheorem{cor}[theo]{Corollary}
\newtheorem{theo*}{Theorem}[section]
\newtheoremstyle{dotlessS}{}{}{\color{blue}}{}{\color{blue}\bfseries}{}{ }{}
\theoremstyle{dotlessS}
\newcommand{\be}{\begin{equation}}
\newcommand{\en}{\end{equation}}
\begin{document}

\title{\vspace{-50pt} Optimal Portfolio  under Fractional Stochastic Environment}
\author{Jean-Pierre Fouque\thanks{Department of Statistics \& Applied Probability,
 University of California,
        Santa Barbara, CA 93106-3110, {\em fouque@pstat.ucsb.edu}. Work  supported by NSF grant DMS-1409434.}
        \and Ruimeng Hu\thanks{Department of Statistics \& Applied Probability,
 University of California,
        Santa Barbara, CA 93106-3110, {\em hu@pstat.ucsb.edu}.}
        }
\date{\today}
\maketitle

\begin{abstract}
Rough stochastic volatility models have attracted a lot of attention recently, in particular for the linear option pricing problem.  
In this paper, starting with power utilities, we propose to use a \emph{martingale distortion representation} of the optimal value function for the nonlinear asset allocation problem in a (non-Markovian) fractional stochastic environment (for all values of the Hurst index $H \in (0,1)$). We rigorously establish a first order approximation of the optimal value, when the return and volatility of the underlying asset are functions of a stationary slowly varying fractional Ornstein-Uhlenbeck process. We prove that this approximation can be also generated by a fixed zeroth order trading strategy providing an explicit strategy which is asymptotically optimal in all admissible controls. Furthermore, we extend the discussion to general utility functions, and obtain the asymptotic optimality of this fixed strategy in a specific family of admissible strategies.

\end{abstract}

\textbf{Keywords: }Optimal portfolio, Fractional stochastic processes, Martingale distortion, Asymptotic optimality. 

\section{Introduction}\label{sec_intro}

In this paper, we study the Merton problem under a non-Markovian fractional stochastic environment, and we are able to provide an explicit trading strategy which is asymptotically optimal in the case of power utilities and asymptotically optimal in a specific family of general utilities. 

The portfolio optimization problem was first studied in the continuous-time framework by Merton \cite{Me:69, Me:71}, where risky assets are considered following the Black-Scholes-Merton model with constant returns and constant volatilities. Under this setup, Merton provided explicit solutions on how to trade stocks and/or how to consume so as to maximize one's utility, when the utility function is of specific types, for instance, {Constant Relative Risk Aversion} (CRRA). After these seminal papers, the optimal portfolio and consumption problem has been extensively studied in financial markets subject to imperfections. For instance, \cite{CoHu:89} and  \cite{KaLeSh:87} studied the case of incomplete markets;  transaction costs have been considered by \cite{MaCo:76} and a user's guide by \cite{GuMu:13}; investment under portfolio constraints are studied by \cite{GrZh:93},  \cite{CvKa:95} and  \cite{ElTo:08}, just to name a few.

A key factor in the Merton problem is the modeling of underlying assets, and empirical studies suggest that volatility is stochastic. In this direction, we refer the readers to \cite{Za:99} for the case of non-linear local volatility models, \cite{ChVi:05} for the case of a particular Heston-like stochastic volatility model, \cite{LoSi:16} for the case of local-stochastic volatility, and \cite{KrSc:03} for the case of general analysis for semimartingale models, to list a few. 

Most of the work has focused on the Markovian models of the volatility. However, in a recent series of papers, non-Markovian  models seem to better describe the data, especially short-range dependence. In \cite{roughvol}, it is beautifully demonstrated that stochastic volatility driven by a fractional Brownian motion (fBm) with Hurst coefficient $H < \half$, so-called \emph{rough fractional stochastic volatility} (RFSV), fit the observed data quite well. \cite{roughvol-limit} and \cite{roughvol-leverage} showed that RFSV is a natural scaling  limit of a general model of Limit Order Book (LOB) based on Hawkes processes. 

Meanwhile,  multi-scale factor models for risky assets were considered in the portfolio optimization problem in \cite{FoSiZa:13} and \cite{Hu:XX}, where return and volatility are driven by a fast mean-reverting factor and a slowly varying factor. Specifically, \cite{FoSiZa:13} heuristically provided the asymptotic approximation to the value function and the optimal strategy for general utility functions, by analyzing a non-linear Hamilton-Jacobi-Bellman partial differential equation (HJB PDE). 

In this paper, we shall consider both the scales and the non-Markovian structure for modeling the underlying assets. As in \cite{FoHu:16}, and in particular because of the relevance for long-term investments (see \cite{FoSiZa:13} for further discussion of the time scales involved), we only consider one \emph{slowly varying fractional stochastic factor} denoted by $\Zh{t}$ for $0<H<1$. The case with fast mean-reverting fractional environment is treated in \cite{FoHu2:17}, while multi-scale models are studied  in the paper in preparation \cite{Hu:XX}.
As in \cite{GaSo:15}, we model $\Zh{t}$ by a fractional Ornstein-Uhlenbeck (fOU) process, which satisfies the following stochastic differential equation (SDE)
\begin{equation*}
\ud \Zh{t} = -\delta a \Zh{t} \ud t + \delta^H \ud W_t^{(H)},
\end{equation*}
where $\delta$ is a small parameter, and $W_t^{(H)}$ is a fractional Brownian motion with Hurst index $H$. We refer to Section~\ref{sec_fBMfOU} for a brief introduction to fBm and fOU, and  to \cite{MaVa:68, ChKaMa:03, Co:07, BiHuOkZh:08, KaSa:11} for more details. 

Pricing options under such RFSV models is indeed a challenge since the model is non-Markovian and PDE tools are no longer available. However, when the fractional stochastic volatility factor is slowly varying (small $\delta$),  one can obtain a practical approximation using the so-called ``epsilon-martingale decomposition'' method designed  in  \cite{FoPaSi:00} and \cite{FPSepsdecomp}. This has been recently carried out for slowly varying RFSV models in \cite{GaSo:15} where a correction to Black-Scholes formula for fractional SV is obtained. Note that the problem is non-Markovian but remains linear in the case of option pricing.

\medskip
\noindent{\bf Main results.} In this paper, we study the nonlinear terminal utility maximization problem under the RFSV model \eqref{def_Zh}. For power utilities, by a martingale distortion representation, we rigorously obtain an expression for the value process at any time and for all $H \in (0,1)$, as well as an expression for the corresponding optimal portfolio.  In the  regime of small $\delta$,  these expressions take the form of a leading order term plus a first order correction of order $\delta^H$. This is done by expanding the martingale distortion representation around a ``frozen" volatility  at the observed value $\Zh{0}$ at time $t=0$. For $H$ relatively small  (close to 0.1 as demonstrated in \cite{roughvol}), the first order correction of the value process is relatively large, and should also be generated by any good practical strategy. Our result nicely shows that the leading order of the optimal strategy, which is explicit in terms of the underlying asset and the current factor level, therefore  easily implemented, will generate the value function up to order $\delta^H$, that is including the first correction. In other words, the $\delta^H$ term in the expression of the optimal strategy is not needed to give such correction to the value process. However, it is given explicitly and can be easily implemented to improve the strategy by taking into account inter-temporal hedging. For general utility functions, using the epsilon-martingale decomposition method and the properties of the risk tolerance function for the Merton problem with constant coefficient, we obtain an approximation for the portfolio value  corresponding to a given strategy, and, as in \cite{FoHu:16} in the Markovian case, we show that this strategy is asymptotically optimal in a specific class of admissible strategies.

\medskip
\noindent{\bf Organization of the paper.} The rest of the paper is organized as follows. In Section~\ref{sec_power}, we present the martingale distortion transformation under general stochastic volatility models first derived in the Markovian case in \cite{Za:99}, and in non-Markovian settings in \cite{Te:04}. Here the drift and volatility of the underlying asset are driven by a stochastic process which is not required to be Markovian nor a semimartingale. We  also present a  generalization to the multi-asset case. In Section~\ref{sec_apptofSV}, we derive the asymptotic results when the stochastic factor is fractional and slowly varying. The approximation to the value process and optimal portfolio are given in Section~\ref{sec_asymppower} and Section~\ref{sec_asymppi} respectively. It is also shown that the leading order of the optimal portfolio is optimal in the full class of admissible strategies up to $\delta^H$, which is numerically illustrated in Section~\ref{sec_num}. The Merton problem with a general utility function is discussed and asymptotic optimality results are presented in Section~\ref{sec_optimality}. We conclude in Section~\ref{sec_conclusion}.

\section{Merton Problem with Power Utilities and Stochastic Environment}\label{sec_power}

Denote by $S_t$ the underlying asset price whose return and volatility are driven by a stochastic factor $Y_t$, 
\begin{align}
\ud S_t = \mu(Y_t) S_t \ud t +  \sigma(Y_t) S_t \ud W_t,\label{def_StunderY}
\end{align}
with assumptions on $\mu(y)$ and $\sigma(y)$ to be specified later.  Here $Y_t$ is a general stochastic process that is adapted to $(\mc{G}_t)$, the natural filtration generated by $\{W_u^Y: u \leq t\}$, and $W_t^Y$ is a Brownian motion generally correlated with the Brownian motion $W_t$ driving the price $S_t$:
$$\ud \average{W, W^Y}_t = \rho \ud t, \quad \abs{\rho }<1.$$
Also define $(\MCF_t)$ as the natural filtration generated by $(W_t, W_t^Y)$.

Denote by  $\pi$ the investor's strategy and by $X_t^\pi$ the corresponding wealth process. The quantity $\pi_t \in \MCF_t$ represents the amount of money invested in the risky asset at time $t$, 
with the remaining held in a money account paying interest at a constant rate $r$. Without loss of generality, we will take $r=0$ throughout.
Assuming that the strategy $\pi$ is self-financing,  the dynamics of the wealth process $X_t^\pi$ is given by:
\begin{equation}
\ud X_t^\pi = \pi_t\mu(Y_t) \ud t + \pi_t\sigma(Y_t) \ud W_t.\label{def_XtunderY}
\end{equation}

The investor's goal is to find the optimal strategy so as to maximize her expected utility of terminal wealth. Mathematically, she aims at identifying the optimal value
\begin{equation}\label{def_Vt}
V_t := \esssup_{\pi \in \MCA_t}\EE\left[U(X_T^\pi)\vert \MCF_t\right],
\end{equation}
and the optimal strategy $\pi^\ast$, given her preference described by a utility function $U(\cdot)$. In this section and Section \ref{sec_apptofSV}, we consider the power utility case:
\begin{equation}\label{def_power}
U(x) = \frac{x^{1-\gamma}}{1-\gamma}, \quad \gamma >0, \quad \gamma \neq 1,
\end{equation}
and the set $\MCA_t$ is the class of all admissible strategies: 
\begin{equation}\label{def_MCA}
\MCA_t := \left\{\pi \text{ is }  (\MCF_t)\text{-adapted}: X_s^\pi \text{ in } \eqref{def_XtunderY} \text{ stays nonnegative } \forall s \geq t, \text{ given } \MCF_t\right\},
\end{equation}
where zero is an absorbing state for $X_t^\pi$ (bankruptcy). Additionally, for the power utility case,  we require  that for all $\pi \in \MCA_t$, the following integrability conditions are satisfied:
	\begin{align}\label{assump_strategies}
	\sup_{t\in[0,T]}\EE\left[ \left(X_t^\pi\right)^{2p(1-\gamma)}\right] < +\infty,\; for \; some  \;p > 1, \quad and \quad \EE \left[\int_0^T\left(X_t^\pi\right)^{-2\gamma}    \pi_t^2  \sigma^2(Y_t)\ud t \right] < \infty.
	\end{align}
Later, in Section~\ref{sec_optimality}, we will discuss the case with general utility functions.

In order to motivate the martingale  distortion transformation that we will introduce in Section~\ref{sec_martdistortion}, we first recall in the next subsection the distortion transformation obtained by \cite{Za:99}  in the Markovian case with power utility \eqref{def_power}. We also stated in Remark \ref{rem_geneMDT} that results can be generalized to the multi-asset case, when the returns and volatilities of stocks are driven by the same randomness $W^Y$.

\subsection{The Distortion Transformation}\label{sec_distorttrans}

In the Markovian setup, $Y_t$ is a diffusion process following the stochastic differential equation of the form
	\begin{equation*}
	\ud Y_t = k(Y_t) \ud t + h(Y_t) \ud W_t^Y,
	\end{equation*}
	and the value function $V(t,x,y) := \sup_{\pi \in \MCA_t}\EE\left[U(X_T^\pi)\vert X
	_t = x, Y_t = y\right]$ is a solution to the Hamilton-Jacobi-Bellman (HJB) equation given in \cite{FoSiZa:13}. The distortion transformation is given by 
	\begin{equation}\label{eq_distorttrans}
	V(t,x,y) = \frac{x^{1-\gamma}}{1-\gamma}\Psi(t,y)^q, 
	\end{equation}
	with 
	\begin{equation}\label{def:q}
	 q = \frac{\gamma}{\gamma + (1-\gamma)\rho^2}
	 \end{equation}
	   which results in canceling $(\Psi_y)^2$ terms in the HJB equation. 
	Consequently, $\Psi$ solves the linear PDE
	\begin{equation*}
	\Psi_t + \left(\half h^2(y)\partial_{yy} + k(y)\partial_y + \frac{1-\gamma}{\gamma}\lambda(y)\rho h(y)\partial_y\right)\Psi + \frac{1-\gamma}{2q\gamma}\lambda^2(y)\Psi = 0, \quad \Psi(T,y) = 1,
	\end{equation*}
	where $\lambda(y)$ is the Sharpe ratio $\lambda(y) := \mu(y)/\sigma(y)$. 
	
	By Feynman-Kac formula, we observe that $\Psi$ can be expressed as
	\begin{equation}\label{eq_Psi}
	\Psi(t,y) =\widetilde \EE\left[\left.e^{\frac{1-\gamma}{2q\gamma}\int_t^T \lambda^2( Y_s)\ud s }\right\vert Y_t = y\right],
	\end{equation}
where under $\widetilde \PP$,  $ \widetilde W_t^Y = W_t^Y - \int_0^t \rho\left(\frac{1-\gamma}{\gamma}\right)\lambda( Y_s) \ud s $ is a standard Brownian motion.
	
	The formula in the next subsection generalizes \eqref{eq_Psi} without using any PDE argument.

\subsection{Martingale Distortion Transformation}\label{sec_martdistortion}

The martingale distortion transformation is motived by the formulas \eqref{eq_distorttrans} and \eqref{eq_Psi}. It has been derived in \cite{Te:04} with a slightly different utility function. For the sake of clarity, we restate it here, and we propose a short proof based on verification using stochastic calculus. We comment that the results and proofs presented below can be extended straightforwardly to the multi-asset case (see Remark \ref{rem_geneMDT}). 
Here and in the rest the paper, we only present  the single asset case for  simplicity of notations.

Note that in the following Proposition~\ref{thm_martdistort}, $(Y_t)$ is a general stochastic process adapted to $(\mc{G}_t)$  \emph{which does not need to be Markovian, nor a semimartingale}. In particular, in  Section~\ref{sec_apptofSV}, we will be able to apply it to the case  $(Y_t)$ being a fractional process.

Let us assume that the Sharpe-ratio $\lambda(\cdot)$ is bounded. Define a new probability measure $\widetilde \PP$ by
\begin{equation}\label{def_Ptilde}
\frac{\ud \widetilde \PP}{\ud \PP} = \exp\left\{-\int_0^T a_s \ud W_s^Y - \half \int_0^T a_s^2 \ud s \right\},
\end{equation}
where $a_t$ is given by
\begin{equation}\label{def:at}
a_t = -\rho\left(\frac{1-\gamma}{\gamma}\right)\lambda(Y_t),
\end{equation}
and therefore, is bounded and $\MCG_t$-adapted.
Then, $\widetilde W_t^Y :=  W_t^Y+ \int_0^t a_s \ud s$ is a standard Brownian motion under $\widetilde \PP$. We now make the following {\it model assumptions}.

\begin{assump}\label{assump_power}
\begin{enumerate}[(i)]
\item The SDE \eqref{def_StunderY} for $S_t$ has a unique strong solution. The function $\lambda(\cdot)$ is assumed to be bounded  and $C^2(\RR)$. The function $\lambda'(\cdot)$ is bounded and $\lambda''(\cdot)$ is at most polynomially growing.

\item Define the $\widetilde \PP$-martingale
\begin{equation}\label{def_Mtmartdistort}
M_t = \widetilde \EE\left[\left.e^{\frac{1-\gamma}{2q\gamma}\int_0^T \lambda^2(Y_s)\ud s} \right\vert \MCG_t\right],
\end{equation}
and write its representation
\begin{equation}\label{def_xi}
 \ud M_t = M_t\xi_t \ud \widetilde W_t^Y.
\end{equation}
We assume
\begin{equation}\label{assump_xi}
\EE\left[e^{c_\xi\int_0^T \xi_t^2\ud t}\right] < \infty, 
\end{equation}
where the constant $c_\xi$ is given by 
$c_\xi = \frac{16 (1-\gamma)^2\rho^2 p^2q^2}{\gamma^2}$ for  $\gamma < 1$, and $c_\xi = \frac{16 (1-\gamma)^2\rho^2 p^2q^2}{\gamma^2} - \frac{4p(1-\gamma)}{\gamma^2}$ for  $\gamma >1$. The parameter $p$ is introduced in \eqref{assump_strategies} and $q$ is defined by \eqref{def:q}.

\end{enumerate}
\end{assump}

\begin{prop}\label{thm_martdistort}
	Let $S_t$ follow the dynamics \eqref{def_StunderY}, and suppose the objective is \eqref{def_Vt} with power utility function \eqref{def_power}. Under Assumption~\ref{assump_power}, the value process $V_t$ defined in \eqref{def_Vt} is given by
	\begin{equation}\label{def:Vcorrelated}
	V_t=\frac{X_t^{1-\gamma}}{1-\gamma} \left[\widetilde{\EE}\left(\left.e^{\frac{1-\gamma}{2q\gamma}\int_t^T\lambda^2(Y_s)\ud s}\right\vert \mc{G}_t\right)\right]^q.
	\end{equation}
	The expectation
	$\widetilde \EE[\cdot]$ is computed with respect to $\widetilde{\PP}$ introduced in \eqref{def_Ptilde}.
	The parameter $q$ is given in terms of $\gamma$ and $\rho$ by \eqref{def:q}.
	The optimal strategy $\pi^\ast$ is
	\begin{equation}\label{def_pioptimal}
	\pi^\ast_t = \left[\frac{\lambda(Y_t)}{\gamma \sigma(Y_t)} + \frac{\rho q \xi_t}{\gamma \sigma(Y_t)}\right] X_t,
	\end{equation}
where $\xi_t$ is given in \eqref{def_xi}.

	The conditioning with respect to $\MCG_t$ corresponds to the separation of variable in the Markovian case presented in Section~\ref{sec_distorttrans}.
\end{prop}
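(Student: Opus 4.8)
The plan is to verify that the right-hand side of \eqref{def:Vcorrelated}, regarded as a process along an arbitrary admissible wealth trajectory, is an upper bound for the conditional expected terminal utility, with equality for the candidate \eqref{def_pioptimal}. Concretely, for $\pi\in\MCA_t$ I would introduce the process
\[
\mathcal{V}^\pi_s := \frac{(X_s^\pi)^{1-\gamma}}{1-\gamma}\,N_s^q, \qquad N_s := \widetilde\EE\!\left[\left.e^{\frac{1-\gamma}{2q\gamma}\int_s^T\lambda^2(Y_u)\,\ud u}\,\right|\MCG_s\right],
\]
noting that $N_s = M_s\,e^{-\frac{1-\gamma}{2q\gamma}\int_0^s\lambda^2(Y_u)\,\ud u}$, with $M$ the $\widetilde\PP$-martingale \eqref{def_Mtmartdistort} (well defined and bounded since $\lambda$ is bounded, so the density \eqref{def_Ptilde} is a genuine probability density and $\widetilde W^Y$ is a $\widetilde\PP$-Brownian motion by Girsanov's theorem). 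Since $N_T=1$ one gets the terminal identity $\mathcal{V}^\pi_T = U(X_T^\pi)$, and boundedness of $\lambda$ also forces $N$ and $N^q$ to stay between two positive constants.

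The computational heart is an application of It\^o's formula to $\mathcal{V}^\pi_s$, using $\ud X_s^\pi = X_s^\pi(\theta_s\lambda(Y_s)\,\ud s + \theta_s\,\ud W_s)$ with $\theta_s := \pi_s\sigma(Y_s)/X_s^\pi$, the dynamics $\ud N_s = N_s\xi_s\,\ud\widetilde W^Y_s - \frac{1-\gamma}{2q\gamma}\lambda^2(Y_s)N_s\,\ud s$ inherited from the representation \eqref{def_xi} of $M$ and rewritten under $\PP$ via $\ud\widetilde W^Y_s = \ud W^Y_s + a_s\,\ud s$ with $a_s$ from \eqref{def:at}, and the correlation $\ud\langle W,W^Y\rangle_s = \rho\,\ud s$. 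Collecting terms and completing the square in $\theta_s$, the drift of $\mathcal{V}^\pi_s$ becomes $-\tfrac{\gamma(1-\gamma)}{2}\mathcal{V}^\pi_s(\theta_s - \theta_s^\ast)^2$ plus $\theta_s$-free terms; after inserting $a_s = -\rho\frac{1-\gamma}{\gamma}\lambda(Y_s)$ the latter -- the $\lambda^2$ term, the $\lambda\xi$ cross term and the $\xi^2$ term -- all cancel, the last precisely because of the choice $q = \gamma/(\gamma + (1-\gamma)\rho^2)$ in \eqref{def:q} (equivalently $(1-\gamma)q\rho^2/\gamma = 1-q$), which is the martingale analogue of the cancellation of the $(\Psi_y)^2$ terms in the HJB argument recalled in Section~\ref{sec_distorttrans}. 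One is left with
\[
\ud\mathcal{V}^\pi_s = -\tfrac{\gamma(1-\gamma)}{2}\,\mathcal{V}^\pi_s\,(\theta_s-\theta_s^\ast)^2\,\ud s + \mathcal{V}^\pi_s\big[(1-\gamma)\theta_s\,\ud W_s + q\xi_s\,\ud W^Y_s\big], \qquad \theta_s^\ast := \frac{\lambda(Y_s) + q\rho\xi_s}{\gamma},
\]
and $\theta_s^\ast = \pi^\ast_s\sigma(Y_s)/X_s$ reproduces exactly \eqref{def_pioptimal}. Since the sign of $\mathcal{V}^\pi_s$ equals that of $1-\gamma$, the drift is $\le 0$ for every $\pi\in\MCA_t$ and vanishes identically for $\pi = \pi^\ast$, so $\mathcal{V}^\pi$ is a local supermartingale and $\mathcal{V}^{\pi^\ast}$ a local martingale.

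The hard part is the integrability needed to promote these local statements to genuine (super)martingale statements and to check that $\pi^\ast\in\MCA_t$. When $\gamma<1$, $\mathcal{V}^\pi\ge 0$, so Fatou's lemma already makes it a true supermartingale; for $\mathcal{V}^{\pi^\ast}$ (for both signs of $\gamma-1$) and for $\mathcal{V}^\pi$ with $\gamma>1$, one instead needs uniform integrability on $[t,T]$. This is exactly what the standing assumptions provide: the $L^{2p(1-\gamma)}$-bound and the condition $\EE\big[\int_0^T(X_s^\pi)^{-2\gamma}\pi_s^2\sigma^2(Y_s)\,\ud s\big]<\infty$ in \eqref{assump_strategies} control the $\ud W$-part of the martingale term of $\mathcal{V}^\pi$ (observe that $(\mathcal{V}_s^\pi)^2\theta_s^2$ is comparable to $(X_s^\pi)^{2(1-\gamma)-2}\pi_s^2\sigma^2(Y_s)(N_s^q)^2$), while the exponential moment bound \eqref{assump_xi} -- whose constant $c_\xi$ is calibrated precisely for this purpose -- controls the $\ud W^Y$-part through $\xi$ and, together with boundedness of $\lambda$, yields the moment estimates for $X^{\pi^\ast}_t = X_0\exp\big(\int_0^t(\theta_u^\ast\lambda(Y_u) - \tfrac12(\theta_u^\ast)^2)\,\ud u + \int_0^t\theta_u^\ast\,\ud W_u\big)$, giving in particular $\pi^\ast\in\MCA_t$ (note $X^{\pi^\ast}>0$).

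With these ingredients, optional sampling on $[t,T]$ gives, for the current ($\MCF_t$-measurable) wealth $X_t$ and every $\pi\in\MCA_t$,
\[
\frac{X_t^{1-\gamma}}{1-\gamma}\,N_t^q = \mathcal{V}^\pi_t \;\ge\; \EE\!\left[\mathcal{V}^\pi_T\mid\MCF_t\right] = \EE\!\left[U(X_T^\pi)\mid\MCF_t\right],
\]
with equality throughout when $\pi = \pi^\ast$; taking the essential supremum over $\MCA_t$ yields $V_t = \frac{X_t^{1-\gamma}}{1-\gamma}N_t^q$, which is \eqref{def:Vcorrelated} once $N_t^q$ is written as $[\widetilde\EE(e^{\frac{1-\gamma}{2q\gamma}\int_t^T\lambda^2(Y_s)\,\ud s}\mid\MCG_t)]^q$, and shows that $\pi^\ast$ in \eqref{def_pioptimal} is optimal. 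Finally I would note that, under $\widetilde\PP$, the increments of $\widetilde W^Y$ after time $t$ are independent of $\MCF_t$ (the Brownian component driving $W$ orthogonal to $W^Y$ remaining a $\widetilde\PP$-Brownian motion independent of $W^Y$), so the conditioning on $\MCG_t$ in \eqref{def:Vcorrelated} may equivalently be taken on $\MCF_t$ -- the probabilistic counterpart of the separation of variables in the Markovian case of Section~\ref{sec_distorttrans}.
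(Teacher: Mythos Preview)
Your proposal is correct and follows essentially the same verification argument as the paper: rewrite the candidate value as $\frac{X_t^{1-\gamma}}{1-\gamma}$ times a power of the conditional expectation, apply It\^o's formula, identify the maximizing $\theta^\ast$ (equivalently $\alpha^\ast$), check that the drift vanishes at $\theta^\ast$ precisely for the choices \eqref{def:q} and \eqref{def:at}, and then use the integrability conditions \eqref{assump_strategies} and \eqref{assump_xi} to pass from local (super)martingale to true (super)martingale and to verify $\pi^\ast\in\MCA_t$. The only cosmetic differences are your parametrization by $\theta_s=\pi_s\sigma(Y_s)/X_s^\pi$ instead of $\alpha_s=\pi_s/X_s$, your completing-the-square presentation of the drift in place of the paper's first-order optimality computation, and your use of Fatou's lemma as a shortcut for the supermartingale property when $\gamma<1$ (the paper instead shows directly that the stochastic integral is a true martingale for every admissible $\pi$).
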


\begin{rem}\label{rem_martdistort}
	\quad
	
\begin{enumerate}[(i)]	
	\item
	Note that $\gamma=1$ in \eqref{def_power} is the log utility case, which can be treated separately.
	
\item
For the degenerate case $\lambda(y) \equiv \lambda_0$, the value process $V_t$ is reduced to 
		\begin{equation*}
		V_t  = \frac{X_t^{1-\gamma}}{1-\gamma}e^{\frac{1-\gamma}{2\gamma}\lambda^2_0(T-t)}.
		\end{equation*}
	The quantity $a_t =-\rho\left( \frac{1-\gamma}{\gamma}\right)\lambda_0$ is a constant and a direct computation from \eqref{def_Mtmartdistort} yields $\xi_t = 0$. Consequently, the optimal control $\pi^\ast$ becomes
	\begin{equation*}
 \pi_t^\ast = \frac{\lambda_0}{\gamma\sigma(Y_t)}X_t.
	\end{equation*}
	In this case, both $V_t$ and $\pi^\ast_t$ do not depend on $a_t$ and $q$ as expected.
	
\item
 In the uncorrelated case $\rho=0$, the problem is already ``linear'',  since $q = 1$.  The value process $V_t$ and the optimal control $\pi^\ast$ are simplified as
	\begin{equation*}
	V_t = \frac{X_t^{1-\gamma}}{1-\gamma}\EE\left[\left.e^{\frac{1-\gamma}{2\gamma}\int_t^T \lambda^2(Y_s) \ud s}\right\vert \mc{G}_t\right], \quad \pi_t^\ast = \frac{\lambda(Y_t)}{\gamma\sigma(Y_t)} X_t.
	\end{equation*} 
\end{enumerate}
\end{rem}

\begin{proof}[Proof of Proposition~\ref{thm_martdistort}]
	The proof follows a verification argument, that is, in order to prove that $V_t$ is indeed the value process and $\pi^\ast$ given in \eqref{def_pioptimal} is optimal, one needs to prove (i) for any control $\pi_t \in \MCA_t$, the process \eqref{def:Vcorrelated} is a supermartingale, and (ii) $V_t$ is a martingale under the control \eqref{def_pioptimal} which needs to be admissible.
	
	Let $\alpha_t$ be the proportion of the wealth invested in $S_t$ at time $t$, namely, $\pi_t = \alpha_t X_t$, then the wealth process \eqref{def_XtunderY} can be rewritten as:
	\begin{equation}
	\ud X_t= X_t\left[\alpha_t\mu(Y_t) \ud t+\alpha_t\sigma(Y_t)\ud W_t\right].
	\end{equation}
	In the following proof, we shall first derive the drift part of $\ud V_t$, then obtain
	$\alpha_t^\ast$ by maximizing the drift over $\alpha$, and finally show that the drift part evaluating at  $\alpha_t^\ast$ is zero with the right choice of $a_t$ and $q$.

	Recall the $\widetilde\PP$-martingale $M_t$ defined in \eqref{def_Mtmartdistort}, and rewrite $V_t$ using $M_t$ as
	\begin{equation}\label{eq_Vt}
	V_t=\frac{X_t^{1-\gamma}}{1-\gamma}e^{N_t}M_t^q,
	\end{equation}
	where $N_t=-\frac{1-\gamma}{2\gamma}\int_0^t\lambda^2(Y_s)\ud s$.
	In the following derivation, we use the short notation $\lambda=\lambda(Y_t), \mu=\mu(Y_t), \sigma=\sigma(Y_t)$. By It\^o's formula applied to $V_t$ in \eqref{eq_Vt}, we deduce
	\begin{align*}
	\ud V_t=&\left(X_t^{-\gamma}\ud X_t-\frac{\gamma}{2}X_t^{-\gamma-1}\ud\average{X}_t\right)e^{N_t}M_t^q +\frac{X_t^{1-\gamma}}{1-\gamma}e^{N_t} M_t^q \ud N_t 
	+\frac{X_t^{1-\gamma}}{1-\gamma}e^{N_t}qM_t^{q-1}\ud M_t\\
	&+\frac{1}{2}\frac{X_t^{1-\gamma}}{1-\gamma}e^{N_t}q(q-1)M_t^{q-2}\ud \average{M}_t
	+d\average{\frac{X^{1-\gamma}}{1-\gamma}e^{N},M^q}_t\\
	=&
	\left(X_t^{-\gamma}X_t\alpha_t\mu-\frac{\gamma}{2}X_t^{-\gamma-1}X_t^2\alpha_t^2\sigma^2\right)e^{N_t}M_t^q\ud t +\frac{X_t^{1-\gamma}}{1-\gamma}e^{N_t}M_t^q\left(-\frac{1-\gamma}{2\gamma}\lambda^2\right)\ud t \\
	&+	\frac{X_t^{1-\gamma}}{1-\gamma}e^{N_t}qM_t^{q-1}M_t\xi_ta_t \ud t
	+ \frac{1}{2}\frac{X_t^{1-\gamma}}{1-\gamma}e^{N_t}q(q-1)M_t^{q-2}M_t^2\xi_t^2\ud t + X_t^{-\gamma}e^{N_t}qM_t^{q-1}\rho X_t\alpha_t\sigma M_t\xi_t\ud t \\
	 &+ \frac{X_t^{1-\gamma}}{1-\gamma}e^{N_t}M_t^q \left[ (1-\gamma) \alpha_t\sigma \ud W_t + q\xi_t \ud W_t^Y\right].
	\end{align*}
	and we claim that the last term is a true martingale for any admissible strategy $\pi \in \MCA_t$. This follows from the boundedness of $e^{N_t} M_t^q$ guaranteed by the boundedness of $\lambda(\cdot)$, and square integrability of $X_t^{1-\gamma}\alpha_t\sigma$ and $X_t^{1-\gamma}\xi_t$. More precisely, one has:
	\begin{align*}
	&\EE\left[\int_0^T \left(X_t^\pi\right)^{2-2\gamma}\alpha_t^2 \sigma^2(Y_t) \ud t \right] = \EE\left[\int_0^T \left(X_t^\pi\right)^{-2\gamma}\pi_t^2 \sigma^2(Y_t) \ud t \right] < \infty,\\
	&\EE\left[\int_0^T \left(X_t^\pi\right)^{2-2\gamma}\xi_t^2 \ud t \right] \leq 
	\left[\EE\int_0^T \left(X_t^\pi\right)^{2p(1-\gamma)}\ud t \right]^\frac{1}{p}\left[\EE\int_0^T \xi_t^{2p/(p-1)} \ud t \right]^\frac{p-1}{p} < \infty.
	\end{align*}
	by the admissibility \eqref{assump_strategies} of $\pi$  and Assumption \ref{assump_power}(ii) which implies finite moments of $\xi_t$.
	
	By rewriting $\ud V_t=X_t^{1-\gamma}e^{N_t}M_t^qD_t(\alpha_t)\ud t+\ud\,\mbox{Martingale}$, the drift factor $D_t(\alpha_t)$  takes the form:
	\begin{align*}
	D_t(\alpha_t):=
	\alpha_t\mu-\frac{\gamma}{2}\alpha_t^2\sigma^2-\frac{\lambda^2}{2\gamma}
	+\frac{q}{1-\gamma}a_t\xi_t+ \frac{q(q-1)}{2(1-\gamma)}\xi_t^2 + 
	\rho q\alpha_t\sigma\xi_t.
	\end{align*}
	Differentiating $D_t(\alpha_t)$ with respect to $\alpha$ and checking the second order condition, one obtains the maximizer
	\begin{equation}\label{eq_alphaoptimal}
	\alpha_t^\ast=\frac{\mu}{\gamma\sigma^2}+\frac{\rho q\xi_t}{\gamma\sigma} = \frac{\lambda}{\gamma\sigma} + \frac{\rho q \xi_t}{\gamma\sigma}.
	\end{equation}
	Evaluating the drift factor $D_t$ at $\alpha_t^\ast$ produces
	\begin{align}
	D_t(\alpha_t^\ast)= q\xi_t\left(\frac{a_t }{1-\gamma}+ \frac{\lambda \rho }{\gamma}\right) + \frac{q\xi_t^2}{2}\left[\frac{\rho^2q}{\gamma}+\frac{q-1}{1-\gamma}\right].\label{eq_Dtoptimal}
	\end{align}
	Then, the drift factor $D_t(\alpha_t^\ast)$ vanishes under the choices \eqref{def:q} for $q$ and \eqref{def:at} for $a_t$. Note that the other choice $\xi = 0$ would only lead to the degenerate case $\lambda(\cdot)$ constant considered in Remark \ref{rem_martdistort}(ii). Otherwise,  since $\xi_t$ does not depend on $a_t$, \eqref{def:q} and \eqref{def:at} is the only choice to zero out $D_t(\alpha_t^\ast)$. 	Also note that with the choice \eqref{def:q} for $q$, the term $\xi_t^2$ is canceled which corresponds to the cancellation of the nonlinear term $(\partial_y \Phi)^2$ in the PDE argument reviewed in Section~\ref{sec_distorttrans}.
	
	In addition, using the relation $\pi_t = \alpha_tX_t$ and equation \eqref{eq_alphaoptimal} for $\alpha^\ast_t$, the wealth process following $\pi_t^\ast$ solves the SDE
	\begin{equation*}
	\ud X_t^{\pi^\ast} = X_t^{\pi^\ast}\left[\frac{\lambda^2(Y_t) + \rho q \lambda(Y_t)\xi_t}{\gamma} \ud t+\frac{\lambda(Y_t) + \rho q \xi_t}{\gamma}\ud W_t\right],
	\end{equation*}
	thus, it stays nonnegative, which implies that $\pi^\ast_t = \alpha_t^\ast X_t$ satisfies \eqref{def_MCA}. In order to check the condition \eqref{assump_strategies}, we first notice that
		\begin{align*}
		\EE\left[\int_0^T \left(X_t^{\pi^\ast}\right)^{-2\gamma}(\pi_t^\ast)^2 \sigma^2(Y_t) \ud t \right] &= 	\EE\left[\int_0^T \left(X_t^{\pi^\ast}\right)^{2-2\gamma}\left(\frac{\lambda(Y_t)}{\gamma} + \frac{\rho q \xi_t}{\gamma}\right)^2 \ud t \right].
		\end{align*}
Then, by H\"{o}lder inequality, the boundedness of $\lambda$ and integrability condition of $\xi_t$, it suffices to verify $\sup_{t\in[0,T]}\EE\left[ \left(X_t^{\pi^\ast}\right)^{2p(1-\gamma)}\right] < +\infty$, for some  $p > 1$. To this end, we compute
\begin{align*}
\EE\left[ \left(X_t^{\pi^\ast}\right)^{2p(1-\gamma)}\right]  &= \EE\left[e^{\int_0^t \frac{2p(1-\gamma)}{\gamma}(\lambda^2 + \rho q \lambda\xi_s) - \frac{p(1-\gamma)}{\gamma^2}(\lambda + \rho q \xi_s)^2\ud s + \int_0^t \frac{2p(1-\gamma)}{\gamma}(\lambda + \rho q \xi_s)\ud W_s}\right] \\
&\leq \EE\left[e^{\int_0^t \frac{4p(1-\gamma)}{\gamma}(\lambda^2 + \rho q \lambda\xi_s) \ud s + \int_0^t \left(\frac{ 8p^2(1-\gamma)^2}{\gamma^2}-\frac{2p(1-\gamma)}{\gamma^2}\right)(\lambda + \rho q \xi_s)^2\ud s}\right]\\
& \quad \times
\EE\left[e^{-\int_0^t  \frac{8p^2(1-\gamma)^2}{\gamma^2}(\lambda + \rho q \xi_s)^2\ud s + \int_0^t \frac{4p(1-\gamma)}{\gamma}(\lambda + \rho q \xi_s)\ud W_s}\right].
\end{align*}
By exponential moments of $\xi^2$ in \eqref{assump_xi} under Assumption \ref{assump_power}(ii) and the boundedness of $\lambda$, the first expectation is finite uniformly in $t \in [0,T]$, while the second expectation is one by Novikov's condition. Thus we have obtained the desired results.

\end{proof}

\begin{rem}
Our assumption on $(\xi_t)_{t\in[0,T]}$ is stronger than in other papers (see \cite{Te:04, NaZa:13}), but it allows us to fully justify that $\pi^\ast$ given in \eqref{def_pioptimal} is admissible. In Section \ref{sec_apptofSV}, we will see that this assumption is satisfied for our fractional stochastic environment model.
\end{rem}

Even though the results in the next section are presented only in the single asset case for simplicity, we  state here the  formula
for the multiple asset case. The derivation is a tedious exercise.

\begin{rem}[Generalization to the multi-asset case]\label{rem_geneMDT} \quad

Let $\mathbf{S}_t := [S_t^1, S_t^2, \ldots, S_t^n]$ be n risky assets modeled by
\begin{align*}
\ud S_t^i = \mu^i(Y_t^i)S_t^i \ud t + \sum_{j=1}^n\sigma_{ij}(Y_t^i)S_t^i\ud W_t^j, \quad i = 1, 2, \ldots n.
\end{align*}
Here, each $S_t^i$ is driven by its own stochastic factor $Y_t^i$, but all factors are adapted to the same single Brownian motion $W_t^Y$ with the correlation structure:
\begin{equation*}
\ud \average{W^i, W^j}_t = 0,  \quad \ud \average{W^i, W^Y}_t = \rho \ud t, \quad \forall\, i, j = 1, 2, \ldots, n, \quad n\rho^2<1.
\end{equation*}
Denote by $\pi = \left[\pi^1, \pi^2,\cdots, \pi^n \right]^\dagger \in \MCF_t$ the trading vector such that $\pi^i_t$ represents the amount of money invested into $S_t^i$ at time $t$ ($\dagger$ denotes the matrix transpose). In this multi-asset setup, under self-financing assumption and $r=0$, the wealth process $X_t$ satisfies
\begin{align*}
\ud X_t &= 
 \pi_t \cdot \mu(\mathbf{Y}_t) \ud t + \pi_t \cdot \sigma(\mathbf{Y}_t) \ud \mathbf{W}_t,
\end{align*}
with vector notations $\mathbf{Y}_t := [Y_t^1, Y_t^2, \ldots, Y_t^n]^\dagger$, $\mu(\mathbf{Y}_t) := [\mu^1(Y_t^1), \mu^2(Y_t^2), \cdots, \mu^n(Y_t^n)]^\dagger$, $\sigma(\mathbf{Y}_t) := \left.\sigma_{i,j}(Y_t^i)\right.$ as a square matrix of size n, and $\mathbf{W}_t := [W_t^1, W_t^2, \cdots, W_t^n]^\dagger$.

Assume $\Sigma(\mathbf{Y}_t) = \sigma(\mathbf{Y}_t)\sigma(\mathbf{Y}_t)^\dagger$ is invertible and positive definite, the function $\Lambda(\mathbf{Y}_t) = \sigma(\mathbf{Y}_t)^{-1}\mu(\mathbf{Y}_t)$ is bounded with bounded derivatives, and define the probability measure $\widetilde \PP$ as in \eqref{def_Ptilde} with
$a_t = -\rho\left(\frac{1-\gamma}{\gamma}\right) \mathds{1}_n^\dagger \sigma(\mathbf{Y}_t)^{-1}\mu(\mathbf{Y}_t)$.
Define the $\widetilde \PP$-martingale 
\begin{equation*}
M_t=\widetilde \EE\left(\left.e^{\frac{1-\gamma}{2q\gamma}\int_0^T\mu(\mathbf{Y}_s)^\dagger \Sigma (\mathbf{Y}_s)^{-1}\mu(\mathbf{Y}_s)\ud s}\right\vert \mc{G}_t\right),
\end{equation*}
and assume $\xi_t$ given by the Martingale Representation Theorem  satisfies \eqref{assump_xi}, then the portfolio value $V_t$ can be expressed as
	\begin{equation*}
	V_t=\frac{X_t^{1-\gamma}}{1-\gamma} \left[\widetilde{\EE}\left(\left.e^{\frac{1-\gamma}{2q\gamma}\int_t^T \mu(\mathbf{Y}_s)^\dagger \Sigma (\mathbf{Y}_s)^{-1}\mu(\mathbf{Y}_s)\ud s}\right\vert \mc{G}_t\right)\right]^q, 
	\end{equation*}
	where $\widetilde \EE$ is calculated under $\widetilde \PP$ and 
	$q$ is constant chosen to be:
	\begin{equation*}
	 q=\frac{\gamma}{\gamma+(1-\gamma)\rho^2n}.
	\end{equation*}
	The optimal control $\pi^\ast$ is given by
	\begin{equation*}
	\pi^\ast_t = \left[\frac{\Sigma(\mathbf{Y}_t)^{-1}\mu(\mathbf{Y}_t)}{\gamma} + \frac{\rho q \xi_t\sigma^{-1}(\mathbf{Y}_t)^\dagger\mathds{1}_n }{\gamma}\right] X_t,
	\end{equation*}
with $\mathds{1}_n$ being a n-vector of ones. 
	\end{rem}

\section{Application to Fractional Stochastic Environment }\label{sec_apptofSV}

In this section, we first briefly review the fractional Brownian motion (fBm) and fractional Ornstein-Uhlenbeck (fOU) processes, and then introduce the slowly varying fOU process. Under such a model, we will derive an approximation of the portfolio value $V_t$ based on results in Proposition~\ref{thm_martdistort}. More importantly, note that the optimal trading strategy $\pi^\ast$ given by \eqref{def_pioptimal} is not explicit due to the presence of $\xi_t$ given by the martingale representation theorem, and we will obtain an explicit approximation to this optimal strategy.

\subsection{Fractional Brownian Motion and Fractional Ornstein-Uhlenbeck Processes}\label{sec_fBMfOU}

A fractional Brownian motion is a continuous Gaussian process $(\Wh{t}
)$ with zero mean and the covariance structure:
\begin{equation}
\EE\left[\Wh{t}\Wh{s}\right] = \frac{\sigma_H^2}{2}\left(\abs{t}^{2H} + \abs{s}^{2H} - \abs{t-s}^{2H}\right),
\end{equation}
where $\sigma_H$ is a positive constant and $H \in (0,1)$ is called Hurst index. According to \cite{MaVa:68}, $\Wh{t}$ has the following moving-average stochastic integral representation:
\begin{equation}\label{eq_movingfBM}
\Wh{t} = \frac{1}{\Gamma(H+\half)} \int_{\RR} \left((t-s)_+^{H-\half} - (-s)_+^{H-\half}\right) \ud W_s,
\end{equation}
where $(W_t)_{t\in\RR^+}$ is the usual Brownian motion and $(W_t)_{t\in\RR^-} := \left(B_{-t}\right)_{t\in\RR^-}$ is another Brownian motion independent of $(W_t)_{t\in\RR^+}$. With \eqref{eq_movingfBM}, $\sigma_H^2$ is calculated as $\sigma^2_H = ({\Gamma(2H+1)\sin(\pi H)})^{-1}$.

Now we consider the Langevin equation with fractional Brownian motion
\begin{equation}\label{eq_fOU}
\ud Z_t^H = -aZ_t^H \ud t + \ud \Wh{t},
\end{equation}
with the initial condition $Z_0^H = \eta$. In \cite{ChKaMa:03}, it is proved that
\begin{equation*}
Z_t^{H,\eta} := e^{-at}\left(\eta+ \int_0^t e^{au}\ud \Wh{u}\right)
\end{equation*}
is the unique almost surely continuous process that solves equation \eqref{eq_fOU}, where $\int_0^t e^{au}\ud \Wh{u}$ exists as a path-wise Riemann-Stieltjes integral (by integration by parts) and is almost surely continuous in $t$. Particularly, for $t\in\RR^+$,
\begin{equation}\label{eq_fOUsol}
Z_t^H := \int_{-\infty}^t e^{-a(t-s)} \ud \Wh{s} = \Wh{t} - a\int_{-\infty}^t e^{-a(t-s)}\Wh s \ud s,
\end{equation}
is a stationary solution with initial condition $\eta = Z_0^H$, and every other stationary solution has the same distribution as $Z_t^H$. In the sequel, we shall only consider this stationary solution
and call it the \emph{stationary fractional Ornstein-Uhlenbeck process}. 

It has zero mean and (co)variance structure:
\begin{align}\label{eq_Zhvar}
\sigma_{ou}^2 = \half a^{-2H}\Gamma(2H+1)\sigma_H^2, \quad \EE\left[Z_t^H Z_{t+s}^H\right] =  \sigma^2_{ou}\MCC_Z(s),
\end{align}
where $\MCC_Z(s)$ is given by
\begin{equation}
\MCC_Z(s) =  \frac{2\sin(\pi H)}{\pi} \int_0^\infty \cos(asx)\frac{x^{1-2H}}{1+x^2}\ud x.
\end{equation}
Using the moving-average representation \eqref{eq_movingfBM} for $\Wh{t}$, the stationary solution \eqref{eq_fOUsol} can be expressed as:
\begin{equation}\label{eq_fOUker}
Z_t^H = \int_{-\infty}^t \mc{K}(t-s)\ud W_s^Z,
\end{equation}
where $\left(W_t^Z\right)_{t\in\RR}$ is a standard BM on $\RR$ as described in \eqref{eq_movingfBM}, with the superscript $Z$ indicating that it drives the process $Z_t^H$. 
The kernel $\mc{K}$ is defined by
\begin{equation}\label{def_kernel}
\mc{K}(t) = \frac{1}{\Gamma(H+\half)} \left[t^{H-\half} - a \int_0^t (t-s)^{H-\half}e^{-as}\ud s\right].
\end{equation}
We refer to \cite[Section 2.2]{GaSo:15} for  asymptotic properties of $\MCK(t)$ when $t \ll 1$ and $t \gg 1$, for short-range correlation properties when $H \in (0,\half)$, and for long-range correlation properties when $H \in (\half,1)$. In what follows, we will be mainly interested in the case $H < \half$ as explained in the introduction, but our asymptotic results are also valid for $H > \half$. As noted in \cite[Appendix B]{GaSo:15}, a more general class of Gaussian volatility factors can be considered. But for the sake of simplicity and length ,we restrict ourselves to the case of fOU process.

\subsection{The Slowly Varying fOU Process}
As explained in the introduction, we consider the slowly varying fractional factor denoted by $\Zh{t}$. In the regime of small $\delta$, $\Zh{t}$ is defined as a rescaled stationary fOU process,
\begin{equation}\label{def_Zh}
\Zh{t} = \delta^H \int_{-\infty}^t e^{-\delta a(t-s)} \ud \Wh{s} = \int_{-\infty}^t \MCK^\delta(t-s) \ud W_s^Z, \quad \mc{K}^\delta(t) = \sqrt\delta \mc{K}(\delta t),
\end{equation}
where $\Wh{t}$ is a fBm driven by the Brownian motion $W_t^Z$ via \eqref{eq_movingfBM}, and $\MCK(t)$ is given in \eqref{def_kernel}. According to Section~\ref{sec_fBMfOU}, $\Zh{t}$
is a stationary solution to the SDE
\begin{equation}
\ud \Zh{t} = -\delta a \Zh{t}  \ud t  + \delta^H \ud \Wh{t}.
\end{equation}
It is a zero-mean, stationary Gaussian process with variance $\sigma_{ou}^2$ and covariance $\EE\left[\Zh{t}\Zh{t+s}\right] =\sigma^2_{ou}\MCC_Z(\delta s)$.
The covariance function depends on $\delta s$ only, which indicates that $1/\delta$ is the natural scale of $\Zh{t}$ as desired. More properties and estimates regarding $\Zh{t}$ are stated in Lemma \ref{lem_moments}.

As $\delta$ goes to zero, by dominated convergence theorem and $\MCC_Z(0) = 1$, the covariance becomes
\begin{equation}
\lim_{\delta \to 0}\EE\left[\Zh{t}\Zh{t+s}\right] = \sigma_{ou}^2 \MCC_Z(0) = \sigma^2_{ou},
\end{equation}
and the process $\Zh{t}$ converges in distribution to $\left(\Zh{0}\right)_{t \in \RR} \stackrel{\MCD}{=} \left(\sigma_{ou}Z\right)_{t\in \RR}$ , where $Z$ is a standard normal random variable.

\subsection{First order Approximation to the Value Process}\label{sec_asymppower}
In this section, we study the problem discussed in Section~\ref{sec_power} with $Y_t = \Zh{t}$ and $W_t^Y = W_t^Z$. To be precise, the underlying asset $S_t$ is driven by the slowly varying fractional stochastic factor $\Zh{t}$ defined in \eqref{def_Zh}, 
\begin{equation*}
\ud S_t = \mu(\Zh{t}) S_t \ud t + \sigma(\Zh{t}) S_t \ud W_t.
\end{equation*}
Still, we denote by $X_t^\pi$ the wealth process, and it follows
\begin{equation*}
\ud X_t^\pi =\pi_t \mu(\Zh{t}) \ud t + \pi_t \sigma(\Zh{t})\ud W_t.
\end{equation*}
The value process is denoted by $V_t^\delta$ to indicate its dependence of $\delta$ introduced by the slowly varying process $\Zh{\cdot}$:
\begin{equation*}
V_t^\delta := \esssup_{\pi \in \MCA_t}\EE\left[U(X_T^\pi)\vert \MCF_t\right].
\end{equation*}

Note that, by definition, the process $\Zh{\cdot}$ is neither Markovian nor a semimartingale when $H \neq \half$, therefore the HJB equation is not available. However, it is adapted to $\mc{G}_t$. In order to use Proposition~\ref{thm_martdistort}, we need to check that $\Zh{t}$ satisfies Assumption~\ref{assump_power}(ii).

\begin{lem}
	The slowly varying fractional factor $\Zh{t}$ defined in \eqref{def_Zh} satisfies the Assumption~\ref{assump_power}(ii).
\end{lem}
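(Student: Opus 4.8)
The plan is to verify the two parts of Assumption~\ref{assump_power}(ii) for the process $Y_t = \Zh{t}$, namely that the martingale $M_t$ defined in \eqref{def_Mtmartdistort} admits an integrand $\xi_t$ in its martingale representation \eqref{def_xi}, and that this $\xi_t$ satisfies the exponential integrability \eqref{assump_xi}. Since $\lambda(\cdot)$ is bounded by Assumption~\ref{assump_power}(i), the random variable $e^{\frac{1-\gamma}{2q\gamma}\int_0^T\lambda^2(\Zh{s})\ud s}$ is bounded above and below by positive constants; hence $M_t$ is a bounded positive $\widetilde\PP$-martingale on $[0,T]$ with respect to the filtration $(\MCG_t)$ generated by $W^Z$, and the martingale representation theorem applies, producing a progressively measurable $\xi_t$ with $\ud M_t = M_t \xi_t \ud\widetilde W_t^Z$. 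The crux is therefore to obtain a \emph{deterministic, bounded} estimate on $|\xi_t|$, because then \eqref{assump_xi} holds trivially for any constant $c_\xi$.

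The key step is to derive an explicit formula for $\xi_t$ by a Malliavin-calculus / direct-differentiation argument adapted to the non-Markovian Gaussian structure of $\Zh{}$. Using the Clark-Ocone formula under $\widetilde\PP$, one has $M_t\xi_t = \widetilde\EE[D_t^Z\,\Xi \mid \MCG_t]$ where $\Xi = e^{\frac{1-\gamma}{2q\gamma}\int_0^T\lambda^2(\Zh{s})\ud s}$ and $D_t^Z$ is the Malliavin derivative with respect to $W^Z$. Since $\Zh{s} = \int_{-\infty}^s \MCK^\delta(s-u)\ud W_u^Z$ from \eqref{def_Zh}, we get $D_t^Z \Zh{s} = \MCK^\delta(s-t)\mathbf{1}_{\{t\le s\}}$, and by the chain rule $D_t^Z\Xi = \frac{1-\gamma}{q\gamma}\Xi\int_t^T \lambda(\Zh{s})\lambda'(\Zh{s})\MCK^\delta(s-t)\ud s$. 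Dividing by $M_t$ and using the two-sided bounds on $\Xi$ together with the boundedness of $\lambda$ and $\lambda'$ (Assumption~\ref{assump_power}(i)), we obtain
\begin{equation*}
|\xi_t| \;\le\; C \int_t^T |\MCK^\delta(s-t)|\,\ud s \;=\; C\int_0^{T-t} |\MCK^\delta(r)|\,\ud r \;\le\; C\int_0^{T}|\MCK^\delta(r)|\,\ud r,
\end{equation*}
for a constant $C$ depending only on $\gamma,\rho,\infnorm{\lambda},\infnorm{\lambda'}$. It then remains to check that $\int_0^T |\MCK^\delta(r)|\,\ud r<\infty$; from \eqref{def_Zh} and \eqref{def_kernel}, $\MCK^\delta(r) = \sqrt\delta\,\MCK(\delta r)$ behaves like a constant times $r^{H-1/2}$ near $r=0$ (integrable since $H-1/2>-1$) and decays at infinity, so the integral is finite — in fact uniformly bounded for $\delta$ in a bounded range.

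Concluding: with $\sup_{t\in[0,T]}|\xi_t|\le K$ for a deterministic constant $K$, the expression $\EE\big[e^{c_\xi\int_0^T\xi_t^2\,\ud t}\big]\le e^{c_\xi K^2 T}<\infty$ for the constant $c_\xi$ specified in Assumption~\ref{assump_power}(ii) (in either the $\gamma<1$ or $\gamma>1$ case), which establishes \eqref{assump_xi}. The main obstacle I anticipate is the rigorous justification of the Clark-Ocone/Malliavin step in this non-Markovian, non-semimartingale setting — one must confirm that $\Xi$ lies in the relevant Malliavin-Sobolev space $\mathbb{D}^{1,2}$ under $\widetilde\PP$ (which follows from smoothness and boundedness of $\lambda$ plus the Gaussian regularity of $\Zh{}$, noting the change of measure \eqref{def_Ptilde} has bounded, $\MCG_T$-measurable density) and that differentiation under the conditional expectation is licit; alternatively, if one prefers to avoid Malliavin calculus entirely, the same bound on $\xi_t$ can be extracted by a direct perturbation/Gaussian-integration-by-parts computation on the explicit Gaussian functional $\Xi$, which is the route I would actually write out in detail.
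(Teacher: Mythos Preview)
Your overall strategy --- show $\xi_t$ is deterministically bounded via Clark--Ocone, then \eqref{assump_xi} is trivial --- matches the paper. But there is a genuine gap in the Clark--Ocone step. You write $M_t\xi_t=\widetilde\EE[D_t^Z\Xi\mid\MCG_t]$ with $D_t^Z$ the Malliavin derivative \emph{with respect to $W^Z$}, and then compute $D_t^Z\Zh{s}=\MCK^\delta(s-t)$. This mixes two incompatible conventions: the martingale representation \eqref{def_xi} is with respect to $\widetilde W^Z$ under $\widetilde\PP$, so either (a) you apply Clark--Ocone on the $\widetilde W^Z$-Wiener space, in which case the Malliavin derivative $\widetilde\MCD_t\Zh{s}$ is \emph{not} simply $\MCK^\delta(s-t)$, because when you rewrite $\Zh{s}=\int_{-\infty}^s\MCK^\delta(s-u)\ud\widetilde W_u^Z+\int_0^s\MCK^\delta(s-u)\rho\tfrac{1-\gamma}{\gamma}\lambda(\Zh{u})\ud u$ the drift term depends on $\Zh{}$ itself; or (b) you keep $D_t^Z$ on the $W^Z$-space but then must use the Ocone--Karatzas generalized Clark--Ocone formula under change of measure, which introduces an additional term $M_T\int_t^T D_t^Z a_s\,\ud\widetilde W_s^Z$ that you have omitted. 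Either way the Girsanov drift $a_s=-\rho\tfrac{1-\gamma}{\gamma}\lambda(\Zh{s})$ feeds back into the derivative, and your clean bound $|\xi_t|\le C\int_0^{T}\MCK^\delta(r)\ud r$ does not follow.

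The paper takes route (a): it derives the implicit equation $\widetilde\MCD_t\Zh{s}=\MCK^\delta(s-t)+\int_t^s\MCK^\delta(s-u)\rho\tfrac{1-\gamma}{\gamma}\lambda'(\Zh{u})\widetilde\MCD_t\Zh{u}\ud u$, integrates in $s$, and closes a Gronwall-type inequality $\int_0^T|\widetilde\MCD_t\Zh{s}|\ud s\le A^\delta(T)+cA^\delta(T)\int_0^T|\widetilde\MCD_t\Zh{u}|\ud u$ with $A^\delta(T)=\int_0^T\MCK^\delta(s)\ud s$. This yields a uniform bound only when $cA^\delta(T)<1$, i.e.\ for $\delta$ sufficiently small (since $A^\delta(T)\sim\delta^H$), which is exactly the regime of interest. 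Your argument, if it were correct, would give the bound for all $\delta$, which is a tell that something is missing.
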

\begin{proof}
It suffices to show that $\xi_t$ is bounded uniformly in $t\in [0,T]$ and $\delta$.
To obtain the process $(\xi_t)_{t\in[0,T]}$ in \eqref{def_xi}, we shall use Malliavin calculus. By the Clark-Ocone Formula (see \cite{DiOkPr:09}), we obtain 
$$M_t\xi_t = \widetilde \EE[\widetilde \MCD_t M_T \vert \MCG_t],$$
where $\widetilde \MCD_t$ denotes the Malliavian derivative with respect to the Brownian motion $$\widetilde W_t^Z = W_t^Z - \rho\left(\frac{1-\gamma}{\gamma}\right)\int_0^t \lambda(\Zh{s})\ud s.$$
The term $\widetilde \MCD_t M_T$ is computed as:
\begin{equation*}
\widetilde \MCD_t M_T = e^{\frac{1-\gamma}{2q\gamma}\int_0^T \lambda^2(\Zh{s})\ud s} \int_0^T \frac{1-\gamma}{q\gamma} \lambda(\Zh{s})\lambda'(\Zh{s})\widetilde \MCD_t \Zh{s} \ud s.
\end{equation*}
Since $M_t$, $\lambda$ and $\lambda'$ are bounded, it suffices to show $\int_0^T \abs{\widetilde \MCD_t \Zh{s}} \ud s$ to be uniformly bounded.

To this end, recall $\Zh{s}$ defined in \eqref{def_Zh}:
$$\Zh{s} = \int_{-\infty}^s \mc{K}^\delta(s-u)\ud W_u^Z = \int_{-\infty}^s \mc{K}^\delta(s-u)\ud \widetilde W_u^Z + \int_0^s \MCK^\delta(s-u)\rho\frac{1-\gamma}{\gamma}\lambda(\Zh{u})\ud u.$$
It is adapted to $\MCG_t$, thus $\widetilde \MCD_t \Zh{s} = 0$. for $t\geq s$. For $t < s$, we deduce
\begin{equation*}
\widetilde \MCD_t \Zh{s} = \MCK^\delta(s-t) + \int_t^s \MCK^\delta(s-u)\rho\frac{1-\gamma}{\gamma}\lambda'(\Zh{u})\widetilde \MCD_t \Zh{u} \ud u.
\end{equation*}
Therefore, by defining the positive increasing function $A^\delta(t) = \int_0^t \MCK^\delta(s)\ud s$, one has
\begin{align*}
\int_0^T \abs{\widetilde \MCD_t \Zh{s}} \ud s 
& \leq \int_t^T \MCK^\delta(s-t) \ud s  + \abs{\rho\frac{1-\gamma}{\gamma}}\int_t^T \int_t^s \MCK^\delta(s-u)\abs{\lambda'(\Zh{u})} \abs{\widetilde \MCD_t \Zh{u}} \ud u \ud s \\
& \leq \int_0^{T-t} \MCK^\delta(s)\ud s +  \abs{\rho\frac{1-\gamma}{\gamma}} \infnorm{\lambda'}\int_t^T\int_u^T \MCK^\delta(s-u)  \abs{\widetilde \MCD_t \Zh{u}} \ud s \ud u \\
& \leq A^\delta(T) + \abs{\rho\frac{1-\gamma}{\gamma}} \infnorm{\lambda'} A^\delta(T) \int_t^T  \abs{\widetilde \MCD_t \Zh{u}}  \ud u, 
\end{align*}
and for any $t \in [0,T]$,
\begin{equation*}
\int_0^T  \abs{\widetilde \MCD_t \Zh{s}}  \ud s \leq 
\frac{A^\delta(T)}{1- \abs{\rho\frac{1-\gamma}{\gamma}} \infnorm{\lambda'} A^\delta(T)}
\end{equation*}
provided $1- \abs{\rho\frac{1-\gamma}{\gamma}} \infnorm{\lambda'} A^\delta(T)$ is positive. This holds for sufficient small $\delta$ since $A^\delta(T)$ is of order $\delta^H$ (see Lemma \ref{lem_moments}\eqref{lem_psi}), which completes the proof.

\end{proof}

\begin{theo}\label{thm_Vtpowerexpansion}
Under Assumption \ref{assump_power}, for fixed $t \in [0,T)$, $X_t = x$ and the observed value $\Zh{0}$, $V_t^\delta$ takes the form
\begin{equation}\label{eq_Vtpower}
V_t ^\delta = Q^\delta_t(X_t, \Zh{0}) + \MCO(\delta^{2H}),
\end{equation}
where 
\begin{equation}\label{def_Qdelta}
Q_t^\delta(x,z) = \frac{x^{1-\gamma}}{1-\gamma}e^{\frac{1-\gamma}{2\gamma}\lambda^2(z)(T-t)}\left[1  + \frac{1-\gamma}{\gamma}\lambda(z)\lambda'(z)\left( \phi_t^\delta +  \delta^H\rho \lambda(z)\left(\frac{1-\gamma}{\gamma}\right) \frac{(T-t)^{H+\frac{3}{2}}}{\Gamma(H + \frac{5}{2})}\right)\right].
\end{equation}
Here  $\phi_t^\delta$ is defined by
\begin{equation}\label{def_phi}
\phi_t^\delta = \EE\left[\left.\int_t^T \left(\Zh{s} - \Zh{0}\right) \ud s \right\vert \MCF_t\right] = \EE\left[\left.\int_t^T \left(\Zh{s} - \Zh{0}\right) \ud s \right\vert \MCG_t\right],
\end{equation}
and $\phi_t^\delta$ is of order $\delta^H$ as proved in Lemma~\ref{lem_moments} in the sense that its variance is of order $\delta^{2H}$. Note that $O(\delta^{2H})$ denotes a $\MCF_t$-adapted random variable and it is of order $\delta^{2H}$ in $L^2$.
\end{theo}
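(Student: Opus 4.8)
The plan is to start from the exact martingale distortion representation \eqref{def:Vcorrelated}, namely
\begin{equation*}
V_t^\delta = \frac{X_t^{1-\gamma}}{1-\gamma}\left[\widetilde{\EE}\left(\left.e^{\frac{1-\gamma}{2q\gamma}\int_t^T\lambda^2(\Zh{s})\ud s}\right\vert \mc{G}_t\right)\right]^q,
\end{equation*}
and to expand everything around the frozen value $\Zh{0}$. First I would write $\lambda^2(\Zh{s}) = \lambda^2(\Zh{0}) + 2\lambda(\Zh{0})\lambda'(\Zh{0})(\Zh{s}-\Zh{0}) + \text{(second order)}$ using the $C^2$ assumption on $\lambda$ and the bound on $\lambda'$; by Lemma~\ref{lem_moments} the increments $\Zh{s}-\Zh{0}$ are $\MCO(\delta^H)$ in $L^2$, so the quadratic remainder is $\MCO(\delta^{2H})$. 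Then expand $e^{x} = 1 + x + \MCO(x^2)$ inside the conditional expectation, using the uniform $L^p$ control on $\int_t^T(\Zh{s}-\Zh{0})\ud s$ (again from Lemma~\ref{lem_moments}) to absorb the quadratic terms into the $\MCO(\delta^{2H})$ error, and finally expand $(1+y)^q = 1 + qy + \MCO(y^2)$. The leading term is $\frac{x^{1-\gamma}}{1-\gamma}e^{\frac{1-\gamma}{2\gamma}\lambda^2(\Zh{0})(T-t)}$ and the $\MCO(\delta^H)$ correction is $\frac{x^{1-\gamma}}{1-\gamma}e^{\frac{1-\gamma}{2\gamma}\lambda^2(\Zh{0})(T-t)}\cdot q\cdot\frac{1-\gamma}{2q\gamma}\cdot 2\lambda(\Zh{0})\lambda'(\Zh{0})\,\widetilde{\EE}\!\left[\int_t^T(\Zh{s}-\Zh{0})\ud s\,\middle|\,\mc{G}_t\right]$, which matches the prefactor $\frac{1-\gamma}{\gamma}\lambda(z)\lambda'(z)$ in \eqref{def_Qdelta}.

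The remaining work is to identify the bracketed term $\phi_t^\delta + \delta^H\rho\lambda(z)\frac{1-\gamma}{\gamma}\frac{(T-t)^{H+3/2}}{\Gamma(H+5/2)}$ as (the leading part of) $\widetilde{\EE}\big[\int_t^T(\Zh{s}-\Zh{0})\ud s\mid\mc{G}_t\big]$. The key is the change of measure: under $\PP$ one has $\widetilde W_t^Z = W_t^Z - \rho\frac{1-\gamma}{\gamma}\int_0^t\lambda(\Zh{u})\ud u$, so from the kernel representation $\Zh{s} = \int_{-\infty}^s\MCK^\delta(s-u)\ud W_u^Z$ we get $\Zh{s} = \int_{-\infty}^s\MCK^\delta(s-u)\ud\widetilde W_u^Z + \rho\frac{1-\gamma}{\gamma}\int_0^s\MCK^\delta(s-u)\lambda(\Zh{u})\ud u$. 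The first piece, conditioned on $\mc{G}_t$ and after subtracting $\Zh{0}$, is precisely what defines $\phi_t^\delta$ (the $\widetilde\PP$-drift part integrates to the same $\MCG_t$-conditional expectation as the $\PP$ one up to higher order, or one simply keeps it as $\phi_t^\delta$ defined under whichever measure and absorbs the discrepancy). The second, Girsanov-induced piece contributes, to leading order $\lambda(\Zh{u})\approx\lambda(\Zh{0})$ and $\MCK^\delta(t)\approx\sqrt\delta\,\MCK(\delta t)\approx\sqrt\delta\,\frac{(\delta t)^{H-1/2}}{\Gamma(H+1/2)} = \delta^H\frac{t^{H-1/2}}{\Gamma(H+1/2)}$ (the small-argument asymptotics of $\MCK$ from \cite[Section~2.2]{GaSo:15}), the deterministic double integral
\begin{equation*}
\rho\frac{1-\gamma}{\gamma}\lambda(\Zh{0})\,\delta^H\int_t^T\int_t^s\frac{(s-u)^{H-1/2}}{\Gamma(H+1/2)}\ud u\,\ud s = \rho\frac{1-\gamma}{\gamma}\lambda(\Zh{0})\,\delta^H\frac{(T-t)^{H+3/2}}{\Gamma(H+5/2)},
\end{equation*}
which is exactly the claimed term. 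I would justify the kernel replacement and the $\lambda(\Zh{u})\to\lambda(\Zh{0})$ replacement by Lemma~\ref{lem_moments} estimates, showing the errors are $\MCO(\delta^{2H})$ in $L^2$, and note that the lower limit of the $u$-integral can be taken as $t$ rather than $0$ since the contribution from $[0,t]$ is $\mc{G}_t$-measurable and, combined with the $-\Zh{0}$, is already inside $\phi_t^\delta$.

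Finally I would assemble the pieces: collecting the leading term, the $\phi_t^\delta$ contribution, and the deterministic $\delta^H$ double-integral contribution reproduces $Q_t^\delta(X_t,\Zh{0})$, and all discarded remainders are $\MCF_t$-adapted random variables of order $\delta^{2H}$ in $L^2$, giving \eqref{eq_Vtpower}. I expect the main obstacle to be the bookkeeping of error terms: one must verify that every Taylor remainder (in $\lambda^2$, in the exponential, in $(1+y)^q$) and every approximation (kernel small-time asymptotics, freezing $\lambda$ at $\Zh{0}$, extending/truncating integration limits) produces an error that is genuinely $\MCO(\delta^{2H})$ in $L^2$ uniformly on $[0,T)$ — this requires the moment and kernel estimates of Lemma~\ref{lem_moments} together with the Hölder/exponential-integrability bounds already established, and care that the conditional expectations do not inflate the order. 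The algebra of matching constants ($q$ cancelling against $1/q$, the factor $2$ from differentiating $\lambda^2$, $\Gamma(H+1/2)$ promoting to $\Gamma(H+5/2)$ via two integrations) is routine once the structure is in place.
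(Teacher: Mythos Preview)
Your proposal is correct and follows essentially the same approach as the paper: expand the martingale distortion representation around the frozen value $\Zh{0}$, Taylor-expand $\lambda^2$, the exponential, and $(1+y)^q$, and identify the $\widetilde\PP$-conditional expectation of $\int_t^T(\Zh{s}-\Zh{0})\ud s$ as $\phi_t^\delta$ plus the deterministic Girsanov-drift double integral via the kernel representation. The only organizational difference is that the paper first changes $\widetilde\PP\to\PP$ (expanding the Radon--Nikodym density along with $\lambda^2$) and then introduces an auxiliary measure $\widehat\PP$ with \emph{constant} drift $\rho\frac{1-\gamma}{\gamma}\lambda(\Zh{0})$ to make the cancellation of the stochastic-integral term in $A_{[t,T]}$ explicit, whereas you work directly under $\widetilde\PP$ and recombine the $[0,t]$-portion of the Girsanov drift with the $\widetilde W^Z$-integral to recover $\phi_t^\delta$; both routes yield the same decomposition and the same error control via Lemma~\ref{lem_moments} and Lemma~\ref{lem_RtT}.
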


\begin{proof}

A straightforward application of Proposition~\ref{thm_martdistort} with $Y_t = \Zh{t}$ and $W_t^Y = W_t^Z$ gives the following representation of the value process $V_t^\delta$ 
\begin{equation*}
	V_t^\delta=\frac{X_t^{1-\gamma}}{1-\gamma} \left[\widetilde{\EE}\left(\left.e^{\frac{1-\gamma}{2q\gamma}\int_t^T\lambda^2(\Zh{s})\ud s}\right\vert \mc{G}_t\right)\right]^q.
\end{equation*}
We start by expanding $\Psi_t^\delta:= \widetilde \EE\left[\left.e^{\frac{1-\gamma}{2q\gamma}\int_t^T \lambda^2(\Zh{s})\ud s}\right\vert \MCG_t\right]$, and then apply Taylor formula to the function $x^q$. 

The formula for the conditional expectation under an absolute continuous change of measure, together with the value of $a_t$ given by \eqref{def:at}  and Taylor expansion in $z$ at the point $\Zh{0}$ yields,
\begin{align*}
\Psi_t^\delta& = \EE\left[\left.e^{\frac{1-\gamma}{2q\gamma}\int_t^T \lambda^2(\Zh{s})\ud s}e^{-\int_t^T a_s\ud W_s^Z - \half \int_t^T a_s^2 \ud s}\right\vert \MCG_t\right] \\
& = \EE\left[\left.e^{\frac{1-\gamma}{2q\gamma}\int_t^T \lambda^2(\Zh{s})\ud s}e^{\int_t^T \rho \left(\frac{1-\gamma}{\gamma}\right)\lambda(\Zh{s})\ud W_s^Z - \half \int_t^T \rho^2\left(\frac{1-\gamma}{\gamma}\right)^2\lambda^2(\Zh{s}) \ud s}\right\vert \MCG_t\right] \\
& = e^{\frac{1-\gamma}{2q\gamma}\lambda^2(\Zh{0})(T-t)} \EE\left[\left.e^{\int_t^T \rho\left(\frac{1-\gamma}{\gamma}\right)\lambda(\Zh{0})\ud W_s^Z - \half \int_t^T \rho^2 \left(\frac{1-\gamma}{\gamma}\right)^2\lambda^2(\Zh{0})\ud s + A_{[t,T]} + B_{[t,T]}}\right\vert\MCG_t \right],
\end{align*}
where $A_{[t,T]}$ and $B_{[t,T]}$ are given by
\begin{align*}
A_{[t,T]} =& \frac{1-\gamma}{q\gamma}\lambda(\Zh{0})\lambda'(\Zh{0}) \int_t^T \left(\Zh{s} - \Zh{0}\right)\ud s + \rho\left(\frac{1-\gamma}{\gamma}\right)\lambda'(\Zh{0})\int_t^T\left(\Zh{s}-\Zh{0}\right)\ud W_s^Z
\\
&- \rho^2\left(\frac{1-\gamma}{\gamma}\right)^2\lambda(\Zh{0})\lambda'(\Zh{0})\int_t^T \left(\Zh{s}-\Zh{0}\right) \ud s,\\
B_{[t,T]} = & \frac{1-\gamma}{q\gamma} \int_t^T \left(\lambda\lambda'' + \lambda'^2\right)(\chi_s)\left(\Zh{s} - \Zh{0}\right)^2\ud s + \rho\left(\frac{1-\gamma}{\gamma}\right)\int_t^T\lambda''(\eta_s)\left(\Zh{s}-\Zh{0}\right)^2\ud W_s^Z
\\
&- \rho^2\left(\frac{1-\gamma}{\gamma}\right)^2\int_t^T \left(\lambda\lambda'' + \lambda'^2\right)(\chi_s)\left(\Zh{s}-\Zh{0}\right)^2 \ud s,
\end{align*}
with $\chi_s$ and $\eta_s$ being the Lagrange remainders: $\chi_s$, $\eta_s \in [\Zh{0}\vee\Zh{s}, \Zh{0}\wedge \Zh{s}]$. 

Since $\lambda(\cdot)$ is bounded, one can expand $e^{A_{[t,T]} + B_{[t,T]}}$ and deduce
\begin{align*}
\Psi_t^\delta =& e^{\frac{1-\gamma}{2q\gamma}\lambda^2(\Zh{0})(T-t)} \EE\left[\left.e^{\int_t^T \rho\left(\frac{1-\gamma}{\gamma}\right)\lambda(\Zh{0})\ud W_s^Z - \half \int_t^T \rho^2 \left(\frac{1-\gamma}{\gamma}\right)^2\lambda^2(\Zh{0})\ud s}\left(1 + A_{[t,T]} + R_{[t,T]}\right)\right\vert\MCG_t \right]\\
= &e^{\frac{1-\gamma}{2q\gamma}\lambda^2(\Zh{0})(T-t)} \EE\left[\left.e^{\int_t^T \rho\left(\frac{1-\gamma}{\gamma}\right)\lambda(\Zh{0})\ud W_s^Z - \half \int_t^T \rho^2\left(\frac{1-\gamma}{\gamma}\right)^2\lambda^2(\Zh{0}) \ud s}\left(1 + A_{[t,T]} \right)\right\vert\MCG_t \right] + \MCO(\delta^{2H}),
\end{align*}
where $R_{[t,T]}$ is given by
\begin{equation}\label{def_RtT}
R_{[t,T]} = e^{A_{[t,T]} + B_{[t,T]}} - 1 - A_{[t,T]}.
\end{equation}
In Lemma \ref{lem_RtT} it is proved that $R_{[t,T]} \sim \MCO(\delta^{2H})$. As mentioned before, we denote by $O(\delta^{2H})$ a random variable of order $\delta^{2H}$ in $L^2$ sense.

We introduce a new probability measure $\widehat \PP$, such that under $\widehat \PP$, $ \widehat W_t^Z =  W_t^Z - \rho\left(\frac{1-\gamma}{\gamma}\right)\lambda(\Zh{0})t$ is a standard Brownian motion. Then $\Psi_t^\delta$ can be rewritten as
\begin{align*}
\Psi_t^\delta = &e^{\frac{1-\gamma}{2q\gamma}\lambda^2(\Zh{0})(T-t)} \widehat \EE\left[\left.\left(1 + A_{[t,T]} \right)\right\vert\MCG_t \right] + \MCO(\delta^{2H})\\
= & e^{\frac{1-\gamma}{2q\gamma}\lambda^2(\Zh{0})(T-t)} \widehat\EE\left[\left.
1+ \frac{(1-\gamma)}{q\gamma}\lambda(\Zh{0})\lambda'(\Zh{0})\int_t^T \left(\Zh{s} - \Zh{0}\right)\ud s\right\vert \MCG_t \right]\\
& +e^{\frac{1-\gamma}{2q\gamma}\lambda^2(\Zh{0})(T-t)} \widehat \EE\left[
\left. \rho\left(\frac{1-\gamma}{\gamma}\right)\lambda'(\Zh{0})\int_t^T \left(\Zh{s}-\Zh{0}\right)\ud W_s^Z  \right\vert \MCG_t\right]\\
& -e^{\frac{1-\gamma}{2q\gamma}\lambda^2(\Zh{0})(T-t)} \widehat \EE\left[
\left. \rho^2\left(\frac{1-\gamma}{\gamma}\right)^2\lambda(\Zh{0})\lambda'(\Zh{0})\int_t^T \left(\Zh{s}-\Zh{0}\right)\ud s \right\vert \MCG_t\right] + O(\delta^{2H}),
\end{align*}
and the second term cancels with the third one, since
\begin{align*}
\widehat \EE & \left[
\left. \rho\left(\frac{1-\gamma}{\gamma}\right)\lambda'(\Zh{0})\int_t^T \left(\Zh{s}-\Zh{0}\right)\ud W_s^Z  \right\vert \MCG_t\right] \\
=  &\widehat \EE\left[
\left. \rho\left(\frac{1-\gamma}{\gamma}\right)\lambda'(\Zh{0})\int_t^T \left(\Zh{s}-\Zh{0}\right)\ud \widehat W_s^Z  \right\vert \MCG_t\right] \\
& +  \widehat \EE\left[
\left. \rho\left(\frac{1-\gamma}{\gamma}\right)\lambda'(\Zh{0})\int_t^T \left(\Zh{s}-\Zh{0}\right) \rho\left(\frac{1-\gamma}{\gamma}\right)\lambda(\Zh{0})\ud s  \right\vert \MCG_t\right]\\
 = &  \widehat\EE\left[
\left. \rho^2\left(\frac{1-\gamma}{\gamma}\right)^2\lambda(\Zh{0})\lambda'(\Zh{0})\int_t^T \left(\Zh{s}-\Zh{0}\right)\ud s \right\vert \MCG_t\right].
\end{align*}
Thus, the term $\Psi^\delta_t$ is simplified to
\begin{equation}\label{eq_I}
\Psi^\delta_t = e^{\frac{1-\gamma}{2q\gamma}\lambda^2(\Zh{0})(T-t)}\left(1 + \frac{(1-\gamma)}{q\gamma}\lambda(\Zh{0})\lambda'(\Zh{0}) \Phi^\delta_t\right) + \MCO(\delta^{2H}),
\end{equation}
 with
\begin{align*}
\Phi^\delta_t = \widehat\EE\left[\left.
\int_t^T\left(\Zh{s} - \Zh{0}\right)\ud s\right\vert \MCG_t \right] =  \widehat\EE\left[\left.
\int_t^T\Zh{s}\ud s\right\vert \MCG_t \right] - \Zh{0}(T-t).
\end{align*}
To further simplify $\Phi^\delta_t$, we use the moving average representation \eqref{def_Zh} for $\Zh{s}$ 
and deduce
\begin{align}
\Phi^\delta_t & =  \widehat\EE\left[\left.
\int_t^T\Zh{s}\ud s\right\vert \MCG_t \right] - \Zh{0}(T-t) = \widehat\EE\left[\left.
\int_t^T\int_{-\infty}^{s}\MCK^\delta(s-u) \ud W_u^Z\ud s\right\vert \MCG_t \right] - \Zh{0}(T-t)\nonumber\\
& = \widehat\EE\left[\left.
\int_{-\infty}^t\int_{t}^{T}\MCK^\delta(s-u) \ud s \ud  W_u^Z\right\vert \MCG_t \right] + \widehat\EE\left[\left.
\int_{t}^T\int_{u}^{T}\MCK^\delta(s-u) \ud s \ud  W_u^Z\right\vert \MCG_t \right]  - \Zh{0}(T-t)\nonumber \\
& =\int_{-\infty}^t\int_{t}^{T}\MCK^\delta(s-u) \ud s \ud  W_u^Z - \Zh{0}(T-t) + \widehat\EE\left[\left.
\int_{t}^T\int_{u}^{T}\MCK^\delta(s-u) \ud s \ud  W_u^Z\right\vert \MCG_t \right]   \nonumber\\
& =\phi_t^\delta + \widehat\EE\left[\left.
\int_{t}^T\int_{u}^{T}\MCK^\delta(s-u) \ud s \ud  \widehat W_u^Z\right\vert \MCG_t \right] + \rho\left(\frac{1-\gamma}{\gamma}\right)\lambda(\Zh{0})\int_t^T \int_{u}^T \MCK^\delta(s-u)\ud s \ud u \nonumber\\
& = \phi_t^\delta +  \rho\left(\frac{1-\gamma}{\gamma}\right)\lambda(\Zh{0}) \frac{\delta^H(T-t)^{H+3/2}}{\Gamma(H+\frac{5}{2})} + O(\delta^{H+1}). \label{eq_II}
\end{align}
In the derivation, we have changed the order of $\ud s$ and $\ud W_u^Z$ and use the relation $ \widehat W_t^Z =  W_t^Z - \frac{1-\gamma}{\gamma}\lambda(\Zh{0})\rho t$. The change of order is justified by the stochastic Fubini theorem, for which a sufficient condition is 
\begin{equation*}
	\int_t^T \left(\int_{-\infty}^s \MCK^\delta(s-u)^2 \ud u \right)^{1/2} \ud s < \infty, \quad \MCK^\delta(t) = \sqrt{\delta}\MCK(\delta t).
\end{equation*}	
This follows by $\MCK \in L^2(0, \infty)$. Now combining \eqref{eq_I} and \eqref{eq_II}, we obtain
\begin{align*}
V_t^\delta =& \frac{X_t^{1-\gamma}}{1-\gamma} \left(\Psi^\delta_t\right)^q \\
= &\frac{X_t^{1-\gamma}}{1-\gamma}e^{\frac{1-\gamma}{2\gamma}\lambda^2(\Zh{0})(T-t)}\left\{1 + \frac{1-\gamma}{\gamma}\lambda(\Zh{0})\lambda'(\Zh{0})\Phi^\delta_t\right\} + O(\delta^{2H})\\
=& \frac{X_t^{1-\gamma}}{1-\gamma}e^{\frac{1-\gamma}{2\gamma}\lambda^2(\Zh{0})(T-t)}\left\{1 + \frac{1-\gamma}{\gamma}\lambda(\Zh{0})\lambda'(\Zh{0})\left(\phi_t^\delta +  
 \delta^H\rho \lambda(\Zh{0})\left(\frac{1-\gamma}{\gamma}\right) \frac{(T-t)^{H+\frac{3}{2}}}{\Gamma(H + \frac{5}{2})}\right)\right\}  \\
 &+ O(\delta^{2H}).
\end{align*}

Observe that there are two corrections to the leading term: a random component $\phi_t^\delta$, and a deterministic function of $(t, X_t, \Zh{0})$, both being of order $\delta^H$.
\end{proof}
\begin{rem}[Discussion of the assumptions on $\lambda(\cdot)$]
	
In order to expand $\Psi_t^\delta$, we need a uniform bound (in $\delta$) of $\EE\left[e^{\frac{1-\gamma}{2q\gamma}\int_t^T \lambda^2(\Zh{s}) \ud s }\right]$.	
Notice that if $\gamma>1$, this is automatically satisfied, since the exponential function is bounded by 1. For $0<\gamma<1$, it is also satisfied under the assumption $\lambda(\cdot)$ bounded as stated in Assumption~\ref{assump_power}(i). Moreover, the assumption can be relaxed to have uniform bounds for exponential moments of the function $\lambda^2(\cdot)$.
\end{rem}

\subsection{Optimal Strategy}\label{sec_asymppi}
We now turn to the expansion to the optimal portfolio given in \eqref{def_pioptimal}
\begin{equation*}
\pi^\ast_t = \left[\frac{\lambda(\Zh{t})}{\gamma \sigma(\Zh{t})} + \frac{\rho q \xi_t}{\gamma \sigma(\Zh{t})}\right] X_t,  
\end{equation*}
where the process $\xi_t$ given by the representation theorem \eqref{def_xi} is usually not known explicitly. In this section, we approximate $\xi_t$ using the results derived in Theorem~\ref{thm_Vtpowerexpansion}, and we obtain the following asymptotic result for $\pi^\ast_t$.
\begin{theo}\label{thm_piexpansion}
Under Assumption~\ref{assump_power}, the optimal strategy $\pi_t^\ast$ is approximated by
\begin{align}\label{eq_piapprox}
\pi^\ast_t &=  \left[\frac{\lambda(\Zh{t})}{\gamma \sigma(\Zh{t})} +\delta^H \frac{\rho(1-\gamma)}{\gamma^2 \sigma(\Zh{t})}\frac{(T-t)^{H+1/2}}{\Gamma(H+\frac{3}{2})}\lambda(\Zh{0})\lambda'(\Zh{0}) \right] X_t + \MCO(\delta^{2H}) \\
 &:= \pi_t^{(0)} + \delta^H \pi_t^{(1)} + \MCO(\delta^{2H}). \nonumber
\end{align}
\end{theo}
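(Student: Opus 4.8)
The plan is to obtain the expansion of $\pi_t^\ast$ in \eqref{def_pioptimal} by approximating the only non-explicit ingredient, namely $\xi_t$ from the martingale representation \eqref{def_xi}. The starting point is the Clark--Ocone formula used in the Lemma preceding Theorem~\ref{thm_Vtpowerexpansion}: $M_t\xi_t = \widetilde\EE[\widetilde\MCD_t M_T \vert \MCG_t]$, where
\begin{equation*}
\widetilde\MCD_t M_T = e^{\frac{1-\gamma}{2q\gamma}\int_0^T \lambda^2(\Zh{s})\ud s}\int_0^T \frac{1-\gamma}{q\gamma}\lambda(\Zh{s})\lambda'(\Zh{s})\,\widetilde\MCD_t\Zh{s}\ud s,
\end{equation*}
and $\widetilde\MCD_t\Zh{s} = \MCK^\delta(s-t) + O(\delta^{2H})$ for $t<s$ (dropping the correction term in the fixed-point equation for the Malliavin derivative, which is of order $\delta^H$ relative to the leading $\MCK^\delta(s-t)$, using Lemma~\ref{lem_moments}). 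First I would freeze the coefficients $\lambda,\lambda'$ at $\Zh{0}$ exactly as in the proof of Theorem~\ref{thm_Vtpowerexpansion}, so that up to $O(\delta^{2H})$,
\begin{equation*}
M_t\xi_t \approx \frac{1-\gamma}{q\gamma}\lambda(\Zh{0})\lambda'(\Zh{0})\,\widetilde\EE\!\left[\left.e^{\frac{1-\gamma}{2q\gamma}\int_0^T\lambda^2(\Zh{s})\ud s}\int_t^T \MCK^\delta(s-t)\ud s\,\right\vert\MCG_t\right],
\end{equation*}
and then change measure (absorbing the exponential, which contributes only to higher order after the freezing, into $M_t$) to evaluate the deterministic time integral $\int_t^T\MCK^\delta(s-t)\ud s = A^\delta(T-t)$. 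Using the asymptotics of $A^\delta$ from Lemma~\ref{lem_moments}\eqref{lem_psi}, $A^\delta(T-t) = \delta^H\frac{(T-t)^{H+1/2}}{\Gamma(H+3/2)} + O(\delta^{1+H})$, which gives $\xi_t = \delta^H\frac{1-\gamma}{q\gamma}\lambda(\Zh{0})\lambda'(\Zh{0})\frac{(T-t)^{H+1/2}}{\Gamma(H+3/2)} + O(\delta^{2H})$ since $M_t = 1 + O(\delta^H)$.

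Next I would substitute this expansion for $\xi_t$ into \eqref{def_pioptimal}. The term $\frac{\lambda(\Zh{t})}{\gamma\sigma(\Zh{t})}X_t$ is kept as is (this is $\pi_t^{(0)}$), and the correction term $\frac{\rho q\xi_t}{\gamma\sigma(\Zh{t})}X_t$ becomes, using the expansion of $\xi_t$ and the identity $q\cdot\frac{1-\gamma}{q\gamma} = \frac{1-\gamma}{\gamma}$ which makes the $q$ disappear,
\begin{equation*}
\frac{\rho q}{\gamma\sigma(\Zh{t})}\cdot\delta^H\frac{1-\gamma}{q\gamma}\lambda(\Zh{0})\lambda'(\Zh{0})\frac{(T-t)^{H+1/2}}{\Gamma(H+3/2)}X_t = \delta^H\frac{\rho(1-\gamma)}{\gamma^2\sigma(\Zh{t})}\frac{(T-t)^{H+1/2}}{\Gamma(H+3/2)}\lambda(\Zh{0})\lambda'(\Zh{0})X_t,
\end{equation*}
which matches \eqref{eq_piapprox}. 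One should also remark that $\sigma(\Zh{t}) = \sigma(\Zh{0}) + O(\delta^H)$, so $\sigma$ could equivalently be frozen at $\Zh{0}$ in the $\delta^H$ term without changing the order of the error; I would keep $\sigma(\Zh{t})$ to stay consistent with the statement.

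The main obstacle is controlling the error rigorously in the right topology: the $O(\delta^{2H})$ in \eqref{eq_piapprox} must be interpreted as an error in $\xi_t$ that is $O(\delta^{2H})$ in an appropriate $L^p$ sense (so that the admissibility integrability conditions \eqref{assump_strategies} and the martingale properties used in Proposition~\ref{thm_martdistort} are preserved), and this requires (a) justifying that the neglected fixed-point term in $\widetilde\MCD_t\Zh{s}$ contributes $O(\delta^{2H})$ uniformly in $t$, leaning on the bound $\int_0^T|\widetilde\MCD_t\Zh{s}|\ud s \le A^\delta(T)/(1-|\rho\frac{1-\gamma}{\gamma}|\|\lambda'\|_\infty A^\delta(T))$ already established; (b) controlling the Taylor remainder from freezing $\lambda\lambda'$ at $\Zh{0}$, which is handled exactly as $R_{[t,T]}$ and $B_{[t,T]}$ were in Theorem~\ref{thm_Vtpowerexpansion} via Lemma~\ref{lem_RtT} and the moment estimates of $\Zh{s}-\Zh{0}$ in Lemma~\ref{lem_moments}; and (c) checking that the change of measure $\widetilde\PP\to\widehat\PP$ and the conditioning $\widetilde\EE[\cdot\vert\MCG_t]$ versus $\widehat\EE[\cdot\vert\MCG_t]$ introduce only higher-order discrepancies, again mirroring the argument in the proof of Theorem~\ref{thm_Vtpowerexpansion}. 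Beyond that, the computation of $\int_u^T\MCK^\delta(s-u)\ud s$ and its $\delta^H$-asymptotics is the same Gamma-function computation that already appeared in \eqref{eq_II}, so no new analytic input is needed.
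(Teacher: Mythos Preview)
Your approach via Clark--Ocone is correct and leads to the same expansion of $\xi_t$, but it differs from the paper's route. The paper does not return to the Malliavin derivative; instead it writes $M_t = e^{I_t}\Psi_t^\delta$ with $I_t = \frac{1-\gamma}{2q\gamma}\int_0^t\lambda^2(\Zh{s})\ud s$, applies It\^o's formula, and reads off $\xi_t$ from the $\ud\widetilde W_t^Z$ coefficient using the already-established expansion \eqref{eq_I}--\eqref{eq_II} of $\Psi_t^\delta$ together with $\ud\psi_t^\delta = (\delta^H\theta_{t,T} + \delta^{H+1}\widetilde\theta_{t,T})\ud W_t^Z$ from Lemma~\ref{lem_moments}\eqref{lem_psi}. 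This is more economical because the freezing and remainder control were already carried out in Theorem~\ref{thm_Vtpowerexpansion}; your route re-derives those estimates in a Malliavin setting. Conversely, your approach is somewhat more direct in that it does not require the full expansion of $\Psi_t^\delta$, only the leading behaviour of $\widetilde\MCD_t\Zh{s}$ and of $A^\delta(T-t)$.

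One small inaccuracy: the claim ``$M_t = 1 + O(\delta^H)$'' is not correct (to leading order $M_t \approx e^{\frac{1-\gamma}{2q\gamma}\lambda^2(\Zh{0})T}$, a constant in $t$ but not equal to $1$), and in fact you do not need it. Once the deterministic factor $A^\delta(T-t)$ is pulled outside the conditional expectation, what remains is exactly $M_t$ by definition, so $M_t\xi_t = c\,A^\delta(T-t)\,M_t + O(\delta^{2H})$ and the $M_t$ cancels. The $O(\delta^{2H})$ error survives division by $M_t$ because $M_t$ is bounded above and below (by boundedness of $\lambda$), not because it is close to $1$.
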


Before proving this theorem, we give some important remarks. 
\begin{rem}\label{rem_piapprox}
	\quad 
	
\begin{enumerate}[(i)]
\item 
For the case $H=\half$, $\Zh{t}$ becomes the Markovian OU process, and \eqref{eq_piapprox} coincides with the approximation of feedback form derived in \cite[Section~3.2.2 and 6.3.2]{FoSiZa:13}.

\item 
In the approximation \eqref{eq_piapprox} to $\pi^\ast_t$, the leading order strategy $\pi_t^{(0)}$  follows the process $\Zh{t}$, the first order correction $\pi_t^{(1)}$ is partially frozen at $\Zh{0}$, and the random correction $\phi_t^\delta$ appearing in $V_t$ disappears here. This makes the approximated strategy $\pi_t^{(0)} + \delta^H \pi_t^{(1)}$ easier to implement.

Moreover, under additional  smoothness assumption on $\sigma(\cdot)$, typically $\sigma(\cdot)$ is $C^1$ and $(1/\sigma(\cdot))'$ is bounded, 
then the correction term $ \pi_t^{(1)}$ can be fully frozen at $\Zh{0}$ without changing the order of accuracy, namely,
\begin{equation*}
 \pi_t^{(1)} = \frac{\rho(1-\gamma)}{\gamma^2 \sigma(\Zh{0})}\frac{(T-t)^{H+1/2}}{\Gamma(H+\frac{3}{2})}\lambda(\Zh{0})\lambda'(\Zh{0})X_t + \MCO(\delta^{H}).
\end{equation*}
	
\item 

Denote by $X_t^\pz$ the wealth process following the zeroth order strategy $\pi_t^{(0)} = \frac{\lambda(\Zh{t})}{\gamma\sigma(\Zh{t})}X_t$ 
\begin{equation}\label{rem_Xtpz}
\ud X_t^\pz = \mu(\Zh{t})\pi_t^{(0)} \ud t + \sigma(\Zh{t})\pi_t^{(0)} \ud W_t,
\end{equation}
and $\Vzl_\cdot$ the corresponding value process 
\begin{equation*}
 \Vzl_t:= \EE\left[\left.U\left( X_T^\pz \right)\right\vert \MCF_t\right].
\end{equation*}
In Section~\ref{sec_decompz} Proposition~\ref{prop_Vz}, we derive the expansion to $\Vzl_t$ for general utility function. When applied to the case of power utility \eqref{def_power}, one can deduce that $\Vzl - Q^\delta_t$ is of order $\delta^{2H}$ with $Q_t^\delta$ given in \eqref{def_Qdelta}. Therefore, by Theorem~\ref{thm_Vtpowerexpansion}, $\Vzl_t- \Vz_t$ is of order $\delta^{2H}$, and we conclude that  $\pi_t^{(0)} = \frac{\lambda(\Zh{t})}{\gamma\sigma(\Zh{t})}X_t$ generates the approximated value process given by \eqref{eq_Vtpower}, and is asymptotically optimal within all admissible strategy $\MCA_t$ up to order $\delta^H$.
 
\end{enumerate}

\end{rem}

\begin{proof}[Proof of Theorem \ref{thm_piexpansion}]
It suffices to derive the expansion of $\xi_t$ determined by \eqref{def_xi}. 
In the previous section, we have obtained a rigorous expansion for $\Psi^\delta_t:= \widetilde \EE\left[\left.e^{\frac{1-\gamma}{2q\gamma}\int_t^T \lambda^2(\Zh{s})\ud s}\right\vert \MCG_t\right]$; see \eqref{eq_I} and \eqref{eq_II}. Rewrite $M_t$ defined in \eqref{def_Mtmartdistort} using $\Psi^\delta_t$ as 
\begin{equation*}
M_t = e^{ I_t} \Psi^\delta_t,
\end{equation*}
where $I_t = \frac{1-\gamma}{2q\gamma}\int_0^t\lambda^2(\Zh{s}) \ud s$.
Applying It\^o's formula to $M_t$ yields,
\begin{align*}
\ud M_t =& e^{I_t} \Psi^\delta_t \ud I_t+ e^{I_t} \ud \Psi_t^\delta \\
 = &  \frac{1-\gamma}{2q\gamma}\lambda^2(\Zh{t})M_t \ud t + e^{ I_t}\left(-\frac{1-\gamma}{2q\gamma}\lambda^2(\Zh{0}) \Psi_t^\delta \ud t +  e^{\frac{1-\gamma}{2q\gamma}\lambda^2(\Zh{0})(T-t)} \frac{1-\gamma}{q\gamma}\lambda(\Zh{0})\lambda'(\Zh{0})\ud \Phi_t^\delta\right) \\
&+ \MCO(\delta^{2H})\\
 = & \delta^H e^{I_t} e^{\frac{1-\gamma}{2q\gamma}\lambda^2(\Zh{0})(T-t)} \frac{1-\gamma}{q\gamma}\lambda(\Zh{0})\lambda'(\Zh{0}) \theta_{t,T}\ud \widetilde W_t^Z + \MCO(\delta^{2H}).
\end{align*}
Here in the derivation, we have successively used the relation \eqref{eq_I} and \eqref{eq_II}, $\ud \psi_t^\delta = \ud \phi_t^\delta + (\Zh{t}-\Zh{0})\ud t $, where $\psi_t^\delta$ is given by
\begin{equation}\label{def_psi}
\psi_t^\delta = \EE\left[\left.\int_0^T \Zh{s}-\Zh{0}\ud s \right\vert \MCF_t\right]
\end{equation}
and $\ud \psi_t^\delta = \delta^H\theta_{t,T}\ud W_t^Z + \delta^{H+1}\widetilde \theta_{t,T}\ud W_t^Z$ with $\theta_{t,T}$ and $\widetilde \theta_{t,T}$ specified in Lemma~\ref{lem_moments}.

Noticing that from \eqref{eq_I}, one can deduce
\begin{equation*}
\Psi_t^\delta  = e^{\frac{1-\gamma}{2q\gamma}\lambda^2(\Zh{0})(T-t)} + \MCO(\delta^H),
\end{equation*}
then $\ud M_t$ becomes 
\begin{align*}
\ud M_t &=  \delta^H e^{ I_t}\Psi_t^\delta \frac{1-\gamma}{q\gamma}\lambda(\Zh{0})\lambda'(\Zh{0}) \theta_{t,T}\ud \widetilde W_t^Z + \MCO(\delta^{2H})\\
& = \left[ \delta^H \frac{1-\gamma}{q\gamma}\lambda(\Zh{0})\lambda'(\Zh{0}) \frac{(T-t)^{H+\half}}{\Gamma(H + \frac{3}{2})} \right]M_t \ud \widetilde W_t^Z + \MCO(\delta^{2H}),
\end{align*}
and the approximation of $\xi_t$ is given by
\begin{equation*}
\xi_t = \delta^H \frac{1-\gamma}{q\gamma}\lambda(\Zh{0})\lambda'(\Zh{0}) \frac{(T-t)^{H+\half}}{\Gamma(H + \frac{3}{2})} + \MCO(\delta^{2H}).
\end{equation*}
Plugging the above expression into \eqref{def_pioptimal} yields the desired result \eqref{eq_piapprox}.
\end{proof}

\subsection{Numerical illustration}\label{sec_num}

Next, we illustrate numerically the asymptotic optimality property of $\pi_t^{(0)}$ mentioned in Remark \ref{rem_piapprox}(iii). That is, we compute $\Vz_t$ and $\Vzl_t$  at time $t=0$ using Monte Carlo simulations, and compare their differences. Using equation \eqref{def:Vcorrelated} and changing the measure from $\widetilde \PP$ to $\PP$ give
\begin{align*}
\Vz_0 =  \frac{X_0^{1-\gamma}}{1-\gamma}\left[\EE\left(e^{\left(\frac{1-\gamma}{2\gamma}\right)\int_0^T \lambda^2(\Zh{s})\ud s + \rho\left(\frac{1-\gamma}{\gamma}\right)\int_0^T \lambda(\Zh{s})\ud W_s^Z} \Big\vert \MCG_0\right)\right]^q.
\end{align*} 
Solving the SDE \eqref{rem_Xtpz} for $X_t^\pz$ and plugging the solution into the definition of $\Vzl_t$ yield
\begin{align*}
\Vzl_0 = \frac{X_0^{1-\gamma}}{1-\gamma}\EE\left(e^{\left(\frac{-2\gamma^2 + 3\gamma - 1}{2\gamma^2}\right)\int_0^T \lambda^2(\Zh{s})\ud s + \left(\frac{1-\gamma}{\gamma}\right)\int_0^T \lambda(\Zh{s})\ud W_s} \Big\vert \MCF_0\right).
\end{align*}

The model parameters are chosen as:
\begin{equation*}
T = 1, \quad H = 0.1, \quad a = 1, \quad  \gamma = 0.4, \quad \rho = -0.5, \quad \mu(y) = \frac{0.1 \times \lambda(y)}{0.1 + \lambda(y)}, \quad  \lambda^2(y) = \half \int_{-\infty}^{y/\sigma_{ou}}p(z/2)\ud z,
\end{equation*}
where we recall that $p(z)$ is the $\mc{N}(0,1)$-density.
Note that the choice of $\lambda(y)$ above satisfies the model Assumption \ref{assump_power}.

Due to the natural non-Markovian structure, we first generate a ``historical'' path $W_t^Z$ between $-M$ and $0$, and then evaluate each conditional expectation by the average of 500,000 paths. The slow factor $(\Zh{t})_{t \in [0,T]}$ is generated using Euler scheme with mesh size $\Delta t = 10^{-3}$, and $M = (T/\Delta t)^{0.5}\Delta t$ (due to short-range dependence). 

The numerical results presented in Table \ref{table1} are only for a purpose of illustration as we computed the values for only a few ``omegas" denoted by $\#1, \#2,$, $\#3$, $\#4$ and $\#5$.

\begin{table}[H]
	\centering
	\caption{The value processes $\Vz_0$ \emph{vs.} $\Vzl_0$  for the power utility case.}\label{table1}
	\begin{tabular}{|c|c|c|c|c|c|c|}\hline
		&& \#1 & \#2 & \#3  & \#4 & \#5 \\ \hline\hline 
		&$\Vz_0$ & 1.4645 & 1.4067 & 1.4253 & 1.4212 & 1.4082\\
		$\delta = 1$ & $\Vz_0 - \Vzl_0$& 0.0021 & 0.0021 & 0.0020 & 0.0020 & 0.0020 \\ \hline
		
		&$\Vz_0$& 1.4739 & 1.3995 & 1.4237& 1.4188 & 1.4019 \\
		$\delta = 0.5$ & $\Vz_0 -\Vzl_0$& 0.0022 & 0.0022 & 0.0022 & 0.0022 & 0.0023\\ \hline
		
		&$\Vz_0$ & 1.4814 & 1.3972 & 1.4248 & 1.4195 & 1.4002\\
		$\delta = 0.1$ & $\Vz_0 - \Vzl_0$ & 0.0020 & 0.0022 & 0.0022 & 0.0022 & 0.0022\\ \hline
		
		&$\Vz_0$ & 1.4811& 1.3990 & 1.4260 & 1.4208 & 1.4020\\
		$\delta = 0.05$ & $\Vz_0 -\Vzl_0$ & 0.0019 & 0.0020 & 0.0021 & 0.0020 & 0.0021 \\ \hline
		
		&$\Vz_0$ & 1.4783 & 1.4050 & 1.4291 & 1.4245 & 1.4076\\ 
		$\delta = 0.01$ & $\Vz_0 - \Vzl_0$ & 0.0016 & 0.0018 & 0.0018 & 0.0017 & 0.0018\\ \hline

	\end{tabular}
\end{table}

As expected, the strategy $\pi_t^{(0)}$ performs well for $\delta$ small, as the relative difference $(\Vz_0 - \Vzl_0)/\Vz_0$ is about $0.1\%$. What is more surprising is that it also performs well even for not so small values of $\delta$.

\section{General Utilities and Fractional Stochastic Environment}\label{sec_optimality}
In this section, we study the nonlinear portfolio optimization through asymptotics with general utility $U(x)$, and when the drift $\mu$ and volatility $\sigma$ of the underlying asset $S_t$ are driven by a slowly varying fractional stochastic factor $\Zh{t}$ defined in \eqref{def_Zh}. This is motivated by two recent works: in \cite{FoHu:16}, we developed asymptotic results for the value function following a given strategy in the slowly varying Markovian environment, and proved the optimality of such a strategy up to $o(\delta^H)$; on the other hand, asymptotics of linear pricing problem has been done and implied volatility is provided in \cite{GaSo:15} when the volatility is driven by $\Zh{t}$.


Using the notation $M(t,x;\lambda)$ for the classical Merton value with constant Sharpe-ratio $\lambda$, we denote by $\vz$ the value function at frozen Sharpe-ratio $\lambda(z)$,
\begin{equation}\label{def_vz}
	\vz(t,x,z) = M(t,x,\lambda(z)).
\end{equation}
Then we define the strategy $\pz$ by 
	\begin{equation}\label{def_pz}
	\pz(t,x,z)  = -\frac{\lambda(z)}{\sigma(z)}\frac{\vz_x(t, x, z)}{\vz_{xx}(t, x, z)},
	\end{equation}
 and the associate value process $\Vzl$ is 
\begin{equation}\label{def_Vz}
\Vzl := \EE\left[\left.U\left( X_T^\pz \right)\right\vert \MCF_t\right],
\end{equation}
where $X_t^\pz$ is the wealth process following strategy $\pz$:
\begin{equation}\label{def_Xpz}
\ud X_t^\pz = \mu(\Zh{t})\pz(t,X_t^\pz, \Zh{t})\ud t + \sigma(\Zh{t})\pz(t,X_t^\pz, \Zh{t})\ud W_t.
\end{equation}

We first derive the expansion for $\Vzl$, and then we show that $\pz$ is optimal up to order $\delta^H$ among the strategies of the form 
\begin{equation}\label{def_Atilde}
\widetilde \MCA_t^\delta[\pzt, \pot, \alpha] := \left\{\pi = \pzt + \delta^\alpha \pot: \pi \in \MCA_t^\delta, \alpha >0, 0< \delta \leq 1 \right\},
\end{equation}
where $\MCA_t^\delta$ is the class of admissible controls \eqref{def_MCA} under the slowly varying fractional stochastic environment $\Zh{t}$. Motivated by the feedback form of $\pz$ in the power utility case \eqref{eq_piapprox} and definition of $\pz$ for general utility \eqref{def_pz}, we here restrict $\pzt$ and $\pot$ to be feedback controls. That is, $\pzt$, $\pot$ are functions of $(t,X_t,\Zh{t})$.
As a byproduct, by applying the expansion results for $\Vzl$ to power utility, $\pz$ obtained in Theorem~\ref{thm_piexpansion} is optimal up to order $\delta^{2H}$ within the full class of strategies $\MCA_t^\delta$.



In the next subsection, we first review the classical Merton problem when $\mu$ and $\sigma$ are constants in \eqref{def_StunderY}, which plays a crucial role in deriving the expansion \eqref{eq_Vtexpansion} to $\Vzl$. Then we define some notations for later use.

\subsection{Merton Problem with Constant Coefficients}\label{sec_merton}
This problem has been extensively studied, for example, in \cite{KaSh:98}. Here we summarize the results about the classical Merton value function $M(t,x;\lambda)$. 

 
 Assume that the utility function $U(x)$ is $C^2(0,\infty)$, strictly increasing, strictly concave, and satisfies the Inada and Asymptotic Elasticity conditions:
 \begin{equation*}
 U'(0+) = \infty, \quad U'(\infty) = 0, \quad \text{AE}[U] := \lim_{x\rightarrow \infty} x\frac{U'(x)}{U(x)} <1,
 \end{equation*}
 then, the Merton value function $M(t,x;\lambda)$ is strictly increasing, strictly concave in the wealth variable $x$, and decreasing in the time variable $t$, which is $C^{1,2}([0,T]\times \RR^+)$ and solves the HJB equation
 \begin{equation}\label{eq_value}
 M_t+\sup_{\pi}\left\{\frac{1}{2}\sigma^2\pi^2M_{xx}+\mu\pi M_x\right\}=
 M_t -\frac{1}{2}\lambda^2\frac{M_x^2}{M_{xx}} = 0, \quad M(T,x;\lambda) = U(x),
 \end{equation}
 where $\lambda = \mu/\sigma$ is the constant Sharpe ratio. It is  $C^1$ with respect to $\lambda$, and the optimal strategy is
 \begin{equation}\label{eq_pistar}
 \pi^\ast(t,x;\lambda)=-\frac{\lambda}{\sigma}\frac{M_x(t,x;\lambda)}{M_{xx}(t,x;\lambda)}.
 \end{equation}

 Given the Merton value function $M(t,x;\lambda)$, one can define the risk-tolerance function by
 \begin{equation}\label{def_risktolerance}
 R(t,x;\lambda) = -\frac{M_x(t,x;\lambda)}{ M_{xx}(t,x;\lambda)}.
 \end{equation}
 It is clear that $R(t,x;\lambda)$ is continuous and strictly positive due to the regularity, concavity and monotonicity of $M (t,x;\lambda)$. It is also smooth as a function of $\lambda$, see Remark \ref{rem_U} below.
  For further properties, we refer to \cite{KaZa:14, KaZa:16} and \cite{FoHu:16}. We use the notation from \cite{FoSiZa:13}:
 \begin{align}\label{def_dk}
 D_k &= R(t,x;\lambda)^k \partial_x^k, \qquad k = 1,2, \cdots,\\
 \Ltx(\lambda) &= \partial_t + \frac{1}{2}\lambda^2D_2 + \lambda^2D_1.\label{def_ltx}
 \end{align}
 Note that the coefficients of $\Ltx(\lambda)$ depend on $R(t,x;\lambda)$, and therefore on $M(t,x;\lambda)$. The Merton PDE \eqref{eq_value} can be re-written as
 \begin{align}\label{eq_mertonlinear}
 &\Ltx(\lambda)M(t,x;\lambda) = 0.
 \end{align}
 
 Next, we summarize all assumptions needed in the rest of this section. This will include properties of the utility function $U(x)$,  the state processes $(X_t^\pz, S_t, \Zh{t})$ as well as $\vz(t,x,z)$.
 
 \subsection{Assumptions}\label{sec_assumpU}
  Basically, we work under the same set of assumptions as in \cite{FoHu:16}, and we restate them here for readers' convenience.  Detailed discussion about general utility functions can be found there in Section~2.3.
 \begin{assump}\label{assump_U}
 	Throughout the paper, we make the following assumptions on the utility $U(x)$:
 	\begin{enumerate}[(i)]
 		\item\label{assump_Uregularity}  U(x) is $C^6(0,\infty)$, strictly increasing, strictly concave and satisfying the following conditions (Inada and Asymptotic Elasticity):
 		\begin{equation}\label{eq_usualconditions}
 		U'(0+) = \infty, \quad U'(\infty) = 0, \quad \text{AE}[U] := \lim_{x\rightarrow \infty} x\frac{U'(x)}{U(x)} <1.
 		\end{equation}
 		\item\label{assump_Ubddbelow}U(0+) is finite. Without loss of generality, we assume U(0+) = 0.
 		\item\label{assump_Urisktolerance} Denote by $R(x)$ the risk tolerance, 
 		\begin{equation}\label{eq_risktolerance}
 		R(x) := -\frac{U'(x)}{U''(x)}.
 		\end{equation}
 		Assume that $R(0) = 0$, R(x) is strictly increasing and $R'(x) < \infty$ on $[0,\infty)$, and there exists $K\in\RR^+$, such that for $x \geq 0$, and $ 2\leq i \leq 4$,
 		\begin{equation}\label{assump_Uiii}
 		\abs{\partial_x^{(i)}R^i(x)} \leq K.
 		\end{equation}
 		\item\label{assump_Ugrowth} Define the inverse function of the marginal utility $U'(x)$ as $I: \RR^+ \to \RR^+$, $I(y) = U'^{(-1)}(y)$, and assume that, for some positive $\alpha$, $\kappa$, $I(y)$ satisfies the polynomial growth condition:
 		\begin{equation}\label{cond_I}
 		I(y) \leq \alpha + \kappa y^{-\alpha}, 
 		\end{equation}
 		as well as for positive constants 	$c_n$, $C_n$, $n = 1, 2, 3$, with $c_2 >1$,
 		\begin{equation}\label{cond_Iprime}
 		 		c_1I(x) \leq \abs{xI'(x)} \leq C_1 I(x), \quad c_2 \abs{I'(x)} \leq xI''(x) \leq C_2 \abs{I'(x)} \text{ and } \abs{xI'''(x)} \leq C_3 I''(x),
 		\end{equation}
 	
 	\end{enumerate}
 \end{assump}
 
 \begin{rem}\label{rem_U}
 The item (ii) excludes the case of power utility $U(x) = \frac{x^{1-\gamma}}{1-\gamma}$ when $\gamma > 1$. However, all results in this section still hold for the case $\gamma >1$, with a slight modification in the proofs.
 
 The conditions \eqref{cond_Iprime} which were introduced in \cite{KaZa:16}, are crucial assumptions in their Proposition 4, which will be used in our derivation. They also give a mixture of inverse of the marginal utilities as an example that satisfies this condition.
 
 Under condition \eqref{cond_I}, the risk-tolerance function $R(t,x;\lambda)$ is smooth in the variable $\lambda$. This property will be used in the derivation of Proposition \ref{prop_Vz}.
 	 To prove it, we see from \cite[Proposition 3.3(iii)]{FoHu:16} that the risk-tolerance function $R(t,x;\lambda)$ can be expressed by 
 \begin{equation*}
 R(t,x;\lambda) = H_x(H^{(-1)}(x,t,\lambda),t,\lambda),
 \end{equation*}
 where $H(x,t;\lambda): \RR\times [0,T] \times \RR \rightarrow \RR^+$ is the unique solution to the heat equation
 \begin{equation*}
 H_t + \half \lambda^2 H_{xx} = 0, \quad H(x,T,\lambda) = I(e^{-x}),
 \end{equation*}
 and $H^{(-1)}$ is the inverse function of the variable $x$. Then it follows by the fact that $H(x,t;\lambda)$ is smooth in the parameter $\lambda$. 

 \end{rem}

 Below are the additional assumptions needed on the state processes $(X_t^\pz, S_t, Z_t)$ and on $\vz(t,x,z)$.
 
 \begin{assump}\label{assump_valuefunc} 
 	\quad 
 	\begin{enumerate}[(i)]
 		\item\label{assump_lambda} The function $\lambda(z) = \mu(z)/\sigma(z)$ is  $C^2(\RR)$. Moreover, $\lambda(z)$,  $\lambda'(z)$ and $\lambda''(z)$ are at most polynomially growing.
 		
 		\item \label{assump_vxx}The value function $\vz(t,x,z) = M(t,x;\lambda(z))$ satisfies the relation: 
 		\begin{equation}\label{eq_vzvzxx}
 		\abs{x^2\vz_{xx}(t,x,z)} \leq d(z)\vz(t,x,z),
 		\end{equation}
 		with $d(z)$ being of polynomial growth. Note that this is automatically satisfied by the power utility \eqref{def_power}.
 		
 		\item\label{assump_vz}  The process $\vz(t,X_t^\pz, \Zh{0})$ is in $L^4([0,T]\times \Omega)$ uniformly in $\delta$, i.e.,
 		\begin{equation}\label{eq_vz}
 		\EE\left[\int_0^T \left(\vz(s,X_s^\pz,\Zh{0})\right)^4 \ud s\right] \leq C_1
 		\end{equation}
 		where $C_1$ is independent of $\delta$ and $\Zh{0}$ is given in \eqref{def_Zh} with $t=0$.
 	\end{enumerate}
 \end{assump}
 
 \begin{rem}
 	Notice that condition \eqref{eq_vzvzxx} is actually a hidden assumption on the general utility, and it is automatically satisfied by power utility. In order to guarantee \eqref{eq_vz}, there is a list of assumptions discussed in \cite[Section~2.4]{FoHu:16}.
 \end{rem}

\subsection{The Epsilon-Martingale Decomposition with a Given Strategy $\pz$}\label{sec_decompz}

As introduced in \cite{FoPaSi:00} in the context of linear pricing problem and further developed in \cite{GaSo:15}, the idea of epsilon-martingale decomposition is to find a process which is in the form of a martingale plus something small with the right terminal condition. Specifically, we aim to find $Q^{\pz,\delta}$ such that its terminal condition coincides with the quantity of interest $\Vzl_t$, namely, $Q_T^{\pz, \delta}= \Vzl _T= U(X_T^\pz)$, and that can be decomposed as
\begin{equation} \label{eq_epsmart}
Q_t^{\pz, \delta} = M_t^\delta + R_t^\delta,
\end{equation}
where $M_t^\delta$ is a martingale and $R_t^\delta$ is of order  $\delta^{2H}$. Note that the term of order $\delta^H$ will be absorbed in the martingale $M_t^\delta$.

Suppose we obtain such a decomposition \eqref{eq_epsmart}, and then taking conditional expectation with respect to $\MCF_t$ on both sides of the equation $Q_T^{\pz, \delta} = M_T^\delta + R_T^\delta$ gives
\begin{equation}
\Vzl_t = \EE\left[Q_T^{\pz, \delta} \vert \MCF_t\right] = M_t^\delta + \EE\left[R_T^\delta \vert \MCF_t \right] = Q_t^{\pz, \delta}  + \EE\left[R_T^\delta \vert \MCF_t \right] - R_t^\delta.
\end{equation}
Since $R_t^\delta$ is of order $\delta^{2H}$, $Q_t^{\pz, \delta}$ is the approximation to $\Vzl_t$  up to $\delta^H$. Therefore the above argument leads to the desired approximation result. Now it remains to find $Q_t^{\pz, \delta}$ so that the decomposition holds, and we have the following proposition.

\begin{prop}\label{prop_Vz}
Under Assumption~\ref{assump_U} and \ref{assump_valuefunc}, for fixed $t \in [0,T)$, $X_t^\pz = x$, and the observed value $\Zh{0}$, the $\MCF_t$-measurable value process $\Vzl_t$ defined in \eqref{def_Vz} is of the form
\begin{equation}\label{eq_Vtexpansion}
\Vzl_t = Q_t^{\pz, \delta}(X_t^\pz, \Zh{0}) + \MCO(\delta^{2H}),
\end{equation}
	where $Q_t^{\pz, \delta}(x,z)$ is given by:
	\begin{equation}\label{def_Qt}
	Q_t^{\pz, \delta}(x,z) = \vz(t,x,z) + \lambda(z)\lambda'(z)D_1\vz(t,x,z) \phi_t^\delta + \delta^H \rho\lambda^2(z)\lambda'(z)\vo(t,x,z),
	\end{equation}
	$\vz$ and $D_1$ are defined in \eqref{def_vz} and \eqref{def_dk} respectively, $\left(\phi_t^\delta\right)_{t \in [0,T]}$ is the $\MCF_t$-measurable process of order $\delta^{H}$ given in \eqref{def_phi}
	and $\vo(t,x,z)$ is defined as
	\begin{equation}\label{def_vodt}
	\vo(t,x,z) = D_1^2\vz(t,x,z) D_{t,T},  \quad D_{t,T} = \frac{(T-t)^{H+3/2}}{\Gamma(H + \frac{5}{2})}.
	\end{equation}

\end{prop}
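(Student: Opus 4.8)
The plan is to realise the epsilon-martingale decomposition \eqref{eq_epsmart} with the explicit candidate $Q_t^{\pz,\delta}$ of \eqref{def_Qt}: I would check that $Q_T^{\pz,\delta}=U(X_T^\pz)$ and that the $\MCF_t$-measurable process $Q_t^{\pz,\delta}(X_t^\pz,\Zh{0})$ (which also involves $\phi_t^\delta$) equals a true martingale $M_t^\delta$ plus a remainder $R_t^\delta$ of order $\delta^{2H}$, and then conclude by conditioning on $\MCF_t$ exactly as in the paragraph preceding the statement. Writing $\lambda_0:=\lambda(\Zh{0})$, I split $Q_t^{\pz,\delta}$ into $\vz(t,X_t^\pz,\Zh{0})+F(t,X_t^\pz)\phi_t^\delta+G(t,X_t^\pz)$ with $F(t,x):=\lambda_0\lambda'(\Zh{0})D_1\vz(t,x,\Zh{0})$ and $G(t,x):=\delta^H\rho\lambda_0^2\lambda'(\Zh{0})D_1^2\vz(t,x,\Zh{0})D_{t,T}=\delta^H\rho\lambda_0^2\lambda'(\Zh{0})\vo(t,x,\Zh{0})$ (see \eqref{def_vodt}). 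The terminal condition is immediate since $\phi_T^\delta=0$, $D_{T,T}=0$ and $\vz(T,x,z)=M(T,x;\lambda(z))=U(x)$. The two dynamics fed into It\^o's formula are: the wealth equation, which by the feedback form \eqref{def_pz} (so that $\sigma(z)\pz(t,x,z)=\lambda(z)R(t,x;\lambda(z))$) reads $\ud X_t^\pz=\lambda^2(\Zh{t})R(t,X_t^\pz;\lambda(\Zh{t}))\ud t+\lambda(\Zh{t})R(t,X_t^\pz;\lambda(\Zh{t}))\ud W_t$; and the dynamics of $\phi_t^\delta$, which from the proof of Theorem~\ref{thm_piexpansion} and Lemma~\ref{lem_moments} is $\ud\phi_t^\delta=\big(\delta^H\theta_{t,T}+\delta^{H+1}\widetilde\theta_{t,T}\big)\ud W_t^Z-(\Zh{t}-\Zh{0})\ud t$ with $\theta_{t,T}=(T-t)^{H+1/2}/\Gamma(H+\frac{3}{2})$, so both its martingale coefficient and its drift are of order $\delta^H$ (recall $\Zh{t}-\Zh{0}$ is $\MCO(\delta^H)$ in $L^2$ by Lemma~\ref{lem_moments}).

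The core of the proof is a drift computation showing that, after applying It\^o to the three pieces, the total drift is $\MCO(\delta^{2H})$. For $\vz(t,X_t^\pz,\Zh{0})$, I Taylor-expand $\lambda(\Zh{t})$ and $R(t,X_t^\pz;\lambda(\Zh{t}))$ about $\Zh{0}$ (smoothness of $\lambda(\cdot)$, and of $R$ in $\lambda$ via Remark~\ref{rem_U}): the order-one part of the drift equals $\Ltx(\lambda_0)\vz(\cdot,\cdot,\Zh{0})$, which vanishes by \eqref{eq_mertonlinear}, and the remaining drift --- which is of order $\delta^H$ --- collapses, after using $R(\cdot;\lambda_0)=-\vz_x/\vz_{xx}$ (hence $D_1\vz=-D_2\vz$), to $\lambda_0\lambda'(\Zh{0})(\Zh{t}-\Zh{0})D_1\vz(t,X_t^\pz,\Zh{0})$, up to $\MCO(\delta^{2H})$. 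For $F(t,X_t^\pz)\phi_t^\delta$, the product rule gives three drift contributions: $\phi_t^\delta\times(\text{drift of }F(t,X_t^\pz))$, which is $\MCO(\delta^{2H})$ because $\phi_t^\delta=\MCO(\delta^H)$ while the drift of $F(t,X_t^\pz)$ equals $\Ltx(\lambda_0)F=\lambda_0\lambda'(\Zh{0})\Ltx(\lambda_0)(D_1\vz)=0$ up to $\MCO(\delta^H)$, using the operator identities of \cite{FoSiZa:13} (see also \cite{FoHu:16}) that $\Ltx(\lambda)$ annihilates $D_1\vz$ and $D_1^2\vz$ whenever it annihilates $\vz$; the term $-F(t,X_t^\pz)(\Zh{t}-\Zh{0})$ from the drift of $\phi_t^\delta$, which cancels exactly the order-$\delta^H$ drift isolated from $\vz$; and, using $\ud\langle W,W^Z\rangle_t=\rho\,\ud t$ and $R(\cdot;\lambda_0)\partial_x(D_1\vz)=D_1^2\vz$, the covariation drift $\rho\lambda_0^2\lambda'(\Zh{0})\delta^H\theta_{t,T}D_1^2\vz(t,X_t^\pz,\Zh{0})$ (to leading order). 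Finally, since $G$ is already $\MCO(\delta^H)$, the frozen coefficients of $X^\pz$ may be used in It\^o's formula, and the drift of $G(t,X_t^\pz)$ is $\delta^H\rho\lambda_0^2\lambda'(\Zh{0})\big(D_{t,T}\Ltx(\lambda_0)(D_1^2\vz)+D_1^2\vz\,\partial_t D_{t,T}\big)+\MCO(\delta^{2H})$; using $\Ltx(\lambda_0)(D_1^2\vz)=0$ and $\partial_tD_{t,T}=-\theta_{t,T}$ (as $\Gamma(H+\frac{5}{2})=(H+\frac{3}{2})\Gamma(H+\frac{3}{2})$), this equals $-\delta^H\rho\lambda_0^2\lambda'(\Zh{0})\theta_{t,T}D_1^2\vz+\MCO(\delta^{2H})$, cancelling the covariation drift. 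Summing the three pieces leaves total drift $\MCO(\delta^{2H})$, so $Q_t^{\pz,\delta}=M_t^\delta+R_t^\delta$ with $M_t^\delta$ a local martingale and $R_t^\delta=\int_0^t\MCO(\delta^{2H})\ud s$.

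It then remains to make this bookkeeping rigorous. Each $\MCO(\delta^{2H})$ remainder above is a product of a second-order Lagrange remainder (of the Taylor expansions in $x$ and in $\lambda$) against some $D_k\vz$, times $(\Zh{t}-\Zh{0})^2$ or $(\Zh{t}-\Zh{0})\phi_t^\delta$; I would control these in $L^2$ (and in $L^1$ after integrating in $t$) using the variance and $L^2$ estimates on $\Zh{t}-\Zh{0}$ and $\phi_t^\delta$ from Lemma~\ref{lem_moments}, the at-most-polynomial growth of $\lambda,\lambda',\lambda''$ together with the uniform-in-$\delta$ moments of $\Zh{t}$, the bound $|x^2\vz_{xx}|\le d(z)\vz$ of \eqref{eq_vzvzxx} with the derivative estimates \eqref{assump_Uiii}, and the $L^4$ bound \eqref{eq_vz}; the same ingredients --- exactly as in \cite{FoHu:16}, with the regularity of $R(t,x;\lambda)$ and of $D_k\vz$ drawn from \cite{KaZa:16,FoSiZa:13} --- bound the quadratic variations of the stochastic integrals, so that $M_t^\delta$ is a genuine martingale. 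Conditioning $Q_T^{\pz,\delta}=M_T^\delta+R_T^\delta$ on $\MCF_t$ and subtracting then yields $\Vzl_t=\EE[Q_T^{\pz,\delta}\mid\MCF_t]=Q_t^{\pz,\delta}(X_t^\pz,\Zh{0})+\EE[R_T^\delta\mid\MCF_t]-R_t^\delta=Q_t^{\pz,\delta}(X_t^\pz,\Zh{0})+\MCO(\delta^{2H})$, which is \eqref{eq_Vtexpansion}. I expect the main obstacle to be precisely this last step --- controlling the accumulated remainders and establishing the uniform-in-$\delta$ integrability that upgrades the formal drift cancellation to a genuine $L^2$ statement --- rather than the (long but mechanical) It\^o expansion and the algebraic cancellations themselves.
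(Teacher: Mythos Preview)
Your proposal is correct and follows essentially the same approach as the paper: the epsilon-martingale decomposition applied to the explicit candidate $Q_t^{\pz,\delta}$, with the same three ingredients (freezing $\Zh{t}$ at $\Zh{0}$ and Taylor expanding, the commutation identities $\Ltx(\lambda_0)D_1^k\vz=0$ from \cite{FoSiZa:13}, and the relation $\partial_t D_{t,T}=-\theta_{t,T}$) producing the same cancellations. The paper organizes the computation slightly differently---starting from $\ud\vz$ and successively building up the corrections via $\phi_t^\delta D_1\vz$ and $\vo$ rather than differentiating $Q_t^{\pz,\delta}$ as a sum of three pieces---but the algebra, the martingale terms, the Lagrange remainders, and the integrability arguments (Lemma~\ref{lem_moments}, the estimates \eqref{eq_vzvzxx}, \eqref{eq_vz}, and the risk-tolerance bounds from \cite{FoHu:16,KaZa:16}) are identical in substance.
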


The proof of Proposition~\ref{prop_Vz} will be given after Corollary~\ref{cor_power} and Proposition~\ref{prop_markovian}. As explained in Remark~\ref{rem_piapprox}, we have the following corollary.
\begin{cor}\label{cor_power}
In the case of power utility $U(x) = \frac{x^{1-\gamma}}{1-\gamma}$, with $\gamma >0$, $\gamma \neq 1$,
and under Assumption \ref{assump_power} and \ref{assump_valuefunc}, $\pz$ given by \eqref{eq_piapprox} is asymptotically optimal in the full class of admissible strategies $\MCA_t^\delta$ up to order $\delta^H$.
\end{cor}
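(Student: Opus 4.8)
The plan is to specialize Proposition~\ref{prop_Vz} to the power utility $U(x) = \frac{x^{1-\gamma}}{1-\gamma}$ and check that the resulting approximation $Q_t^{\pz,\delta}$ coincides, up to order $\delta^{2H}$, with the approximation $Q_t^\delta$ of the true value process $V_t^\delta$ obtained in Theorem~\ref{thm_Vtpowerexpansion}. Once this is established, the chain of estimates
\[
\Vzl_t = Q_t^{\pz,\delta}(X_t^\pz,\Zh{0}) + \MCO(\delta^{2H}) = Q_t^\delta(X_t^\pz,\Zh{0}) + \MCO(\delta^{2H}) = V_t^\delta + \MCO(\delta^{2H})
\]
shows that the wealth process generated by the feedback strategy $\pz(t,x,z) = -\frac{\lambda(z)}{\sigma(z)}\frac{\vz_x}{\vz_{xx}}$ --- which in the power case reduces to $\frac{\lambda(z)}{\gamma\sigma(z)}x$, i.e. exactly the zeroth order strategy $\pi^{(0)}_t$ of Theorem~\ref{thm_piexpansion} --- produces a value process differing from the optimal value $V_t^\delta$ by $\MCO(\delta^{2H})$. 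Since the first correction to $V_t^\delta$ is of order $\delta^H$, this is precisely the claimed asymptotic optimality up to order $\delta^H$ within the full class $\MCA_t^\delta$.

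First I would compute the classical Merton ingredients for power utility explicitly: $M(t,x;\lambda) = \frac{x^{1-\gamma}}{1-\gamma}e^{\frac{1-\gamma}{2\gamma}\lambda^2(T-t)}$, the risk-tolerance function $R(t,x;\lambda) = x/\gamma$, so that $D_1 = \frac{1}{\gamma}x\partial_x$ and $D_1^2 = \frac{1}{\gamma^2}x\partial_x(x\partial_x)$. Applying these operators to $\vz(t,x,z) = M(t,x;\lambda(z))$ gives $D_1\vz = \frac{1-\gamma}{\gamma}\vz$ and $D_1^2\vz = \left(\frac{1-\gamma}{\gamma}\right)^2\vz$, because $x\partial_x$ acts on $x^{1-\gamma}$ by multiplication by $(1-\gamma)$. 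Substituting into \eqref{def_Qt} and \eqref{def_vodt} yields
\[
Q_t^{\pz,\delta}(x,z) = \vz(t,x,z)\left[1 + \frac{1-\gamma}{\gamma}\lambda(z)\lambda'(z)\phi_t^\delta + \delta^H\rho\left(\frac{1-\gamma}{\gamma}\right)^2\lambda^2(z)\lambda'(z)\frac{(T-t)^{H+3/2}}{\Gamma(H+\frac52)}\right],
\]
which is term-by-term identical to $Q_t^\delta(x,z)$ in \eqref{def_Qdelta}. The verification that Assumptions~\ref{assump_U} and~\ref{assump_valuefunc} (needed to invoke Proposition~\ref{prop_Vz}) are compatible with power utility is largely routine: Assumption~\ref{assump_valuefunc}(ii) is noted to hold automatically, $R(x) = x/\gamma$ satisfies the risk-tolerance conditions, and the moment bound \eqref{eq_vz} follows from the explicit lognormal dynamics of $X_t^\pz$ under $\pi^{(0)}$ together with the boundedness of $\lambda$; the excess power-utility case $\gamma > 1$ is covered by Remark~\ref{rem_U}.

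The main subtlety --- not a deep obstacle, but the point requiring care --- is the bookkeeping of the error terms: Proposition~\ref{prop_Vz} gives $\Vzl_t = Q_t^{\pz,\delta} + \MCO(\delta^{2H})$ and Theorem~\ref{thm_Vtpowerexpansion} gives $V_t^\delta = Q_t^\delta + \MCO(\delta^{2H})$, both with $\MCO(\cdot)$ understood in the $L^2$ sense for $\MCF_t$-adapted random variables; one must check that the two $Q$'s agree exactly (not merely up to $\delta^{2H}$), which the computation above confirms, so that the difference $\Vzl_t - V_t^\delta$ is genuinely $\MCO(\delta^{2H})$ in $L^2$, dominated by the $\delta^H$ size of the first correction. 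Finally, since $\pz = \pi^{(0)} \in \MCA_t^\delta$ (admissibility of $\pi^{(0)}$ was verified inside the proof of Proposition~\ref{thm_martdistort}) and $V_t^\delta$ is by definition the supremum over $\MCA_t^\delta$, the inequality $\Vzl_t \le V_t^\delta$ together with $V_t^\delta - \Vzl_t = \MCO(\delta^{2H})$ delivers asymptotic optimality up to order $\delta^H$, completing the proof.
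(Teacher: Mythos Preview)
Your proposal is correct and follows essentially the same route as the paper: explicitly compute $\vz$, $D_1\vz$, and $D_1^2\vz$ for power utility, verify that $Q_t^{\pz,\delta}$ from Proposition~\ref{prop_Vz} coincides exactly with $Q_t^\delta$ from Theorem~\ref{thm_Vtpowerexpansion}, and conclude $V_t^\delta - \Vzl_t = \MCO(\delta^{2H})$. Your additional remarks on assumption verification and admissibility of $\pz$ are reasonable elaborations but not part of the paper's (very brief) proof.
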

\begin{proof}
Straightforward computations give, under power utilities, $\vz$, $D_1\vz$ and $\vo$ as
	\begin{align*}
	&\vz(t,x,z) = \frac{x^{1-\gamma}}{1-\gamma}e^{\frac{1-\gamma}{2\gamma}\lambda^2(z)(T-t)}, \quad D_1\vz(t,x,z) =  \frac{x^{1-\gamma}}{\gamma}e^{\frac{1-\gamma}{2\gamma}\lambda^2(z)(T-t)},\\
	& \vo(t,x,z) = \frac{(1-\gamma)}{\gamma^2}x^{1-\gamma}e^{\frac{1-\gamma}{2\gamma}\lambda^2(z)(T-t)} \frac{(T-t)^{H+\frac{3}{2}}}{\Gamma(H + \frac{5}{2})}.
	\end{align*}
Then, one can deduce $Q_t^\delta = Q_t^{\pz,\delta}$, where $Q_t^\delta$ is given by \eqref{def_Qdelta} and $Q_t^{\pz, \delta}$ is given by \eqref{def_Qt}. Combining with Theorem~\ref{thm_Vtpowerexpansion}, $\Vz$ and $\Vzl$ admits the same first order approximation. Therefore, we obtain the desired asymptotic optimality.
\end{proof}

For general utilities, we will derive a similar result in the smaller class $\widetilde \MCA_t^\delta[\pzt, \pot, \alpha]$ in Section~\ref{sec_optimalitypz}.

\begin{prop}\label{prop_markovian}
	For the Markovian case $H= \half$,	the approximation $Q_t^{\pz,\delta}$ given in \eqref{eq_Vtexpansion} coincides with the result derived in \cite[Theorem 3.1]{FoHu:16},
	\begin{equation}
	\Vzl(t,x,z) = \vz(t,x,z) + \frac{\sqrt{\delta}}{2}(T-t)^2\rho\lambda^2(z)\lambda'(z)D_1^2\vz(t,x,z) + \MCO(\delta).
	\end{equation}
\end{prop}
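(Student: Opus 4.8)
The plan is to verify Proposition~\ref{prop_markovian} by simply substituting $H = \half$ into the general expansion formula \eqref{eq_Vtexpansion}--\eqref{def_vodt} and showing it reduces to the stated Markovian result. This is not a fresh proof but a consistency check, so the argument is essentially a computation. First I would set $H = \half$ in the correction term $\delta^H \rho\lambda^2(z)\lambda'(z)\vo(t,x,z)$ from \eqref{def_Qt}. Recalling $\vo(t,x,z) = D_1^2\vz(t,x,z) D_{t,T}$ with $D_{t,T} = \frac{(T-t)^{H+3/2}}{\Gamma(H+\frac{5}{2})}$, at $H=\half$ we get $D_{t,T} = \frac{(T-t)^{2}}{\Gamma(3)} = \frac{(T-t)^2}{2}$, since $\Gamma(3) = 2$. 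Hence the $\delta^H = \sqrt\delta$ correction becomes $\frac{\sqrt\delta}{2}(T-t)^2\rho\lambda^2(z)\lambda'(z)D_1^2\vz(t,x,z)$, which is exactly the first order term in the claimed formula from \cite[Theorem 3.1]{FoHu:16}.

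Next I would address the remaining term $\lambda(z)\lambda'(z)D_1\vz(t,x,z)\phi_t^\delta$ in \eqref{def_Qt}. One must argue that at $H = \half$ this term is absorbed into the $\MCO(\delta)$ remainder, or more precisely that it does not appear as a genuine first-order ($\sqrt\delta$) contribution to the \emph{value function} $\Vzl(t,x,z)$ as a deterministic function of $(t,x,z)$. The point is that $\phi_t^\delta$ is a mean-zero $\MCF_t$-measurable random variable of order $\delta^H$ in $L^2$ (by Lemma~\ref{lem_moments}); when one expresses the value process rather than a deterministic value function, this random component is present, but in the Markovian comparison of \cite{FoHu:16} the statement is about $\Vzl(t,x,z)$ evaluated at deterministic $(t,x,z)$. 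I would note that $\EE[\phi_t^\delta \mid \MCF_t] = \phi_t^\delta$ is itself the conditional expectation, so there is a subtlety: I should instead observe that in the Markovian case $\Zh{\cdot}$ is a genuine Markov process, so conditioning on $\MCF_t$ reduces to conditioning on $\Zh{t}$, and one re-expands around $\Zh{t}$ rather than $\Zh{0}$; under this re-expansion the analog of $\phi_t^\delta$ computed from time $t$ has conditional mean of order $\delta$ (not $\sqrt\delta$), consistent with the $\MCO(\delta)$ error. This matching of conventions between the non-Markovian formulation here and the Markovian formulation in \cite{FoHu:16} is where I expect the only real subtlety to lie.

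Concretely, the proof would be short: state that setting $H=\half$ gives $\Gamma(H+\tfrac52) = \Gamma(3) = 2$ and $(T-t)^{H+3/2} = (T-t)^2$, so $\vo(t,x,z) = \tfrac12(T-t)^2 D_1^2\vz(t,x,z)$; substitute into \eqref{def_Qt} and \eqref{eq_Vtexpansion}; and invoke the fact that when $\Zh{\cdot}$ is Markovian the random correction term $\lambda(z)\lambda'(z)D_1\vz\,\phi_t^\delta$ and the $O(\delta^{H+1})$ terms from \eqref{eq_II} all collapse into $\MCO(\delta)$ once the expansion base point is taken to be the current state, recovering precisely the deterministic value function expansion of \cite[Theorem 3.1]{FoHu:16}. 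The main obstacle, such as it is, is purely bookkeeping: carefully reconciling the ``expand around $\Zh{0}$'' convention used throughout Section~\ref{sec_optimality} with the ``expand around the current level'' convention of \cite{FoHu:16}, and confirming that the Gamma-function and power-of-$(T-t)$ constants line up — which they do, since $\Gamma(5/2 + 1/2) = \Gamma(3) = 2$.
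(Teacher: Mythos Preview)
Your computation of the deterministic correction term is correct and matches the paper: at $H=\tfrac12$ one has $D_{t,T}=(T-t)^2/2$, so the $\delta^H\rho\lambda^2\lambda'\vo$ term becomes $\tfrac{\sqrt\delta}{2}(T-t)^2\rho\lambda^2\lambda' D_1^2\vz$.

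The gap is in your treatment of the random correction $\lambda\lambda' D_1\vz\,\phi_t^\delta$. You propose to ``re-expand around $\Zh{t}$'' and assert that the analog of $\phi_t^\delta$ then has conditional mean of order $\delta$. But the Proposition asks you to show that $Q_t^{\pz,\delta}$ \emph{as defined in \eqref{def_Qt}}, i.e.\ with base point $\Zh{0}$, reduces to the Markovian formula; you cannot simply re-do the construction around a different base point without showing the two coincide. What is missing is the concrete computation that makes this work. In the paper's proof, at $H=\tfrac12$ one computes $\phi_t^\delta$ explicitly from the OU structure:
\[
\phi_t^\delta = \frac{1-e^{-a\delta(T-t)}}{a\delta}\,\Zh{t} - (T-t)\Zh{0} = (T-t)\bigl(\Zh{t}-\Zh{0}\bigr) + \MCO(\delta).
\]
Then one invokes the Vega--Gamma relation $\vz_z(t,x,z) = (T-t)\lambda(z)\lambda'(z)D_1\vz(t,x,z)$, so that
\[
\lambda(\Zh{0})\lambda'(\Zh{0})D_1\vz\cdot\phi_t^\delta = \vz_z(t,X_t^\pz,\Zh{0})\bigl(\Zh{t}-\Zh{0}\bigr) + \MCO(\delta),
\]
which is precisely the first-order Taylor term needed to shift $\vz(t,X_t^\pz,\Zh{0})$ to $\vz(t,X_t^\pz,\Zh{t})$. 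With this identification, all three terms of $Q_t^{\pz,\delta}$ recombine to the expansion evaluated at the current state $\Zh{t}$, matching \cite[Theorem~3.1]{FoHu:16}. Your proposal identifies the right ``bookkeeping'' issue but does not supply either of these two ingredients --- the explicit form of $\phi_t^\delta$ in the OU case and the Vega--Gamma identity --- without which the base-point reconciliation cannot actually be carried out.
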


\begin{proof}	
	First observe that when $H = \half$, 
	\begin{equation*}
	 D_{t,T} = \half(T-t)^2,
	\end{equation*}
	and the third term in $Q_t^{\pz, \delta}$ becomes $\frac{\sqrt{\delta}}{2}(T-t)^2\rho \lambda^2(z)\lambda'(z) D_1^2\vz(t,x,z)$. Using the moving-average representation \eqref{def_Zh} for $\Zh{s}$ with $H = 1/2$, $\phi_t^\delta$ is explicitly computed as
	\begin{equation*}
		\phi_t^\delta =  \frac{1-e^{-a\delta(T-t)}}{a\delta} \Zh{t} - (T-t)\Zh{0}= (T-t)\left(\Zh{t}-\Zh{0}\right) + \MCO(\delta).
	\end{equation*}
Then using the ``Vega-Gamma'' relation $\vz_z(t,x,z) = (T-t)\lambda(z)\lambda'(z)D_1\vz(t,x,z)$ and the fact \\
$\left(\Zh{t} - \Zh{0}\right)^p \sim \MCO(\delta^{pH})$, one can deduce
	\begin{align*}
	\Vzl_t =& \vz(t,X_t^\pz, \Zh{0}) + \vz_z(t,X_t^\pz, \Zh{0})\left(\Zh{t} - \Zh{0}\right) \\
	&+ \frac{\sqrt{\delta}}{2}(T-t)^2\rho \lambda^2(\Zh{0})\lambda'(\Zh{0}) D_1^2\vz(t,X_t^\pz, \Zh{0})\\
	 =& \vz(t,X_t^\pz, \Zh{t}) + \frac{\sqrt{\delta}}{2}(T-t)^2\rho \lambda^2(\Zh{t})\lambda'(\Zh{t}) D_1^2\vz(t,X_t^\pz, \Zh{t}) + \MCO(\delta),
	\end{align*}
which is consistent with the result derived in \cite[Theorem~3.1]{FoHu:16}.
\end{proof}

We now turn to the proof of Proposition~\ref{prop_Vz}.
\begin{proof}[Proof of Proposition~\ref{prop_Vz}]
According to the epsilon-martingale decomposition strategy, our goal is to show that $Q^{\pz, \delta}_t$ can be written as $M_t^\delta + R_t^\delta$, where $M_t^\delta$ is a martingale, and $R_t^\delta$ is of order $\delta^{2H}$. We shall mainly focus on the derivation of $Q_t^{\pz,\delta}$ and delay the proofs of accuracy in the Appendix \ref{app_lemmas} for the sake of clarity and simplicity. The technique is very similar to the one presented in \cite{GaSo:15} in the context of option pricing problem with fractional stochastic volatility. The main difference is that their case involves the linear Black-Scholes operator, as in our case, it involves the non-linear Merton operator $\Ltx(\lambda)$. Amazingly, the properties of risk-tolerance function $R(t,x;\lambda)$ will enable us to carry out the proof as follows.

In order to avoid differentiating the fOU process $\Zh{t}$, we freeze it at $\Zh{0}$, and the corresponding error will be compensated in the following calculation. This technique has also been used in the context of pricing when deriving hedging strategy with frozen volatility in \cite[Section~8.4]{FoPaSiSo:11}.

By It$\hat{\text{o}}$'s formula applied to $\vz$ defined in \eqref{def_vz} and by a Taylor expansion in $z$ at the point $\Zh{0}$, we deduce 
\begin{align}
\ud \vz(t,X_t^\pz,\Zh{0}) &= \Ltx(\lambda(\Zh{t}))\vz(t,X_t^\pz, \Zh{0}) \ud t \nonumber\\
& \hspace{10pt} + \sigma(\Zh{t})\pz(t,X_t^\pz, \Zh{t})\vz_x(t,X_t^\pz, \Zh{0})\ud W_t \nonumber\\
& = \Ltx(\lambda(\Zh{0}))\vz(t,X_t^\pz, \Zh{0}) \ud t \nonumber \\
&\hspace{10pt} + \left[(\Zh{t}-\Zh{0})(\lambda^2R)_z\big\vert_{z = \Zh{0}} + g^{(1)}_t\right]\vz_x(t,X_t^\pz, \Zh{0}) \ud t \nonumber\\
& \hspace{10pt} + \left[(\Zh{t}-\Zh{0})(\lambda^2R^2)_z\big\vert_{z=\Zh{0}} + g_t^{(2)}\right]\half\vz_{xx}(t,X_t^\pz,\Zh{0}) \ud t + \ud M_t^{(1)} \nonumber\\
& = (\Zh{t}-\Zh{0})\lambda(\Zh{0})\lambda'(\Zh{0})D_1\vz(t, X_t^\pz, \Zh{0}) \ud t + \ud M_t^{(1)} \nonumber\\
& \hspace{10pt} + g_t^{(1)}\vz_x(t,X_t^\pz,  \Zh{0}) \ud t + \half g_t^{(2)} \vz_{xx}(t,X_t^\pz, \Zh{0}) \ud t,  \label{eq_g1g2}
\end{align}
where in the derivation, we have used the relation
\begin{equation}\label{eq_relation}
\Ltx(\lambda(z))\vz(t,x,z) = 0, \quad D_1\vz = -D_2\vz, \text{ and } \pz(t,x,z) = \frac{\lambda(z)}{\sigma(z)}R(t,x;\lambda(z)),
\end{equation}
$M_t^{(1)}$ is the martingale defined by
\begin{equation}\label{def_m1}
\ud M_t^{(1)} = \sigma(\Zh{t})\pz(t,X_t^\pz, \Zh{t})\vz_x(t,X_t^\pz, \Zh{0})\ud W_t, 
\end{equation}
and the last two terms in \eqref{eq_g1g2} are of order $\delta^{2H}$ (see Appendix \ref{app_lemmas}), with 
$g_t^{(1)}$ and $g_t^{(2)}$ being the Lagrange remainders:
\begin{align}\label{def_g1}
g_t^{(1)} = \half\left(\Zh{t}-\Zh{0}\right)^2\left(\lambda^2 R\right)_{zz}\Big\vert_{z = \chi_t^{(1)}}, \quad g_t^{(2)} =  \half\left(\Zh{t}-\Zh{0}\right)^2\left(\lambda^2 R^2\right)_{zz}\Big\vert_{z = \chi_t^{(2)}},
\end{align}
and $ \chi_t^{(i)} \in \left[\Zh{0}\wedge\Zh{t}, \Zh{0}\vee\Zh{t}\right]$, $ i = 1, 2$.

Now it remains to find the epsilon-martingale decomposition for the term \\
 $\int(\Zh{s}-\Zh{0})D_1\vz(s, X_s^\pz, \Zh{0}) \ud s$ in \eqref{eq_g1g2}. To this end, we recall $\phi_t^\delta$ and $\psi_t^\delta$ given in \eqref{def_phi} and \eqref{def_psi} respectively, 
which satisfy the relation $\left(\Zh{t}-\Zh{0}\right) \ud t = \ud \psi_t^\delta - \ud \phi_t^\delta$ and consequently
\begin{equation}\label{eq_ztz0}
\left(\Zh{t}-\Zh{0}\right)D_1\vz(t,X_t^\pz, \Zh{0}) \ud t = D_1\vz(t,X_t^\pz, \Zh{0}) \left(\ud \psi_t^\delta - \ud \phi_t^\delta\right).
\end{equation}
On the right-hand side, the first term is proved to be a true martingale in Appendix \ref{app_lemmas}, while the second term need further analysis, namely, the differential of $\phi_t^\delta D_1\vz$ will be computed. In the sequel, 
without any confusion, the arguments of $\vz(t,X_t^\pz, \Zh{0})$ shall be omitted for simplicity.
\begin{align}
\ud \left(\phi_t^\delta D_1\vz\right) = &D_1\vz\ud \phi_t^\delta + \phi_t^\delta \Ltx(\lambda(\Zh{t}))D_1\vz \ud t + \phi_t^\delta \sigma(\Zh{t})\pz(t,X_t^\pz, \Zh{t})\partial_xD_1\vz \ud W_t\nonumber\\ 
&+\sigma(\Zh{t})\pz(t,X_t^\pz,\Zh{t})\partial_xD_1\vz \ud \average{W,\phi^\delta}_t\nonumber \\
 =& D_1\vz\ud \phi_t^\delta  + \rho\lambda(\Zh{0})D_1^2\vz \ud \average{W^Z,\psi^\delta}_t +  \phi_t^\delta \sigma(\Zh{t})\pz(t,X_t^\pz, \Zh{t})\partial_xD_1\vz \ud W_t\nonumber\\
&+  \phi_t^\delta g_t^{(3)} \partial_xD_1\vz \ud t + \phi_t^\delta g_t^{(4)} \half\partial_{xx} D_1\vz \ud t + \rho g_t^{(5)}\partial_xD_1\vz \ud \average{W^Z,\psi^\delta}_t ,\label{eq_g3g4g5}
\end{align}
where in the above derivation, we have used 
\begin{equation}
\Ltx(\lambda(\Zh{0}))D_1\vz = D_1\Ltx(\lambda(\Zh{0}))\vz = 0, \text{ and }
\ud \average{W,\phi^\delta}_t = \rho \ud \average{W^Z, \psi^\delta}_t,
\end{equation}
with the first one being proved in \cite[Lemma~2.5]{FoSiZa:13}.
Again, $g_t^{(3)}$, $g_t^{(4)}$ and $g_t^{(5)}$ are Lagrange remainders from Taylor series
\begin{align*}
&g_t^{(3)} = \left(\Zh{t}-\Zh{0}\right)(\lambda^2R)_z\Big\vert_{z = \chi_t^{(3)}}, \quad  
g_t^{(4)} =  \left(\Zh{t}-\Zh{0}\right)\left(\lambda^2 R^2\right)_{z}\Big\vert_{z = \chi_t^{(4)}},\\ &g_t^{(5)} =  \left(\Zh{t}-\Zh{0}\right)\left(\lambda R\right)_{z}\Big\vert_{z = \chi_t^{(5)}}, 
\end{align*}
with $\chi_t^{(i)} \in \left[\Zh{0}\wedge\Zh{t}, \Zh{0}\vee\Zh{t}\right]$, $i = 3, 4, 5$.

Now combining \eqref{eq_ztz0} and \eqref{eq_g3g4g5} yields:
\begin{align*}
\left(\Zh{t}-\Zh{0}\right)D_1\vz \ud t &= -\ud \left(\phi_t^\delta D_1\vz \right) + \rho\lambda(\Zh{0})D_1^2\vz\ud \average{W^z,\psi^\delta}_t + \ud M_t^{(2)}\\
&\hspace{10pt}+ \phi_t^\delta g_t^{(3)} \partial_xD_1\vz \ud t + \phi_t^\delta g_t^{(4)} \half\partial_{xx} D_1\vz \ud t + \rho g_t^{(5)}\partial_xD_1\vz \ud \average{W^Z,\psi^\delta}_t
\end{align*}
where $M_t^{(2)}$ is the martingale given by
\begin{equation}\label{def_m2}
\ud M_t^{(2)} = D_1\vz(t,X_t^\pz, \Zh{0})\ud \psi_t^\delta + \phi_t^\delta \sigma(\Zh{t})\pz(t,X_t^\pz, \Zh{t})\partial_xD_1\vz(t,X_t^\pz, \Zh{0})  \ud W_t,
\end{equation} 
following a similar proof as for $M_t^{(1)}$.

Let $ \ud \average{W^Z, \psi^\delta}_t := \theta_{t,T}^\delta\ud t$, from Lemma~\ref{lem_moments}\eqref{lem_psi}, one has 
\begin{align*}
 \theta_{t,T}^\delta= \int_{0}^{T-t} \MCK^\delta(s)\ud s = \delta^H \theta_{t,T} + \delta^{H+1}\widetilde\theta_{t,T},
\end{align*}
and a straightforward computation gives
\begin{equation}
 \partial_t D_{t,T} = - \theta_{t,T},
\end{equation}
where $D_{t,T}$ is defined in \eqref{def_vodt}.
Then applying It\^{o}'s formula to $\vo$ defined in \eqref{def_vodt} brings
\begin{align}
\ud \vo(t,X_t^\pz, \Zh{0}) &= \Ltx(\lambda(\Zh{t}))\vo \ud t + \sigma(\Zh{t})\pz(t,X_t^\pz, \Zh{t}) \vo_x \ud W_t \nonumber\\
& = \Ltx(\lambda(\Zh{0}))\vo \ud t + \sigma(\Zh{t})\pz(t,X_t^\pz, \Zh{t}) \vo_x \ud W_t \nonumber\\
&\hspace{10pt}+ g_t^{(3)} \vo_x \ud t + g_t^{(4)} \half\vo_{xx} \ud t \nonumber\\
& = -D_1^2\vz \theta_{t,T} \ud t +  \ud M_t^{(3)} + g_t^{(3)} \vo_x \ud t + g_t^{(4)} \half\vo_{xx} \ud t,\label{eq_vo}
\end{align}
with the last two terms of order  $\MCO(\delta^{H})$, and $M_t^{(3)}$ as the martingale:
\begin{equation}\label{def_m3}
\ud M_t^{(3)} = \sigma(\Zh{t})\pz(t,X_t^\pz, \Zh{t}) \vo_x(t,X_t^\pz, \Zh{0}) \ud W_t.
\end{equation}

Collecting equation \eqref{eq_g1g2}, \eqref{eq_g3g4g5} and \eqref{eq_vo}, we obtain
\begin{align*}
\ud Q^{\pz,\delta}_t(X_t^\pz, \Zh{0}) &= \ud \left( \vz(t,X_t^\pz,\Zh{0}) + \lambda(\Zh{0})\lambda'(\Zh{0})\phi_t^\delta D_1\vz(t,X_t^\pz, \Zh{0})\right.\\
&\hspace{25pt} \left. + \delta^H \rho \lambda^2(\Zh{0})\lambda'(\Zh{0}) \vo(t,X_t^\pz, \Zh{0}) \right) \\
&= \ud M_t^\delta +  \ud R^\delta_t,
\end{align*}
where $\ud M^\delta_t$  and $\ud R^\delta_t$ are
\begin{align}
&\ud M^\delta_t = \ud M_t^{(1)} + \lambda(\Zh{0})\lambda'(\Zh{0})\ud M_t^{(2)} + \delta^H \rho \lambda^2(\Zh{0})\lambda'(\Zh{0})\ud M_t^{(3)}, \label{def_Mt}\\
&\ud R^\delta_t = g_t^{(1)}\vz_x\ud t + \half g_t^{(2)} \vz_{xx} \ud t +
\delta^H \rho\lambda^2(\Zh{0})\lambda'(\Zh{0})\left[ g_t^{(3)} \vo_x \ud t + g_t^{(4)} \half\vo_{xx} \ud t\right]\label{def_Rt} \\
 & +\lambda(\Zh{0})\lambda'(\Zh{0})\left[\phi_t^\delta g_t^{(3)} \partial_xD_1\vz \ud t + \phi_t^\delta g_t^{(4)} \half\partial_{xx} D_1\vz \ud t + \rho g_t^{(5)}\partial_xD_1\vz \left(\delta^H \theta_{t,T} + \delta^{H+1}\widetilde \theta_{t,T}\right) \ud t \right]. \nonumber
\end{align}

Noticing that $\vz(T,X_T^\pz, \Zh{0}) = U(X_T^\pz)$, $\phi_T^\delta D_1\vz(T,X_T^\pz, \Zh{0}) = 0 $ since $\phi_T^\delta = 0$, and $\vo(T,X_T^\pz, \Zh{0}) =0$ by definition, the terminal condition for $Q^{\pz, \delta}$ indeed coincides with $\Vzl_T$. Combining with the proof that $M_t^\delta$ is a true martingale and $R_t^\delta$ is of order $\delta^{2H}$ detailed in Appendix~\ref{app_lemmas}, we obtain the desired result in Proposition~\ref{prop_Vz}.
\end{proof}

\subsection{Asymptotic Optimality of $\pz$}\label{sec_optimalitypz}

Recall the specific family of admissible strategies $\widetilde{\MCA}_t^\delta$ defined in \eqref{def_Atilde}:
\begin{equation}
\widetilde \MCA_t^\delta[\pzt, \pot, \alpha] := \left\{\pi = \pzt + \delta^\alpha \pot: \pi \in \MCA_t^\delta, \alpha >0, 0< \delta \leq 1 \right\},
\end{equation}
with $\pzt$ and $\pot$ being feedback controls, and $\MCA_t^\delta$ being the set of all admissible strategies defined in \eqref{def_MCA}.
In this subsection, we first derive the approximation of $\Vzp_t$
\begin{equation}\label{def_Vzpi}
\Vzp_t := \EE\left[\left.U\left( X_T^\pi \right)\right\vert \MCF_t\right], 
\end{equation}
for any admissible strategy $\pi \in \widetilde \MCA_t^\delta$ taking the form $\pzt + \delta^\alpha \pot$ using epsilon-martingale decomposition technique as demonstrated in Proposition~\ref{prop_Vz},
where $X_t^\pi$ is the wealth process following the trading strategy $\pi$:
\begin{equation}\label{def_Xtilde}
\ud X_t^\pi = \mu(\Zh{t})\pi(t,X_t^\pi, \Zh{t}) \ud t + \sigma(\Zh{t})\pi(t,X_t^\pi, \Zh{t}) \ud W_t.
\end{equation}
Then, given the previously established results of $\Vzl_t$ in Proposition~\ref{prop_Vz}, we asymptotically compare these approximations for $\Vzl_t$ and $\Vzp_t$, and then prove Theorem~\ref{thm_main}.

\begin{theo}\label{thm_main}
	Under Assumptions~\ref{assump_U}, \ref{assump_valuefunc}, \ref{assump_piregularity} and \ref{assump_optimality}, for any family of trading strategies $\widetilde \MCA_t^\delta[\pzt, \pot, \alpha]$, the following limit exists in $L^2$ and satisfies
	\begin{equation}
	\ell := \lim_{\delta \to 0}\frac{\Vzp_t - \Vzl_t}{\delta^H} \leq 0, \text{ in } L^2,
	\end{equation}
	where $\Vzl_t$ and $\Vzp_t$ are defined in \eqref{def_Vz} and \eqref{def_Vzpi} respectively.
	
	
	That is, the strategy $\pz$ that generates $\Vzl_t$ performs asymptotically better up to order $\delta^H$ than any family $\widetilde \MCA_t^\delta[\pzt, \pot, \alpha]$. Moreover, the inequality can be written according to the following four cases:
	\begin{enumerate}[(i)]
		\item $\pzt = \pz$, $\alpha > H/2$: $\ell = 0$ and $\Vzp_t = \Vzl_t + o(\delta^H)$; 
		\item $\pzt = \pz$, $\alpha = H/2$: $-\infty < \ell < 0$ and $\Vzp_t= \Vzl_t + \MCO(\delta^H)$ with $\MCO(\delta^H) <0$;
		\item $\pzt = \pz$, $\alpha < H/2$: $\ell = -\infty$ and $\Vzp_t = \Vzl_t + \MCO(\delta^{2\alpha})$ with $\MCO(\delta^{2\alpha}) <0$;
		\item $\pzt \neq \pz$: $\lim_{\delta \to 0} \Vzp_t < \lim_{\delta \to 0} \Vzl_t$,
	\end{enumerate}
	where all relations between $\Vzp_t$ and $\Vzl_t$ hold under $L^2$ sense.
\end{theo}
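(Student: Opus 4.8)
\medskip
\noindent\textbf{Proof plan for Theorem~\ref{thm_main}.}
The plan is to split the four cases according to whether the zeroth--order feedback $\pzt$ in $\pi=\pzt+\delta^\alpha\pot$ equals $\pz$ or not. When $\pzt=\pz$ (cases (i)--(iii)) the control $\pi$ is a genuinely small perturbation of $\pz$, and one runs the epsilon--martingale decomposition of Proposition~\ref{prop_Vz} almost verbatim, the only new ingredient being a single manifestly nonpositive ``suboptimality'' term; the three subcases then follow by comparing the orders $\delta^{2\alpha}$, $\delta^{\alpha+H}$, $\delta^{2H}$ against $\delta^{H}$. When $\pzt\neq\pz$ (case (iv)) no fine expansion is needed: I would pass to the limit $\delta\to0$, in which the environment freezes at $\Zh{0}$ and $\pi$ reduces to the feedback control $\pzt(\cdot,\cdot,\Zh{0})$, which differs from the unique Merton--optimal strategy of the resulting constant--coefficient problem.

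\medskip
For $\pzt=\pz$: freezing the factor at $\Zh{0}$ as in Proposition~\ref{prop_Vz}, It\^o's formula applied to $\vz(t,X_t^\pi,\Zh{0})$ produces the drift $\vz_t+\mu(\Zh{t})\pi_t\vz_x+\half\sigma^2(\Zh{t})\pi_t^2\vz_{xx}$. Setting
\begin{equation*}
\bar\pi_t:=-\frac{\mu(\Zh{t})}{\sigma^2(\Zh{t})}\frac{\vz_x(t,X_t^\pi,\Zh{0})}{\vz_{xx}(t,X_t^\pi,\Zh{0})}=\frac{\lambda(\Zh{t})}{\sigma(\Zh{t})}R(t,X_t^\pi;\lambda(\Zh{0})),
\end{equation*}
completing the square and using the Merton identities $\Ltx(\lambda(\Zh{0}))\vz=0$ and $D_1\vz=R\vz_x=-D_2\vz$ (so that $\vz_t=-\half\lambda^2(\Zh{0})D_1\vz$), this drift equals
\begin{equation*}
\half\big(\lambda^2(\Zh{t})-\lambda^2(\Zh{0})\big)D_1\vz(t,X_t^\pi,\Zh{0})+\half\sigma^2(\Zh{t})\vz_{xx}(t,X_t^\pi,\Zh{0})\,(\pi_t-\bar\pi_t)^2 .
\end{equation*}
A Taylor expansion in $z$ at $\Zh{0}$ turns the first summand into $\lambda(\Zh{0})\lambda'(\Zh{0})(\Zh{t}-\Zh{0})D_1\vz(t,X_t^\pi,\Zh{0})$ plus an $\MCO(\delta^{2H})$ drift; because $\pi=\pz+\delta^\alpha\pot$ and $X^\pi-X^\pz=\MCO(\delta^\alpha)$ (a Gr\"onwall estimate using the feedback regularity of $\pz$ and $\pot$), this summand is treated by $\phi_t^\delta$, $\psi_t^\delta$ and the function $\vo$ exactly as in Proposition~\ref{prop_Vz} and Appendix~\ref{app_lemmas} (all the new perturbative and operator--mismatch errors occur multiplied by $\phi_t^\delta$, by $\delta^H$, or as second--order Lagrange remainders, hence are of order $\delta^{\alpha+H}$ or smaller). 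The genuinely new term $\half\sigma^2(\Zh{t})\vz_{xx}(\pi_t-\bar\pi_t)^2\le0$ (recall $\vz_{xx}<0$ by strict concavity) I would absorb into the approximation $Q_t^{\pi,\delta}$ via its conditional--expectation compensator, so that $Q_t^{\pi,\delta}=Q_t^{\pz,\delta}(X_t^\pi,\Zh{0})+\EE[\int_t^\cdot\half\sigma^2\vz_{xx}(\pi-\bar\pi)^2\ud s\mid\MCF_\cdot]$--type terms and $Q_T^{\pi,\delta}=U(X_T^\pi)$. Since $\pz(t,x,\Zh{t})-\bar\pi_t=\frac{\lambda(\Zh{t})}{\sigma(\Zh{t})}\big(R(t,x;\lambda(\Zh{t}))-R(t,x;\lambda(\Zh{0}))\big)=\MCO(\delta^H)$ by smoothness of $R$ in $\lambda$ (Remark~\ref{rem_U}), one has $\pi_t-\bar\pi_t=\delta^\alpha\pot+\MCO(\delta^H)$ and thus $(\pi_t-\bar\pi_t)^2=\delta^{2\alpha}\pot^2+\MCO(\delta^{\alpha+H})+\MCO(\delta^{2H})$. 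Taking conditional expectations, subtracting the expansion of $\Vzl_t$ from Proposition~\ref{prop_Vz}, and noting that at equal current wealth $X_t^\pi=X_t^\pz=x$ the three $Q^{\pz,\delta}$--terms cancel ($\phi_t^\delta$ being a functional of the factor path only), one reaches
\begin{equation}\label{eq_proofplan_key}
\Vzp_t-\Vzl_t=\delta^{2\alpha}\,\EE\!\left[\left.\int_t^T\half\sigma^2(\Zh{s})\vz_{xx}(s,X_s^\pi,\Zh{0})\,\pot(s,X_s^\pi,\Zh{s})^2\,\ud s\,\right|\MCF_t\right]+\MCO(\delta^{\alpha+H})+\MCO(\delta^{2H}),
\end{equation}
all $\MCO$'s understood in $L^2$.

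\medskip
Cases (i)--(iii) are read off from \eqref{eq_proofplan_key}. By Assumption~\ref{assump_optimality} the conditional expectation in \eqref{eq_proofplan_key} converges in $L^2$, as $\delta\to0$, to a limit which is $\le0$ (since $\sigma^2\vz_{xx}<0$) and, when $\pot\not\equiv0$, finite and strictly negative. Dividing by $\delta^H$: for $\alpha>H/2$ all three terms are $o(\delta^H)$, whence $\ell=0$ and $\Vzp_t=\Vzl_t+o(\delta^H)$; for $\alpha=H/2$ the first term has exact order $\delta^H$ while $\delta^{\alpha+H}=\delta^{3H/2}=o(\delta^H)$, whence $-\infty<\ell<0$ and $\Vzp_t=\Vzl_t+\MCO(\delta^H)$ with a negative $\MCO(\delta^H)$; for $\alpha<H/2$ the first term dominates (using $\alpha+H>2\alpha$ and $2H>2\alpha$), whence $\ell=-\infty$ and $\Vzp_t=\Vzl_t+\MCO(\delta^{2\alpha})$ with a negative $\MCO(\delta^{2\alpha})$. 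For case (iv), let $\delta\to0$: $(\Zh{s})_{s\in[0,T]}\to\Zh{0}$ and $\delta^\alpha\pot\to0$, so by Assumptions~\ref{assump_piregularity}--\ref{assump_optimality} the wealth $X^\pi$ converges to the solution $X^0$ of $\ud X^0_s=\mu(\Zh{0})\pzt(s,X^0_s,\Zh{0})\ud s+\sigma(\Zh{0})\pzt(s,X^0_s,\Zh{0})\ud W_s$ and $\Vzp_t\to\EE[U(X^0_T)\mid\MCF_t]$ in $L^2$; but $\EE[U(X^0_T)\mid X^0_t=x,\Zh{0}]$ is the expected utility of the constant--coefficient Merton problem with Sharpe ratio $\lambda(\Zh{0})$ under the feedback control $\pzt(\cdot,\cdot,\Zh{0})$, strictly below $\vz(t,x,\Zh{0})=M(t,x;\lambda(\Zh{0}))$ because $\pz(\cdot,\cdot,\Zh{0})=\pi^\ast(\cdot,\cdot;\lambda(\Zh{0}))$ of \eqref{eq_pistar} is its unique maximizer (strict concavity of the Hamiltonian, $M_{xx}<0$), whereas $\Vzl_t\to\vz(t,x,\Zh{0})$ by Proposition~\ref{prop_Vz}. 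Hence $\lim_{\delta\to0}\Vzp_t<\lim_{\delta\to0}\Vzl_t$ in $L^2$, and $\ell\le0$ in all four cases.

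\medskip
The main obstacle is not the algebra above but the uniform--in--$\delta$ $L^2$ control of all the remainder terms: the Taylor/Lagrange remainders created by freezing the factor and by the new square $(\pi_t-\bar\pi_t)^2$, the verification that the It\^o integrals appearing in the decomposition are true (not merely local) martingales, and the Gr\"onwall/Burkholder--Davis--Gundy estimates giving $X^\pi-X^\pz=\MCO(\delta^\alpha)$ and $X^\pi\to X^0$ --- all for a \emph{general} admissible $\pi=\pzt+\delta^\alpha\pot$, for which, unlike in Proposition~\ref{prop_Vz}, the underlying wealth process is not the one generated by $\pz$. This is precisely what Assumptions~\ref{assump_piregularity} and \ref{assump_optimality} are there to supply, the latter also ensuring the $L^2$--convergence of the conditional expectation in \eqref{eq_proofplan_key}, hence the finiteness of $\ell$ in case (ii) and the strictness of the inequality in case (iv).
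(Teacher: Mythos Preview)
For cases (i)--(iii) your approach is essentially the paper's. Your completing-the-square identity
\[
\vz_t+\mu(\Zh{t})\pi_t\vz_x+\tfrac12\sigma^2(\Zh{t})\pi_t^2\vz_{xx}
=\tfrac12\bigl(\lambda^2(\Zh{t})-\lambda^2(\Zh{0})\bigr)D_1\vz+\tfrac12\sigma^2(\Zh{t})\vz_{xx}\,(\pi_t-\bar\pi_t)^2
\]
is just a repackaging of the paper's direct expansion: when the paper expands $\pi=\pz+\delta^\alpha\pot$, the cross term $\delta^\alpha\bigl[\mu\pot\vz_x+\sigma^2\pz\pot\vz_{xx}\bigr]$ vanishes at $z=\Zh{0}$ by first-order optimality of $\pz$, so only its Lagrange remainder (the paper's $\widetilde g_t^{(1)},\widetilde g_t^{(2)}$, of order $\delta^{\alpha+H}$) survives, and the pure $\delta^{2\alpha}(\pot)^2$ term is exactly your leading piece of $(\pi_t-\bar\pi_t)^2$. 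The paper then reruns the $\phi^\delta/\psi^\delta/\vo$ machinery of Proposition~\ref{prop_Vz} along $X^\pi$ and obtains $\Vzp_t=Q_t^{\pz,\delta}(X_t^\pi,\Zh{0})+\delta^{2\alpha}\EE[N_T^\delta-N_t^\delta\mid\MCF_t]+\MCO(\delta^{H+H\wedge\alpha})$, which is your \eqref{eq_proofplan_key}. One simplification: you do not need the Gr\"onwall estimate $X^\pi-X^{\pz}=\MCO(\delta^\alpha)$. The paper never compares the two wealth processes on $[t,T]$; it just runs the decomposition with $X^\pi$ throughout and compares with $\Vzl_t$ only at the common starting point $X_t^\pi=X_t^{\pz}=x$, where the $Q^{\pz,\delta}$-terms cancel as you note.

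For case (iv) your route differs from the paper's and is where the proposal is weakest. You argue by sending $\delta\to0$, which requires $X^\pi\to X^0$ in a mode strong enough to pass $\EE[U(X_T^\pi)\mid\MCF_t]$ to the limit. Assumptions~\ref{assump_piregularity} and \ref{assump_optimality} do not supply this: item~(ii) of Assumption~\ref{assump_optimality} is a list of uniform $L^2$ bounds on specific drift integrands, tailored to make $\widehat M^\delta$ a true martingale and $\widehat R^\delta=\MCO(\delta^{H\wedge\alpha})$ in an epsilon-martingale decomposition, not to give SDE stability or uniform integrability of $U(X_T^\pi)$. The paper's route is both simpler and aligned with the stated hypotheses: apply It\^o only to $\vz(t,X_t^\pi,\Zh{0})$ (no $\phi^\delta$ or $\vo$ correction is needed at this order) and isolate the \emph{zeroth-order} strictly negative drift
\[
\ud\widehat N_t^\delta=\tfrac12\sigma^2(\Zh{0})\bigl(\pzt-\pz\bigr)^2(t,X_t^\pi,\Zh{0})\,\vz_{xx}(t,X_t^\pi,\Zh{0})\,\ud t,
\]
yielding $\Vzp_t=\vz(t,X_t^\pi,\Zh{0})+\EE[\widehat N_T^\delta-\widehat N_t^\delta\mid\MCF_t]+\MCO(\delta^{H\wedge\alpha})$, hence strictly below $\vz(t,X_t^\pi,\Zh{0})$ in the limit. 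This avoids any limiting argument on the wealth process.
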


%
%
%

\begin{assump}\label{assump_piregularity}
	For a fixed choice of $(\pzt$, $\pot$, $\alpha>0)$, we require:
	\begin{enumerate}[(i)] 
		\item The whole family (in $\delta$) of strategies $\{\pzt + \delta^\alpha \pot\}$ is contained in $\MCA^\delta_t$;
		\item The function $\mu(z)$ is $C^1(\RR)$.
		\item  Functions $\pzt(t,x,z)$ and $\pot(t,x,z)$ are continuous on $[0,T]\times \RR^+\times \RR$, and $C^1$ in $z$.
		 \item The process $\vz(t,X_t^\pi, \Zh{0})$ is in $L^4([0,T]\times \Omega)$ uniformly in $\delta$, i.e.,
		   \begin{equation}
		   \EE\left[\int_0^T \left(\vz(s,X_s^\pi,\Zh{0})\right)^4 \ud s\right] \leq C_2
		   \end{equation}
		   where $C_2$ is independent of $\delta$, $\Zh{0}$ follows \eqref{def_Zh} with $t=0$, and $X_t^\pi$ follows \eqref{def_Xtilde} with $\pi = \pzt + \delta^\alpha \pot$.
%
%
\end{enumerate}
\end{assump}
\begin{rem}
	We have $\pzt + \delta^0 \pot = \pzt + \pot + \delta^\alpha \cdot 0$, so it is enough to consider $\alpha >0$.
\end{rem}

\begin{rem}
To demonstrate the non-restrictiveness of Assumption \eqref{assump_optimality}, we give the following example in the case of power utility. We comment that such a choice of utility functions is for the sake of convenience, while Theorem \ref{thm_main} works in general. 

For case (i), if we choose the admissible strategy $\pi = \pzt + \delta^\alpha\pot$ with $\pzt = \pot = \pz$ (the admissibility can be shown similarly as in Theorem \ref{thm_martdistort}), then we deduce that all quantities that are required to be uniformly bounded in $\delta$ are of the form
\begin{equation*}
\EE \int_0^T \mc{P}(\Zh{0}, \chi_s, \Zh{s}, \phi_s^\delta) \left(X_s^\pi\right)^{2(1-\gamma)} \ud s,\quad \chi_s \in \left[\Zh{0}\wedge\Zh{s}, \Zh{0}\vee \Zh{s}\right],
\end{equation*}
where $\mc{P}$ is at most polynomially growing. By H\"{o}lder inequality, they are less than 
\begin{equation*}
\left(\EE \int_0^T \mc{P}(\Zh{0}, \chi_s, \Zh{s}, \phi_s^\delta)^{q} \ud s\right)^{1/q} \left(\EE \int_0^T \left(X_s^\pi\right)^{2p(1-\gamma)} \ud s\right)^{1/p}, \quad 1/p + 1/q = 1.
\end{equation*}
The first quantity is uniformly bounded in $\delta$ by Lemma \ref{lem_moments}\eqref{lem_Zh}\eqref{lem_phi}, while the boundedness of the second one follows by the admissibility of $\pi = \pzt + \delta^\alpha \pot \in \MCA_t^\delta$. An example of case (ii), with the choice  $\pzt = c\pz$ and $\pot = \pz$ could also be validated in a similar manner.
\end{rem}

\begin{proof}
We first deal with the case $\pi = \pz + \delta^\alpha\pot$. The derivation is similar to the one in Section~\ref{sec_decompz}. As usual, in order to condense the notation,  we systematically omit  the argument $(s,X_s^\pi,\Zh{0})$ for $\vz$ in what follows.
\begin{align*}
\ud \vz(t,X_t^\pi,\Zh{0}) =& \vz_t\ud t+ \mu(\Zh{t})\pi(t,X_t^\pi,\Zh{t})\vz_x\ud t + \frac{1}{2}\sigma^2(\Zh{t})\pi^2(t,X_t^\pi,\Zh{t})\vz_{xx} \ud t \\
& +\sigma(\Zh{t})\pi(t,X_t^\pi, \Zh{t})\vz_{x}\ud W_t \nonumber\\
 =& (\Zh{t}-\Zh{0})\lambda(\Zh{0})\lambda'(\Zh{0})D_1\vz\ud t + g_t^{(1)}\vz_x \ud t + \half g_t^{(2)} \vz_{xx} \ud t + \ud \widetilde M_t^{(1)}\\
&+ \delta^\alpha\widetilde g_t^{(1)}\vz_x \ud t + \delta^\alpha\widetilde g_t^{(2)} \vz_{xx} \ud t  + \half\delta^{2\alpha}\sigma^2(\Zh{t})\left(\pot\right)^2(t,X_t^\pz, \Zh{t}) \vz_{xx} \ud t,
\end{align*}
where $\widetilde M_t^{(1)}$, $\widetilde g_t^{(1)}$ and $\widetilde g_t^{(2)}$ are defined by
\begin{align*}
&\ud \widetilde M_t^{(1)} = \sigma(\Zh{t})\left(\pz(t,X_t^\pi, \Zh{t}) + \delta^\alpha \pot(t,X_t^\pi, \Zh{t})\right)\vz_x(t,X_t^\pi, \Zh{0}) \ud W_t,\\
&\widetilde g_t^{(1)} = \left(\Zh{t}-\Zh{0}\right)(\mu \pot)_z\Big\vert_{z = \widetilde \chi_t^{(1)}}, \quad \widetilde g_t^{(2)} = \left(\Zh{t}-\Zh{0}\right)(\mu R\pot)_z\Big\vert_{z = \widetilde \chi_t^{(2)}},
\end{align*}
with $\widetilde \chi_t^{(i)}  \in \left[\Zh{0}\wedge\Zh{t}, \Zh{0}\vee\Zh{t}\right]$, $i = 1,2$.

Then it suffices to find the epsilon-martingale decomposition for the term \\
$ (\Zh{t}-\Zh{0})D_1\vz(t,X_t^\pi, \Zh{0})\ud t$. Following a similar derivation as in Section~\ref{sec_decompz}, one can deduce
\begin{align*}
\ud Q_t^{\pz,\delta}(X_t^\pi, \Zh{0}) &= \ud \left( \vz(t,X_t^\pi,\Zh{0}) + \lambda(\Zh{0})\lambda'(\Zh{0})\phi_t^\delta D_1\vz(t,X_t^\pi, \Zh{0})\right.\\
&\hspace{25pt} \left. + \delta^H \rho \lambda^2(\Zh{0})\lambda'(\Zh{0}) \vo(t,X_t^\pi, \Zh{0}) \right) \\
&=\ud \widetilde M_t^\delta + \ud \widetilde R_t^\delta + \delta^{2\alpha} \ud N_t^\delta,
\end{align*}
where 
\begin{align*}
&\ud \widetilde M_t^\delta = \ud \widetilde M_t^{(1)} + \lambda(\Zh{0})\lambda'(\Zh{0})\ud \widetilde M_t^{(2)} + \delta^H \rho\lambda^2(\Zh{0})\lambda'(\Zh{0}) \ud \widetilde M_t^{(3)},\\
&\ud \widetilde M_t^{(2)} = D_1\vz(t,X_t^\pi, \Zh{0}) \ud \psi_t^\delta + \phi_t^\delta \sigma(\Zh{t})\pi(t,X_t^\pi,\Zh{t}) \partial_x D_1\vz(t,X_t^\pi, \Zh{0})  \ud W_t,\\
&\ud \widetilde M_t^{(3)} = \sigma(\Zh{t})\pi(t,X_t^\pi, \Zh{t})\vo_x(t,X_t^\pi, \Zh{0}) \ud W_t,\\
&\ud \widetilde R^\delta_t = g_t^{(1)}\vz_x\ud t + \half g_t^{(2)} \vz_{xx} \ud t +  \delta^\alpha\widetilde g_t^{(1)}\vz_x \ud t + \delta^\alpha \widetilde g_t^{(2)} \vz_{xx} \ud t + 
\delta^H \rho\lambda^2(\Zh{0})\lambda'(\Zh{0})\left[ g_t^{(3)} \vo_x + g_t^{(4)} \half\vo_{xx} \right.\\
&\hspace{30pt}\left.+ \delta^\alpha \mu \pot \vo_x + \delta^\alpha \sigma^2\pz\pot\vo_{xx} + \half\delta^{2\alpha}\sigma^2\left(\pot\right)^2\vo_{xx}\right]\ud t + \lambda(\Zh{0})\lambda'(\Zh{0})\phi_t^\delta\left[g_t^{(3)} \partial_xD_1\vz  \right.\\
&\hspace{30pt} \left. + \half g_t^{(4)} \partial_{xx} D_1\vz+\delta^\alpha \mu\pot \partial_x D_1\vz + \delta^\alpha \sigma^2\pz\pot \partial_{xx}D_1\vz + \half \delta^{2\alpha}\sigma^2\left(\pot\right)^2\partial_{xx}D_1\vz\right]\ud t \\
&\hspace{30pt}+ \rho\lambda(\Zh{0})\lambda'(\Zh{0}) \left[ g_t^{(5)}\partial_xD_1\vz \left(\delta^H \theta_{t,T} + \delta^{H+1}\widetilde \theta_{t,T}\right)+ \delta^\alpha \sigma\pot \partial_x D_1\vz \left(\delta^H \theta_{t,T} + \delta^{H+1}\widetilde \theta_{t,T}\right) \right]\ud t , \nonumber\\
& \ud \widetilde N_t^\delta = \half \sigma^2(\Zh{t})\left(\pot(t,X_t^\pi, \Zh{t})\right)^2\vz_{xx}(t,X_t^\pi, \Zh{0}) \ud t.
\end{align*}
To condense the expression for $R_t^\delta$, we omit the arguments for functions $\vz(t,X_t^\pi,\Zh{0})$, $\vo(t,X_t^\pi, \Zh{0})$, $\mu(\Zh{t})$,  $\sigma(\Zh{t})$, $\pz(t,X_t^\pi, \Zh{t})$ and $\pot(t,X_t^\pi,\Zh{t})$.

Since the Merton value $M(t,x;\lambda)$ is strictly concave, so does $\vz(t,x,z) = M(t,x;\lambda(z))$, which implies that $N_t$ is non-increasing. Moreover, under Assumption~\ref{assump_piregularity}, \ref{assump_optimality}, one can prove $\widetilde M_t^\delta$ is a true martingale and $\widetilde R_t^\delta$ is of order $\delta^{H + H \wedge \alpha}$, which yields
\begin{align}
\Vzp_t &= \EE\left[Q_T^{\pz,\delta}\vert \MCF_t\right] = \widetilde M_t^\delta + \EE\left[\widetilde R_T^\delta + \delta^{2\alpha}N_T^\delta \vert \MCF_t\right] \nonumber \\
&= Q_t^{\pz,\delta}( X_t^\pi, \Zh{0}) + \EE\left[\widetilde R_T^\delta - \widetilde R_t^\delta \vert \MCF_t \right] + \delta^{2\alpha}\EE\left[N_T^\delta - N_t^\delta \vert \MCF_t\right] \nonumber\\
& = Q_t^{\pz, \delta}(X_t^\pi, \Zh{0}) + \delta^{2\alpha}\EE\left[N_T^\delta - N_t^\delta \vert \MCF_t\right] + \MCO(\delta^{H+H\wedge \alpha}) \leq Q_t^{\pz, \delta}( X_t^\pi, \Zh{0}) + \MCO(\delta^{H + H \wedge \alpha}),\label{eq_Vexpansionpz}
\end{align}
where in the derivation we have used $\widetilde M_t^\delta + \widetilde R_t^\delta + N_t^\delta = Q^{\pz,\delta}_t(X_t^\pi, \Zh{0})$ and the decreasing property of $N_t$.

\bigskip
The second case is $\pi = \pzt + \delta^\alpha\pot$ with $\pzt \not \equiv \pz$. Here the wealth process $X_t^\pi$ follows
\begin{align}
	\ud X_t^\pi = \mu(\Zh{t})\left(\pzt + \delta^\alpha \pot \right)(t,X_t^\pi, \Zh{t}) \ud t + \sigma(\Zh{t})\left(\pzt + \delta^\alpha \pot\right)(t, X_t^\pi, \Zh{t}) \ud W_t.
\end{align}
Under similar derivations, one can deduce
\begin{align}
\ud \vz(t,X_t^\pi, \Zh{0}) = \ud \widehat M_t^\delta + \ud \widehat R_t^\delta + \ud \widehat N_t^\delta
\end{align}
where 
\begin{align}
&\ud \widehat M_t^\delta = \sigma(\Zh{t})\pi(t,X_t^\pi, \Zh{t})\vz_x(t,X_t^\pi,\Zh{0}) \ud W_t,\\
&\ud \widehat R_t^\delta = \left[\widehat g_t^{(1)}\vz_x + \half \widehat g_t^{(2)}\vz_{xx}\right]\ud t +  \delta^\alpha \left[\mu\pot\vz_x + \sigma^2\pzt\pot\vz_{xx} + \half \delta^\alpha\sigma^2\left(\pot\right)^2\vz_{xx} \right]\ud t,\\ 
&\ud \widehat N_t^\delta =
\half \sigma^2(\Zh{0})\left(\pzt - \pz\right)^2(t,X_t^\pi, \Zh{0})\vz_{xx}(t,X_t^\pi,\Zh{0}) \ud t, 
\end{align}
with $\widehat g_t^{(1)}$ and $\widehat g_t^{(2)}$ defined as
\begin{equation}
\widehat g_t^{(1)} = \left(\Zh{t} - \Zh{0}\right)(\mu \pzt)_z\Big\vert_{z = \widehat\chi_t^{(1)}}, \quad
\widehat g_t^{(2)} = \left(\Zh{t} - \Zh{0}\right)(\sigma^2 \left(\pzt\right)^2)_z\Big\vert_{z = \widehat\chi_t^{(2)}}, 
\end{equation}
and $\widehat \chi_t^{(i)}  \in \left[\Zh{0}\wedge\Zh{t}, \Zh{0}\vee\Zh{t}\right]$, $i = 1,2$.

Here $\widehat N_t^\delta$ is strictly decreasing due to the strict concavity of $\vz$. Under Assumption~\ref{assump_piregularity}, \ref{assump_optimality}, $\widehat M_t^\delta$ is a true martingale, and $\widehat R_t^\delta$ is of order $\delta^{H\wedge \alpha}$. Therefore we obtain
\begin{align}
 \Vzp_t= \vz(t,X_t^\pi, \Zh{0}) + \EE\left[\left. \widehat N_T^\delta - \widehat N_t^\delta \right\vert \MCF_t\right] + \MCO(\delta^{H \wedge \alpha}) < \vz(t,X_t^\pi, \Zh{0}) + \MCO(\delta^{H\wedge \alpha}).\label{eq_Vexpansionpzt}
\end{align}

Now comparing the approximation \eqref{eq_Vtexpansion} with \eqref{eq_Vexpansionpz} \eqref{eq_Vexpansionpzt}, we obtain the desired result in Theorem \ref{thm_main}.
\end{proof}

\section{Conclusion}\label{sec_conclusion}

In this paper, we have considered the portfolio allocation problem in the context of a slowly varying fractional stochastic environment driven by a fractional OU process with $H\in(0,1)$, and when the investor tries to maximize her terminal utility with, first, power utilities, and, then, in a general class of utility functions. 

In the power utility case, using a martingale distortion representation for the value process and the espsilon-martingale decomposition method, we are able to derive a first order asymptotic approximation for both the optimal portfolio value and the optimal strategy. The first order correction for the optimal portfolio value has  both random and deterministic parts as in the linear option pricing problem studied in \cite{GaSo:15}. However, the  approximate optimal strategy does not involve a random part and can be easily implemented. We also show that the zeroth order of the optimal strategy generates the portfolio value up to the first order. We observe that it is more crucial to include the first order correction  in the case of $H$ small ($\delta^H$ large), and this ($H$ small) has been observed in volatility data (see \cite{roughvol}). 

Finally, we extend our analysis to the case of general utilities where we can derive the first order asymptotic optimality within a specific subclass of strategies $\widetilde \MCA_t^\delta$, which is of the form $\pzt + \delta^\alpha\pot$, with $\pzt$ and $\pot$ being of feedback forms and $\alpha>0$.

The case of fast varying fractional stochastic environment with $H\in (\frac{1}{2},1)$ is the topic of the paper \cite{FoHu2:17}.

\appendix
\section{Technical Lemmas}\label{app_lemmas}
In this section, we present several lemmas which are used in Section~\ref{sec_apptofSV} and \ref{sec_optimality}.

\begin{lem}\label{lem_moments}
\begin{enumerate}[(i)]
\item\label{lem_Zh} The slowly varing fractional factor $\Zh{t}$ defined in \eqref{def_Zh} is a stationary Gaussian process with zero mean and variance
\begin{equation}
\EE\left[\left(\Zh{t}\right)^2\right] = \int_{-\infty}^t \left(\MCK^\delta(t-s)\right)^2 \ud s = \int_0^\infty \MCK^2(s) \ud s = \sigma_{ou}^2,
\end{equation}
where $\sigma_{ou}^2$ is given in \eqref{eq_Zhvar} and free of $\delta$.
Therefore $\Zh{t}$ has finite moments of any order, and for any $p\in \NN^+$, $\Zh{\cdot} \in L^p([0,T]\times \Omega)$ uniformly in $\delta$.

Any adapted process  that $\chi_t \in \left[\Zh{0}\wedge\Zh{t}, \Zh{0}\vee\Zh{t}\right]$ also satisfies that  $\chi_\cdot \in L^p([0,T]\times \Omega)$ uniformly in $\delta$.

\item\label{lem_Zhdiff} The difference $\Zh{t} - \Zh{0}$ is a Gaussian random variable with zero mean and variance 
\begin{equation}
\EE\left[\left(\Zh{t}-\Zh{0}\right)^2\right] = \sigma_H^2 (\delta t)^{2H} + o(\delta^{2H}),
\end{equation}
where $\sigma_H^2 = (\Gamma(2H+1)\sin(\pi H))^{-1}$. Consequently, the $k^{th}$ moment of $\Zh{t} - \Zh{0}$ is of order $\delta^{kH}$, uniformly in $t \in [0,T]$. Moreover, $\Zh{\cdot} - \Zh{0}$ is of order $\delta^H$ in $ L^p([0,T] \times \Omega)$ sense, for any $p\in \NN^+$.

\item\label{lem_phi}	The random correction $\phi_t^\delta$ defined in \eqref{def_phi} is a normal random variable of order $\delta^H$ with zero mean and variance
	\begin{align}
	\EE\left[\left(\phi_t^\delta\right)^2\right] &= \frac{\delta^{2H}T^{2+2H}}{\Gamma^2(H+\frac{3}{2})} \int_0^\infty \left[ \left(1-\frac{t}{T} + v\right)^{H+\half} - v^{H + \half} - (1-\frac{t}{T})(H+\half)(v-\frac{t}{T})_+^{H-\half}\right]^2 \ud v \nonumber\\
	& \hspace{10pt} + \MCO(\delta^{2H+1}),
	\end{align}
where the integral is uniformly bounded in $t\in[0,T]$. Therefore, the $L^p([0,T]\times \Omega)$ norm $\phi_\cdot^\delta$ is of order $\delta^H$, for any $p\in \NN^+$.
\item\label{lem_psi}	The process $\left(\psi^\delta_t\right)_{t\in[0,T]}$ defined in \eqref{def_psi} is a square-integrable martingale satisfying
\begin{equation}
\ud \psi_t^\delta = \int_0^{T-t} \MCK^\delta(s) \ud s \ud W_t^Z := \left(\delta^H \theta_{t,T} + \delta^{H+1}\widetilde\theta_{t,T}\right) \ud W_t^Z,
\end{equation}
with $\theta_{t,T}$ and $\widetilde \theta_{t,T}$ given by
\begin{align*}
&	\theta_{t,T}	= \frac{1}{\Gamma(H+\frac{3}{2})}(T-t)^{H+\half}, 
\quad	\widetilde{\theta}_{t,T} =  \frac{a}{\Gamma(H+\half)}\int_0^{T-t}\int_0^s (s-u)^{H-\half}e^{-a\delta u}\ud u \ud s \leq \frac{a(T-t)^{H+\frac{3}{2}}}{\Gamma(H+\frac{5}{2})},
\end{align*}
and uniformly bounded in $t \in [0,T]$ and $\delta$. Consequently, one has
\begin{equation}
	\ud \average{\psi,W^Z}_t = \left(\int_{0}^{T-t} \MCK^\delta(s)\ud s\right)\ud t \text{ and } \ud \average{\psi}_t = \left(\int_{0}^{T-t} \MCK^\delta(s)\ud s\right)^2 \ud t.
\end{equation}

\end{enumerate}
\end{lem}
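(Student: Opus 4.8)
The plan rests on the observation that each of the four objects is (a conditional expectation of) a Wiener integral of an explicit deterministic kernel against $W^Z$, hence a centered Gaussian random variable, so that every assertion reduces to an $L^2$ computation on kernels together with the behaviour of $\MCK$ recorded in Section~\ref{sec_fBMfOU}: from the explicit formula \eqref{def_kernel}, $\MCK\in L^2(0,\infty)$, $\MCK(r)=r^{H-\half}/\Gamma(H+\half)+\MCO(r^{H+\half})$ as $r\downarrow0$, and $|\MCK(r)|\le C(r^{H-\half}\wedge r^{H-\frac32})$, $|\MCK'(r)|\le C(r^{H-\frac32}\wedge r^{H-\frac52})$ for all $r>0$. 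I would prove (i)--(ii) first, then (iii), then (iv).

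For (i), $\Zh{t}=\int_{-\infty}^t\MCK^\delta(t-u)\ud W_u^Z$ with $\MCK^\delta(r)=\sqrt\delta\,\MCK(\delta r)$ is a Wiener integral, hence centered Gaussian; stationarity comes from the convolution structure (equivalently from the covariance $\sigma_{ou}^2\MCC_Z(\delta s)$ recorded when $\Zh{}$ was introduced), and the variance is $\int_0^\infty\MCK^\delta(r)^2\ud r=\int_0^\infty\MCK(w)^2\ud w=\sigma_{ou}^2$ by the substitution $w=\delta r$. Thus the law of $\Zh{t}$ is $\delta$-free, every moment is finite and uniformly bounded in $\delta$, and by Tonelli $\EE\int_0^T|\Zh{t}|^p\ud t=T m_p<\infty$ uniformly in $\delta$, $m_p$ being the $p$-th absolute moment of a centered Gaussian of variance $\sigma_{ou}^2$; for an adapted $\chi_t\in[\Zh{0}\wedge\Zh{t},\Zh{0}\vee\Zh{t}]$ the bound $|\chi_t|\le|\Zh{0}|\vee|\Zh{t}|$ gives the same conclusion. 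For (ii), $\Zh{t}-\Zh{0}$ is centered Gaussian with variance $f(\delta t)$, where $f(u):=2\sigma_{ou}^2(1-\MCC_Z(u))=\EE[(Z_u^H-Z_0^H)^2]$; by the small-time asymptotics of the stationary fOU process (\cite[Section 2.2]{GaSo:15}), $f(u)=\sigma_H^2u^{2H}+u^{2H}\epsilon(u)$ with $\epsilon(u)\to0$ as $u\downarrow0$, so, since $\delta t\in[0,\delta T]$ for $t\in[0,T]$, $|f(\delta t)-\sigma_H^2(\delta t)^{2H}|\le(\delta T)^{2H}\sup_{[0,\delta T]}|\epsilon|=o(\delta^{2H})$ uniformly in $t$; consequently the $k$-th absolute moment is $c_k f(\delta t)^{k/2}=\MCO(\delta^{kH})$ uniformly in $t$, and integrating over $[0,T]$ gives the stated $L^p$ order.

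For (iii), the stochastic Fubini theorem (legitimate since $\MCK\in L^2$, exactly as in the proof of Theorem~\ref{thm_Vtpowerexpansion}) together with the fact that $\MCK^\delta$ is supported on $[0,\infty)$ expresses $\int_t^T\Zh{s}\ud s-(T-t)\Zh{0}$ as a Wiener integral; conditioning on $\MCG_t$ kills the part over $(t,T]$ and leaves
\[
\phi_t^\delta=\int_{-\infty}^t g_t(u)\ud W_u^Z,\qquad g_t(u)=\int_t^T\MCK^\delta(s-u)\ud s-(T-t)\MCK^\delta(-u)\mathbf{1}_{\{u<0\}},
\]
so $\phi_t^\delta$ is centered Gaussian with $\EE[(\phi_t^\delta)^2]=\int_{-\infty}^t g_t(u)^2\ud u$. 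The substitution $u=t-Tv$ (sending $(-\infty,0]\mapsto[t/T,\infty)$, $[0,t]\mapsto[0,t/T]$, $\mathbf{1}_{\{u<0\}}\mapsto\mathbf{1}_{\{v>t/T\}}$, $\ud u=T\ud v$), with $\MCK^\delta(r)\sim\delta^H r^{H-\half}/\Gamma(H+\half)$ and $\int_0^{T-t}(\tau+Tv)^{H-\half}\ud\tau=[(T-t+Tv)^{H+\half}-(Tv)^{H+\half}]/(H+\half)$, gives pointwise in $v$ as $\delta\downarrow0$ that $\delta^{-H}g_t(t-Tv)\to\frac{T^{H+\half}}{\Gamma(H+\frac32)}\big[(1-\tfrac tT+v)^{H+\half}-v^{H+\half}-(1-\tfrac tT)(H+\half)(v-\tfrac tT)_+^{H-\half}\big]$; dominated convergence then yields $\EE[(\phi_t^\delta)^2]=\frac{\delta^{2H}T^{2H+2}}{\Gamma^2(H+\frac32)}\cdot(\text{the stated integral})+\MCO(\delta^{2H+1})$, the error being the cross term of the $\MCO(\delta^H)$ leading kernel with its $\MCO(\delta^{H+1})$ correction near the origin. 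Finiteness and uniform-in-$t$ boundedness of the limiting integral hold because as $v\to\infty$ the three terms telescope to $\MCO(v^{H-\frac32})$ (square integrable as $H<1$), near $v=0$ and $v=\tfrac tT$ the singularities are $\MCO(v^{H-\half})$ and $\MCO((v-\tfrac tT)^{H-\half})$ (squares integrable as $H>0$), and the integral is continuous in $t/T\in[0,1]$; the $L^p$-in-$(t,\omega)$ order $\delta^H$ then follows as in (ii).

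For (iv), $Y:=\int_0^T(\Zh{s}-\Zh{0})\ud s$ is a square-integrable, $\MCG_T$-measurable random variable depending only on $W^Z$; writing $W=\rho W^Z+\sqrt{1-\rho^2}W^\perp$ with $W^\perp$ a Brownian motion independent of $W^Z$ shows $\sigma(Y)\vee\MCG_t$ is independent of $\sigma(W_u^\perp:u\le t)$, hence $\psi_t^\delta=\EE[Y\mid\MCF_t]=\EE[Y\mid\MCG_t]$ is a square-integrable (Doob) martingale. By stochastic Fubini $Y=\int_{-\infty}^T g(u)\ud W_u^Z$ with $g(u)=\int_{0\vee u}^T\MCK^\delta(s-u)\ud s-T\MCK^\delta(-u)\mathbf{1}_{\{u<0\}}$; conditioning on $\MCG_t$ keeps the part over $(-\infty,t]$, so $\psi_t^\delta=\int_{-\infty}^t g(u)\ud W_u^Z$ and hence $\ud\psi_t^\delta=g(t)\ud W_t^Z=\big(\int_0^{T-t}\MCK^\delta(r)\ud r\big)\ud W_t^Z$ for $t\in(0,T)$, the correction and the constraint $0\vee u$ being inactive at $u=t>0$. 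Inserting $\MCK^\delta(r)=\sqrt\delta\,\MCK(\delta r)$ and \eqref{def_kernel}, the power part of $\int_0^{T-t}\MCK^\delta(r)\ud r$ equals $\delta^H(T-t)^{H+\half}/\Gamma(H+\tfrac32)=\delta^H\theta_{t,T}$ and the remaining part, after the inner substitution $s=\delta u$, equals $\delta^{H+1}\widetilde\theta_{t,T}$ with $\widetilde\theta_{t,T}$ as in the statement; the bound $\widetilde\theta_{t,T}\le a(T-t)^{H+\frac52}/\Gamma(H+\frac52)$—more precisely $a(T-t)^{H+\frac32}/\Gamma(H+\frac52)$—follows from $e^{-a\delta u}\le1$ (using $a>0$), and $\ud\average{W^Z,\psi^\delta}_t=h_t\ud t$, $\ud\average{\psi^\delta}_t=h_t^2\ud t$ with $h_t=\int_0^{T-t}\MCK^\delta(r)\ud r$ are immediate. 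The one genuinely delicate step is the domination in (iii): a $\delta$- and $t$-uniform $L^2(0,\infty)$ majorant for $\delta^{-H}g_t(t-Tv)$ cannot be obtained term by term but must use the cancellation inside $g_t$—writing $\MCK^\delta(s-u)-\MCK^\delta(-u)=\int_{-u}^{s-u}(\MCK^\delta)'(r)\ud r$ and using the pointwise bound on $\MCK'$, so that the far region $\{v\ge1/\delta\}$ contributes only $o(\delta^{2H})$—and it is this step, rather than any of the algebra, that I expect to be the main obstacle.
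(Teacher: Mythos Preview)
Your approach is correct and is essentially what the paper does, only spelled out: the paper's proof consists of the single line ``All can be computed directly, and we refer to the statements in \cite[Section~6, Appendix~A]{GaSo:15},'' whereas you actually carry out those direct computations via the Wiener--integral representations and kernel estimates. The delicate domination step you flag in (iii) is exactly the kind of detail the paper outsources to \cite{GaSo:15}, so your caution there is well placed but does not indicate a gap in method.
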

\begin{proof}
All can be computed directly, and we refer to the statements in  \cite[Section~6, Appendix~A]{GaSo:15}.
\end{proof}

\begin{lem}\label{lem_RtT}
The term $R_{[t,T]}$ defined in \eqref{def_RtT} is of order $\delta^{2H}$ in $L^2$ sense.
	\end{lem}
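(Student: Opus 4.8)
The plan is to control $R_{[t,T]} = e^{A_{[t,T]}+B_{[t,T]}} - 1 - A_{[t,T]}$ by first showing that $A_{[t,T]}$ is of order $\delta^H$ in $L^p$ for every $p$, and $B_{[t,T]}$ is of order $\delta^{2H}$ in $L^p$ for every $p$, and then invoking the elementary inequality $|e^x - 1 - y| \le |x-y| + \half x^2 e^{|x|}$ with $x = A_{[t,T]}+B_{[t,T]}$ and $y = A_{[t,T]}$. This gives the pointwise bound
\begin{equation*}
|R_{[t,T]}| \le |B_{[t,T]}| + \half\left(A_{[t,T]}+B_{[t,T]}\right)^2 e^{|A_{[t,T]}+B_{[t,T]}|},
\end{equation*}
so that, by Cauchy--Schwarz, $\|R_{[t,T]}\|_{L^2} \le \|B_{[t,T]}\|_{L^2} + \half\,\|A_{[t,T]}+B_{[t,T]}\|_{L^4}^2\,\|e^{|A_{[t,T]}+B_{[t,T]}|}\|_{L^4}^{1/2}\cdot(\ldots)$; the precise H\"older split does not matter as long as each factor is controlled. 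The first term is $\MCO(\delta^{2H})$ directly, and the second is $\MCO(\delta^{2H})$ provided $A_{[t,T]}$ (hence $A+B$) has all moments of order $\delta^H$ and the exponential factor has bounded $L^p$ norm uniformly in $\delta$.

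The moment estimates are the substance. For $A_{[t,T]}$: it is a sum of three terms, two of the form $c\,\lambda(\Zh{0})\lambda'(\Zh{0})\int_t^T(\Zh{s}-\Zh{0})\ud s$ and one of the form $c\,\lambda'(\Zh{0})\int_t^T(\Zh{s}-\Zh{0})\ud W_s^Z$. Since $\lambda,\lambda'$ are bounded by Assumption~\ref{assump_power}(i), the first type is bounded by $C\int_t^T|\Zh{s}-\Zh{0}|\ud s$, which is of order $\delta^H$ in every $L^p$ by Lemma~\ref{lem_moments}\eqref{lem_Zhdiff} and Minkowski/Jensen. For the stochastic integral term, apply the Burkholder--Davis--Gundy inequality: $\EE\big[\big|\int_t^T(\Zh{s}-\Zh{0})\ud W_s^Z\big|^p\big] \le C_p\,\EE\big[\big(\int_t^T(\Zh{s}-\Zh{0})^2\ud s\big)^{p/2}\big] \le C_p T^{p/2-1}\int_t^T \EE[(\Zh{s}-\Zh{0})^p]\ud s = \MCO(\delta^{pH})$, again by Lemma~\ref{lem_moments}\eqref{lem_Zhdiff}. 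For $B_{[t,T]}$: the three terms all carry a factor $(\Zh{s}-\Zh{0})^2$, with prefactors $(\lambda\lambda''+\lambda'^2)(\chi_s)$ or $\lambda''(\eta_s)$ evaluated at the Lagrange remainders $\chi_s,\eta_s$ lying between $\Zh{0}$ and $\Zh{s}$. Since $\lambda'$ is bounded and $\lambda''$ is at most polynomially growing, these prefactors are at most polynomially growing in $\chi_s$ or $\eta_s$, which by Lemma~\ref{lem_moments}\eqref{lem_Zh} have all moments uniformly bounded in $\delta$; combined with $\|(\Zh{s}-\Zh{0})^2\|_{L^p} = \MCO(\delta^{2H})$ and H\"older's inequality (to separate the polynomial prefactor from the $(\Zh{s}-\Zh{0})^2$ factor and from the stochastic integral $\ud W_s^Z$ via BDG), one gets $\|B_{[t,T]}\|_{L^p} = \MCO(\delta^{2H})$ for every $p$.

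The main obstacle is the uniform-in-$\delta$ control of $\EE[e^{p|A_{[t,T]}+B_{[t,T]}|}]$. Here I would argue that $|A_{[t,T]}+B_{[t,T]}| \le C\big(\int_t^T|\Zh{s}-\Zh{0}|\ud s + \int_t^T(1+|\chi_s|^m+|\eta_s|^m)(\Zh{s}-\Zh{0})^2\ud s + |\int_t^T(\Zh{s}-\Zh{0})\ud W_s^Z| + |\int_t^T\lambda''(\eta_s)(\Zh{s}-\Zh{0})^2\ud W_s^Z|\big)$, and that the right-hand side, being a continuous functional of the Gaussian process $\Zh{\cdot}$ and the Brownian stochastic integrals, has Gaussian-type tails; more concretely, one can bound it below by a random variable whose exponential moments are finite and uniformly bounded in $\delta$ for $\delta$ small, because each summand goes to $0$ in $L^p$ and the relevant variances/brackets are uniformly bounded (indeed of order $\delta^{2H}\to 0$). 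A clean way: since each of the four pieces tends to $0$ in $L^p$ for all $p$, and in fact the exponential moments of each piece are uniformly bounded (for the stochastic integrals use that the quadratic variation $\int_t^T(\Zh{s}-\Zh{0})^2\ud s$ is of order $\delta^{2H}$ so Novikov-type / Gaussian-exponential bounds apply uniformly), the product of exponentials is uniformly bounded by H\"older. Assembling these pieces yields $\|R_{[t,T]}\|_{L^2} = \MCO(\delta^{2H})$, which is the claim; the proof of Theorem~\ref{thm_Vtpowerexpansion} then goes through as stated.
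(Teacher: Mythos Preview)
Your overall architecture matches the paper's: write $R_{[t,T]}=B_{[t,T]}+\tfrac12(A_{[t,T]}+B_{[t,T]})^2\,e^{\chi}$ (the paper does this via the Lagrange form of Taylor, you via the inequality $|e^x-1-y|\le|x-y|+\tfrac12 x^2e^{|x|}$), then show $A\sim\MCO(\delta^H)$ and $B\sim\MCO(\delta^{2H})$ in every $L^p$, and finally control the exponential factor in $L^4$. Your moment estimates for $A$ and $B$ are essentially the paper's.

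The gap is in the step you yourself flag as ``the main obstacle'': the uniform control of $\EE[e^{p|A_{[t,T]}+B_{[t,T]}|}]$. Your plan is to bound $|A+B|$ by the sum of its four pieces and then argue piecewise via Novikov/Gaussian tails. This runs into trouble for the $B$-stochastic integral $\int_t^T\lambda''(\eta_s)(\Zh{s}-\Zh{0})^2\ud W_s^Z$: to apply a Novikov-type bound you would need exponential moments of $\int_t^T \lambda''(\eta_s)^2(\Zh{s}-\Zh{0})^4\ud s$, and since $\lambda''$ is only assumed polynomially growing and $\eta_s$ involves $\Zh{0},\Zh{s}$ (Gaussian with variance $\sigma_{ou}^2$ \emph{independent of $\delta$}), this is not obviously finite, let alone uniformly bounded. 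The paper sidesteps this entirely by a neat trick: instead of bounding the Taylor pieces separately, it \emph{reverses the Taylor expansion} and writes
\[
A_{[t,T]}+B_{[t,T]}=\frac{1-\gamma}{2q\gamma}\!\int_t^T\!\!\big(\lambda^2(\Zh{s})-\lambda^2(\Zh{0})\big)\ud s+\rho\frac{1-\gamma}{\gamma}\!\int_t^T\!\!\big(\lambda(\Zh{s})-\lambda(\Zh{0})\big)\ud W_s^Z-\tfrac12\rho^2\Big(\frac{1-\gamma}{\gamma}\Big)^2\!\int_t^T\!\!\big(\lambda^2(\Zh{s})-\lambda^2(\Zh{0})\big)\ud s,
\]
where now every integrand involves only the \emph{bounded} function $\lambda$. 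The Lebesgue integrals are then deterministically bounded, and for the stochastic integral one forms the exponential martingale $\mc{E}_{[t,T]}$ (Novikov is immediate since the integrand is bounded), so $\EE[e^{4(A+B)}]$ is finite by a direct Cauchy--Schwarz split. This is the missing idea in your sketch; once you use it, the rest of your argument goes through exactly as you wrote.
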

\begin{proof}
Taylor expanding $e^x$ at $x = 0$ gives
\begin{equation*}
R_{[t,T]} = B_{[t,T]} + \frac{e^{\chi_{[t,T]}}}{2}(A_{[t,T]} + B_{[t,T]})^2, 
\end{equation*}
with $\chi_{[t,T]}$ being  the Lagrange remainder $ \chi_{[t,T]} \in [(A_{[t,T]} + B_{[t,T]}) \wedge 0,  (A_{[t,T]} + B_{[t,T]})  \vee 0]$. Then, it suffices to (a) compute the moments of $A_{[t,T]}$ and $B_{[t,T]}$; and (b) prove $e^{\chi_{[t,T]}} \in L^4(\Omega)$. 

To this end, we first claim that 
\begin{equation*}
\EE\left[A_{[t,T]}^p\right] \sim \MCO(\delta^{pH}), \quad \EE\left[B_{[t,T]}^p\right] \sim \MCO(\delta^{2pH}), \quad  \forall p \in \NN.
\end{equation*}
They follow by the assumptions on $\lambda(\cdot)$ and its derivatives, properties of $\Zh{t}$ and $\Zh{t} - \Zh{0}$ as stated in Lemma\ref{lem_moments}\eqref{lem_Zh}-\eqref{lem_Zhdiff}, and the inequality:
\begin{equation*}
\EE\abs{\int_0^T   g_s \ud W_s}^p \leq \left(\frac{p(p-1)}{2}\right)^{p/2} T^{\frac{p-2}{2}}\EE\int_0^T \abs{g_s}^p\ud s, \quad \forall  p \geq 2 \text{ and } g_s \in \MCF_s.
\end{equation*}

For part (b), we notice that $0 \leq \EE[e^{4\chi_{[t,T]}}] \leq  \EE[e^{4(A_{[t,T]} + B_{[t,T]})  \vee 0}] \leq \EE[e^{4A_{[t,T]} + 4B_{[t,T]}} + 1]$, then, it remains to show $e^{A_{[t,T]} + B_{[t,T]}} \in L^4$. From the derivation in Theorem \ref{thm_Vtpowerexpansion}, one deduces
{\small
\begin{align*}
A_{[t,T]} + B_{[t,T]} & = \frac{1-\gamma}{2q\gamma}\int_t^T \lambda^2(\Zh{s}) -\lambda^2(\Zh{0}) \ud s + \rho \left(\frac{1-\gamma}{\gamma}\right)\int_t^T \lambda(\Zh{s}) - \lambda(\Zh{0}) \ud W_s^Z \\
&\quad - \half \rho^2\left(\frac{1-\gamma}{\gamma}\right)^2\int_t^T \lambda^2(\Zh{s}) - \lambda^2(\Zh{0}) \ud s,
\end{align*}
}
and 
\begin{equation*}
e^{4A_{[t,T]} + 4B_{[t,T]}} = e^{\frac{4(1-\gamma)}{q\gamma}\int_t^T \lambda^2(\Zh{s})-\lambda^2(\Zh{0})\ud s + \rho^2\left(\frac{1-\gamma}{\gamma}\right)^2 \int_t^T 6\lambda^2(\Zh{s}) + 10\lambda^2(\Zh{0}) - 16\lambda(\Zh{s})\lambda(\Zh{0}) \ud s }\cdot\mc{E}_{[t,T]},
\end{equation*}
where $\mc{E}_{[t,T]}$ is given by
\begin{equation*}
\mc{E}_{[t,T]} = e^{4\rho \left(\frac{1-\gamma}{\gamma}\right)\int_t^T \lambda(\Zh{s}) - \lambda(\Zh{0}) \ud W_s^Z - 8\rho^2 \left(\frac{1-\gamma}{\gamma}\right)^2\int_t^T \left(\lambda(\Zh{s}) - \lambda(\Zh{0})\right)^2 \ud s}.
\end{equation*}
Then,  the fact that $e^{A_{[t,T]} + B_{[t,T]}} \in L^4$ follows by $\EE[\mc{E}_{[t,T]}] = 1$ (Novikov's condition) and the boundedness of $\lambda(\cdot)$.
\end{proof}

\begin{lem}\label{lem_mt} The processes $\left(M^{(i)}\right)_{t\in [0,T]}$, $ i = 1, 2, 3$ defined in \eqref{def_m1}, \eqref{def_m2} and \eqref{def_m3} are true martingales with respect to the filtration $\MCF_t$, so does $\left(M\right)_{t\in[0,T]}$.
\end{lem}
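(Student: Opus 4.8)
\textbf{Proof proposal for Lemma~\ref{lem_mt}.}
The plan is to show that each stochastic integral $M^{(i)}_t$ is a genuine $\MCF_t$-martingale by verifying that its integrand is in $L^2([0,T]\times\Omega)$, which (together with the fact that a local martingale with integrable quadratic variation is a true martingale) is enough. First I would recall the explicit forms: $\ud M_t^{(1)} = \sigma(\Zh{t})\pz(t,X_t^\pz,\Zh{t})\vz_x(t,X_t^\pz,\Zh{0})\ud W_t$, and using the relation $\pz(t,x,z)=\frac{\lambda(z)}{\sigma(z)}R(t,x;\lambda(z))$ from \eqref{eq_relation}, the volatility coefficient of $M^{(1)}$ equals $\lambda(\Zh{t})R(t,X_t^\pz;\lambda(\Zh{t}))\vz_x(t,X_t^\pz,\Zh{0})=\lambda(\Zh{t})D_1\vz(t,X_t^\pz,\Zh{t})\cdot\frac{\vz_x(t,X_t^\pz,\Zh{0})}{\vz_x(t,X_t^\pz,\Zh{t})}$, so that after a Taylor expansion in $z$ at $\Zh{0}$ it suffices to control moments of $\lambda$, of $D_1\vz$, and of the first $z$-derivatives of $\vz_x$. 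The key input is Assumption~\ref{assump_valuefunc}: \eqref{assump_vxx} gives $|x^2\vz_{xx}|\le d(z)\vz$ with $d$ of polynomial growth, which combined with $D_1\vz=R\,\vz_x$, $D_1\vz=-D_2\vz$, and the risk-tolerance bounds \eqref{assump_Uiii} lets one dominate $D_1\vz$, $\partial_x D_1\vz$, $D_1^2\vz$ and $\vo_x$ by $\vz$ times polynomials in $z$; then \eqref{assump_vz} (the uniform $L^4$ bound on $\vz(s,X_s^\pz,\Zh{0})$) together with Lemma~\ref{lem_moments}\eqref{lem_Zh} (all moments of $\Zh{s}$ and of intermediate values $\chi_s$ finite, uniformly in $\delta$) closes the $L^2$ estimate via Hölder's inequality.

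Concretely, for $M^{(1)}$ I would bound
\[
\EE\!\int_0^T \big(\sigma(\Zh{s})\pz(s,X_s^\pz,\Zh{s})\vz_x(s,X_s^\pz,\Zh{0})\big)^2\ud s
\;\le\; C\,\EE\!\int_0^T \mathcal{P}(\Zh{s})\,\big(\vz(s,X_s^\pz,\Zh{0})\big)^2\ud s
\]
for some polynomially growing $\mathcal{P}$ (absorbing $\lambda$, the risk-tolerance ratio, and the $d(z)$ factor), and then apply Hölder with the $L^4$ bound \eqref{eq_vz} and the moment bounds on $\Zh{\cdot}$ to conclude finiteness, uniformly in $\delta$. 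For $M^{(2)}$, whose differential is $D_1\vz(t,X_t^\pz,\Zh{0})\ud\psi_t^\delta + \phi_t^\delta\sigma(\Zh{t})\pz(t,X_t^\pz,\Zh{t})\partial_x D_1\vz(t,X_t^\pz,\Zh{0})\ud W_t$, I would treat the two pieces separately: for the $\ud\psi^\delta$ piece use Lemma~\ref{lem_moments}\eqref{lem_psi} which gives $\ud\average{\psi^\delta}_t=\big(\int_0^{T-t}\MCK^\delta(s)\ud s\big)^2\ud t$ with the bracket uniformly bounded, so one needs $\EE\int_0^T (D_1\vz)^2\ud t<\infty$, handled as above; for the $\ud W$ piece, $\phi_t^\delta$ contributes a factor of order $\delta^H$ with all moments finite (Lemma~\ref{lem_moments}\eqref{lem_phi}), so after one more Hölder split it reduces again to the $L^4$ control of $\vz$ and polynomial moments of $\Zh{\cdot}$. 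The estimate for $M^{(3)}$, with integrand $\sigma(\Zh{t})\pz(t,X_t^\pz,\Zh{t})\vo_x(t,X_t^\pz,\Zh{0})$ and $\vo=D_1^2\vz\,D_{t,T}$, is entirely analogous, with $D_{t,T}$ bounded on $[0,T]$. Finally, since $M=\sum_i c_i(\Zh{0})M^{(i)}$ (cf.\ \eqref{def_Mt}) is a finite linear combination of true martingales with $\MCF_0$-measurable deterministic-given-$\Zh{0}$ coefficients, $M$ is itself a true martingale.

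The main obstacle I anticipate is purely bookkeeping rather than conceptual: carefully tracking how the Taylor remainders in the expansion at $\Zh{0}$ interact with the derivative-of-$\vz$ factors, i.e.\ verifying that every term produced is of the schematic form (polynomial in $\Zh{s}$ and $\chi_s$) $\times$ (a bounded-order $x$-derivative of $\vz$ controlled by $\vz$ via Assumption~\ref{assump_valuefunc} and the risk-tolerance assumptions \eqref{assump_Uiii}), so that the uniform-in-$\delta$ $L^4$ bound \eqref{eq_vz} and Lemma~\ref{lem_moments} can be applied. Once the structural claim ``every integrand is dominated by $\mathcal{P}(\Zh{s},\chi_s)\,\vz(s,X_s^\pz,\Zh{0})$ with $\mathcal{P}$ polynomially growing'' is established, the martingale property follows by the standard $L^2$-localization argument, and these are exactly the estimates deferred to this appendix in the proof of Proposition~\ref{prop_Vz}.
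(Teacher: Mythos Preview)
Your proposal is correct and follows essentially the same strategy as the paper: bound each integrand by a polynomially growing function of $\Zh{s}$ times $\vz(s,X_s^\pz,\Zh{0})$, then apply H\"older's inequality together with the uniform $L^4$ bound \eqref{eq_vz} and the moment bounds of Lemma~\ref{lem_moments}. The only minor difference is that for $M^{(1)}$ the paper avoids your Taylor-expansion detour by using directly the estimate $R(t,x;\lambda)\le Cx$ (from \eqref{eq_prop_Rbounds} with $j=0$ and $R(t,0;\lambda)=0$) together with concavity ($x\vz_x\le\vz$), which handles the $z$-mismatch between $R(\cdot;\lambda(\Zh{t}))$ and $\vz_x(\cdot,\Zh{0})$ in one line; the paper also phrases the martingale criterion as $\EE[\langle M^{(i)}\rangle_T^{1/2}]<\infty$ via BDG rather than your (stronger) $L^2$ integrability of the integrand, but either suffices.
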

\begin{proof}
We prove this result by showing $\EE\left[\average{M^{(1)}}_T^{1/2}\right] < \infty$, which is equivalent to $\EE\left[\sup_{s \leq T}\abs{M^{(1)}_s}\right] < \infty$ by Burkholder--Davis--Gundy  inequality. This implies that $M^{(1)}$ is a martingle.

To this end, we first bound its quadratic variation
\begin{align*}
\ud \average{M^{(1)}}_t &= \lambda^2(\Zh{t})R^2(t,X_t^\pz; \lambda(\Zh{t}))\left(\vz_x(t,X_t^\pz, \Zh{0})\right)^2\ud t \\
& \leq \lambda^2(\Zh{t}) C^2 \left(X_t^\pz \vz_x(t,X_t^\pz, \Zh{0})\right)^2 \ud t \leq \lambda^2(\Zh{t}) C^2 \left(\vz(t,X_t^\pz, \Zh{0})\right)^2 \ud t
\end{align*}
by using the estimate $R(t,x;\lambda(z)) \leq Cx$  and the concavity of $\vz$,
and then deduce
\begin{align*}
\EE\left[\average{M^{(1)}}_T^{1/2}\right] &\leq C^2 \EE\left[\left(\int_0^T \lambda^2(\Zh{s})\left(\vz(s,X_s^\pz, \Zh{0})\right)^2 \ud s \right)^{1/2}\right] \\
&\leq C^2 \EE^{1/4}\left[\int_0^T \lambda^4(\Zh{s}) \ud s\right] \cdot \EE^{1/4} \left[\int_0^T \left(\vz(s,X_s^\pz, \Zh{0})\right)^4 \ud s \right] < \infty,
\end{align*}
where to conclude, we have used Assumption~\ref{assump_valuefunc}, and Lemma~\ref{lem_moments}\eqref{lem_Zh} about $\Zh{s}$. 

The proofs for $M^{(2)}$ and $M^{(3)}$ are obtained in a similar way with estimates from \cite[Proposition~3.5]{FoHu:16}, which is of the form
\begin{equation}\label{eq_prop_Rbounds}
\abs{R^j(t,x;\lambda(z))\left(\partial_x^{(j+1)}R(t,x;\lambda(z))\right)} \leq K_j, \quad \forall 0 \leq j \leq 3, \quad \forall (t,x,z) \in [0,T)t\times \RR^+ \times \RR,
\end{equation}
 and Lemma~\ref{lem_moments}\eqref{lem_phi}-\eqref{lem_psi},
and thus we omit the details here.
\end{proof}

\begin{lem} The process $\left(R_t^\delta\right)_{t\in[0,T]}$ defined in \eqref{def_Rt} is of order $\delta^{2H}$.
\end{lem}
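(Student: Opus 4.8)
The plan is to show that $R^\delta_t=\int_0^t\mathcal R^\delta_s\,\ud s$, where $\mathcal R^\delta_s$ is the finite sum of terms appearing in the differential \eqref{def_Rt} (so $\ud R^\delta_t=\mathcal R^\delta_t\,\ud t$), satisfies $\sup_{t\in[0,T]}\lVert R^\delta_t\rVert_{L^2(\Omega)}=\MCO(\delta^{2H})$; by conditional Jensen this also controls $\EE[R^\delta_T\mid\MCF_t]-R^\delta_t$, which is what the proof of Proposition~\ref{prop_Vz} uses. Since $\lVert R^\delta_t\rVert_{L^2}\le\int_0^T\lVert\mathcal R^\delta_s\rVert_{L^2}\,\ud s\le\sqrt{T}\,\big(\EE\!\int_0^T|\mathcal R^\delta_s|^2\,\ud s\big)^{1/2}$, it suffices to prove $\EE\!\int_0^T|\mathcal R^\delta_s|^2\,\ud s=\MCO(\delta^{4H})$; working with this time-integrated second moment is what will let me invoke the \emph{integral} form of the $L^4$ assumption \eqref{eq_vz}.

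First I would classify the summands of $\mathcal R^\delta_s$ read off from \eqref{def_Rt}. Each is a product of three factors: (a) a prefactor $\lambda(\Zh{0})\lambda'(\Zh{0})$ or $\delta^H\rho\lambda^2(\Zh{0})\lambda'(\Zh{0})$ (or $1$), of polynomial growth in the Gaussian $\Zh{0}$ and hence in every $L^p(\Omega)$ uniformly in $\delta$; (b) a ``small'' random factor, one of $g^{(1)}_s$, $g^{(2)}_s$, $\delta^Hg^{(3)}_s$, $\delta^Hg^{(4)}_s$, $\phi^\delta_sg^{(3)}_s$, $\phi^\delta_sg^{(4)}_s$, $\delta^Hg^{(5)}_s\theta_{s,T}$ or $\delta^{H+1}g^{(5)}_s\widetilde\theta_{s,T}$; and (c) a derivative of $\vz$, $\vo$ or $D_1\vz$ evaluated at $(s,X^\pz_s,\Zh{0})$. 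For (b): $g^{(1)}_s$, $g^{(2)}_s$ in \eqref{def_g1} carry a factor $(\Zh{s}-\Zh{0})^2$ and $g^{(3)}_s$, $g^{(4)}_s$, $g^{(5)}_s$ a factor $(\Zh{s}-\Zh{0})$, so by Lemma~\ref{lem_moments}\eqref{lem_Zhdiff} they are of order $\delta^{2H}$, resp.\ $\delta^{H}$, in every $L^p$ uniformly in $s$; also $\lVert\phi^\delta_s\rVert_{L^p}=\MCO(\delta^H)$ by Lemma~\ref{lem_moments}\eqref{lem_phi} and $\theta_{s,T},\widetilde\theta_{s,T}$ are bounded uniformly by Lemma~\ref{lem_moments}\eqref{lem_psi}; the remaining Lagrange coefficients $(\lambda^2R)_{zz}$, $(\lambda^2R^2)_{zz}$, $(\lambda^2R)_{z}$, $(\lambda^2R^2)_{z}$, $(\lambda R)_{z}$, evaluated at $z=\chi^{(i)}_s\in[\Zh{0}\wedge\Zh{s},\Zh{0}\vee\Zh{s}]$, are dominated by a polynomial in $\chi^{(i)}_s$ (hence in $\Zh{0},\Zh{s}$, which lie in every $L^p$ uniformly by Lemma~\ref{lem_moments}\eqref{lem_Zh}) times $X^\pz_s$ --- or times $(X^\pz_s)^2$ in the two $R^2$ cases --- using the polynomial growth of $\lambda,\lambda',\lambda''$ (Assumption~\ref{assump_valuefunc}(i)), the bound $R(s,x;\lambda(z))\le Cx$ and the $\lambda$-regularity of $R$ (Remark~\ref{rem_U}). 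Thus every small factor, after extracting one power of $X^\pz_s$ (two in the $R^2$ cases), is $\MCO(\delta^{2H})$ in every $L^p$ uniformly in $s$ (the $\delta^{H+1}\widetilde\theta$ term is even $\MCO(\delta^{2H+1})$).

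Then I would absorb the extracted powers of $X^\pz_s$ into the $\vz$-derivatives. Using the Merton relation $D_1\vz=-D_2\vz$ of \eqref{eq_relation} --- i.e.\ $\vz_{xx}=-\vz_x/R$, and hence $\partial_xD_1\vz=(R_x-1)\vz_x$ --- together with the risk-tolerance estimates of \cite[Propositions~3.3 and~3.5]{FoHu:16} (in particular $|R_x|\le K_0$, $R(s,x;\lambda(z))\le Cx$ and \eqref{eq_prop_Rbounds}), the boundedness of $D_{t,T}$ on $[0,T]$ with $\vo=D_1^2\vz\,D_{t,T}$, and the hidden assumption \eqref{eq_vzvzxx} that $|x^2\vz_{xx}|\le d(z)\vz$, one obtains the pointwise bound
\begin{equation*}
x\,|\vz_x|+x^2|\vz_{xx}|+x\,|\partial_xD_1\vz|+x^2|\partial_{xx}D_1\vz|+x\,|\vo_x|+x^2|\vo_{xx}|\ \le\ C\,d(z)\,\vz(t,x,z).
\end{equation*}
Combining (a), (b) and (c), each summand of $\mathcal R^\delta_s$ is then dominated by $\delta^{2H}\,\widetilde P\big(\Zh{0},\Zh{s},\chi^{(i)}_s,\phi^\delta_s/\delta^H\big)\,\vz(s,X^\pz_s,\Zh{0})$ with $\widetilde P$ of polynomial growth in its (Gaussian, or $\MCO(\delta^H)$-normalized) arguments, so that by Cauchy--Schwarz on $[0,T]\times\Omega$
\begin{equation*}
\EE\!\int_0^T|\mathcal R^\delta_s|^2\,\ud s\ \le\ \delta^{4H}\Big(\EE\!\int_0^T\widetilde P(\cdots)^4\,\ud s\Big)^{1/2}\Big(\EE\!\int_0^T\vz(s,X^\pz_s,\Zh{0})^4\,\ud s\Big)^{1/2},
\end{equation*}
where the first factor is bounded uniformly in $\delta$ by Lemma~\ref{lem_moments}\eqref{lem_Zh}--\eqref{lem_psi} and the second by Assumption~\ref{assump_valuefunc}(iii) (i.e.\ \eqref{eq_vz}). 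This gives $\EE\!\int_0^T|\mathcal R^\delta_s|^2\,\ud s=\MCO(\delta^{4H})$ and hence the lemma.

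The main obstacle --- and the only step that is not a routine repetition of the $L^p$/Hölder arguments already used in Lemma~\ref{lem_mt} --- is the bookkeeping of the powers of $X^\pz_s$: one has to check that in \emph{every} summand the positive powers of $X^\pz_s$ produced by the risk-tolerance factors inside the Lagrange remainders exactly match the negative powers $1/X^\pz_s$, $1/(X^\pz_s)^2$ that arise when bounding $\vz_x$, $\vz_{xx}$, $\partial_xD_1\vz$, $\partial_{xx}D_1\vz$, $\vo_x$, $\vo_{xx}$ by $\vz$. This is exactly where the non-linearity of the Merton operator $\Ltx$ enters and where the full strength of \eqref{eq_vzvzxx}, the identity $D_1\vz=-D_2\vz$, and the risk-tolerance bounds \eqref{eq_prop_Rbounds} is needed; once this is in place, everything else follows by combining Lemma~\ref{lem_moments} with the polynomial growth of $\lambda,\lambda',\lambda'',d$ and Hölder's inequality.
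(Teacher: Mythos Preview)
Your approach is essentially the paper's: treat each summand of $R^\delta_t$ separately, extract the $\delta^{2H}$ from the Gaussian factors via Lemma~\ref{lem_moments}, reduce the $\vz$-derivative factors to $\vz$ through the risk-tolerance estimates, and close with H\"older against Assumption~\ref{assump_valuefunc}\eqref{assump_vz}. Your write-up is in fact more careful than the paper's about working in $L^2$ rather than in expectation.

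There is, however, one small gap in the ``bookkeeping'' step you yourself flag as the crux. Your claimed pointwise bound
\[
x^2|\partial_{xx}D_1\vz|+x^2|\vo_{xx}|\ \le\ C\,d(z)\,\vz
\]
does not follow from $R\le Cx$, \eqref{eq_prop_Rbounds} and \eqref{eq_vzvzxx} alone. Indeed, $\partial_{xx}D_1\vz=R_{xx}\vz_x+(R_x-1)\vz_{xx}$, and while the second piece is handled by \eqref{eq_vzvzxx}, the first requires $|xR_{xx}|$ bounded; from $|RR_{xx}|\le K_1$ this needs a \emph{lower} bound $R(t,x;\lambda(z))\ge c\,x$, not the upper bound you cite. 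The same issue reappears in $x^2\vo_{xx}$ through the term $xA_x$ with $A=(R_x-1)^2+RR_{xx}$. The paper covers exactly this point by invoking \cite[Proposition~4]{KaZa:16}, which under condition~\eqref{cond_Iprime} (note the requirement $c_2>1$) propagates the terminal estimate $c_1x\le R(x)$ --- equivalent to $c_1I(y)\le|yI'(y)|$ --- to $R(t,x;\lambda)$. Add this citation (or re-derive the bound) and your argument goes through.
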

\begin{proof}
We shall prove that each term in $R_t^\delta$ is of order $\delta^{2H}$. The first term we deal with is $g_t^{(1)}\vz_x$ with $g_t^{(1)}$ defined in \eqref{def_g1}:
\begin{align*}
\abs{g_t^{(1)} \vz_x(t,X_t^\pz, \Zh{0})} &= \half\left(\Zh{t}-\Zh{0}\right)^2 \abs{2\left(\lambda'\right)^2 R + 2\lambda\lambda'' R + 4\lambda\lambda'R_z + \lambda^2 R_{zz}}_{z= \chi_t^{(1)}} \vz_x(t,X_t^\pz, \Zh{0}) \\
& \leq \half \left(\Zh{t}-\Zh{0}\right)^2 d(\chi_t^{(1)}) R(t,X_t^\pz; \lambda(\chi_t^{(1)})) \vz_x(t,X_t^\pz, \Zh{0})\\
& \leq \half \left(\Zh{t}-\Zh{0}\right)^2 d(\chi_t^{(1)}) CX_t^\pz \vz_x(t,X_t^\pz, \Zh{0}) \\
& \leq C \left(\Zh{t}-\Zh{0}\right)^2 d(\chi_t^{(1)}) \vz(t,X_t^\pz, \Zh{0}).
\end{align*}
Here the first inequality follows from \cite[Propositon~3.7]{FoHu:16}: there exists non-negative functions $\widetilde d_{01}(z)$ and $\widetilde d_{02}(z)$ that have mostly polynomial growth and satisfy
\begin{equation*}
\abs{R_z(t,x;\lambda(z))} \leq \widetilde d_{01}(z)R(t,x;\lambda(z)), \quad \abs{R_{zz}(t,x;\lambda(z))} \leq \widetilde d_{02}(z)R(t,x;\lambda(z)),
\end{equation*}
and thus $d(z)$ is also at most polynomially growing defined as
\begin{equation}
d(z) = \abs{2\left(\lambda'(z)\right)^2 + 2\lambda(z)\lambda''(z) + 4\lambda(z)\lambda'(z)\widetilde d_{01}(z) + \lambda^2(z)\widetilde d_{02}(z)}.
\end{equation}
The second inequality is given by the estimate $R(t,x;\lambda(z)) \leq Cx$ and the concavity of $\vz$.  
Therefore 
\begin{align*}
\EE\left[\int_0^T g_s^{(1)}\vz_x(s,X_s^\pz, \Zh{0}) \ud s \right] &\leq C\EE\left[\int_0^T\left(\Zh{s}-\Zh{0}\right)^2 d(\chi_s^{(1)}) \vz(s, X_s^\pz, \Zh{0}) \ud s\right]\\
&\hspace{-3cm} \leq \left[\EE\int_0^T \left(\Zh{s}-\Zh{0}\right)^8 \ud s \right]^{\frac{1}{4}}\left[\EE\int_0^T d^4(\chi_s^{(1)})\ud s \right]^{\frac{1}{4}}\left[\EE\int_0^T  \left(\vz(s,X_s^\pz, \Zh{0})\right)^2\ud s \right]^{\frac{1}{2}}
\end{align*}
and is of order $\delta^{2H}$. This is because, one has proved in Lemma~\ref{lem_moments}\eqref{lem_Zhdiff} that the first expectation is of order $\delta^{2H}$,
the second expectation is uniformly bounded in $\delta$ due to the polynomial growth property of $d(\cdot)$ and Lemma~\ref{lem_moments}\eqref{lem_Zh}, while the third term is uniformly bounded by Assumption~\ref{assump_valuefunc}\eqref{assump_vz}. 

Other terms contained in $R_t^\delta$ can be proved of order $\delta^{2H}$ in a similar way with additional Assumption~\ref{assump_valuefunc}\eqref{assump_vxx}, estimates \eqref{eq_prop_Rbounds}, Lemma~\ref{lem_moments}\eqref{lem_phi}-\eqref{lem_psi} and estimates from \cite[Proposition~4]{KaZa:16}.

\end{proof}

\section{Assumptions in Section \ref{sec_optimalitypz}}\label{appendix_addasump}

This set of assumptions is used in establishing the approximation accuracy \eqref{eq_Vexpansionpz} (resp. \eqref{eq_Vexpansionpzt}) to $V_t^\pi$ defined in \eqref{def_Vzpi}, namely, these assumptions will ensure that $\widetilde M_t^\delta$ (resp. $\widehat M_t^\delta$) is a true martingale and that $\widetilde R_t^{\delta}$ (resp. $\widehat R_t^\delta$)  is of order $\delta^{H + H \wedge \alpha}$ (resp. $\delta^{H \wedge \alpha}$).
\begin{assump}\label{assump_optimality}
	Let $\MCA_0(t,x,z)\left[\pzt,\pot,\alpha\right]$ be the family of trading strategies defined in \eqref{def_Atilde}. Recall that $X^\pi$ is the wealth generated by the strategy $\pi=\pzt+\delta^\alpha\pot$ as defined in \eqref{def_Xtilde}. In order to condense the notation,  we systematically omit  the argument $(s,X_s^\pi,\Zh{0})$ of $\vz$ and $\vo$, the argument $\Zh{s}$ of $\mu$ and $\sigma$, the argument $\Zh{0}$ of $\lambda$, and $(s, X_s^\pi, \Zh{s})$ of $\pzt$ and $\pot$ in what follows. According to the different cases, we further require:
	\begin{enumerate}[(i)]
		\item\label{assump_optimality_eq} If $\pzt \equiv \pz$, the following quantities are uniformly bounded in $\delta$:
		
		$\EE\int_0^T \left( (\mu\pot)_z\vert_{z = \widetilde\chi_s^{(1)}}\vz_x\right)^2\ud s$,
		$\EE\int_0^T \left( (\mu R\pot)_z\vert_{z = \widetilde\chi_s^{(2)}}\vz_{xx}\right)^2\ud s$,
		$\EE\int_0^T \left( \mu\pot \vz_x\right)^2\ud s$, 
		$\EE\int_0^T \left(\sigma\pot\vz_x\right)^2 \ud s $, 
		
		$\EE\int_0^T \left( \sigma^2\left(\pot\right)^2\partial_{xx} D_1\vz\right)^2\ud s$,		
		$\EE \left[\lambda^2\lambda'\int_0^T  \mu\pot \vo_x \ud s\right]$,		
		$\EE \left[\lambda^2\lambda'\int_0^T  \sigma^2\left(\pot\right)^2 \vo_{xx} \ud s\right]$,
		
		$\EE\left[\lambda\lambda'\left(\int_0^T \left(\sigma\pot\vz_x\phi_s^\delta\right)^2 \ud s\right)^\half\right]$, 
		$\EE\left[\lambda^2\lambda'\left(\int_0^T \left(\sigma\pot\vo_x\right)^2 \ud s\right)^\half\right]$,
		
		\item\label{assump_optimality_neq} If $\pzt \not\equiv \pz$, we require the uniformly boundedness (in $\delta$) of the following:

		$\EE\int_0^T \left( (\mu\pzt)_z\vert_{z = \widehat\chi_s^{(1)}}\vz_x\right)^2\ud s$,
		$\EE\int_0^T \left( (\sigma^2\left(\pzt\right)^2)_z\vert_{z = \widehat\chi_s^{(2)}}\vz_{xx}\right)^2\ud s$,
		$\EE\int_0^T  \mu\pot \vz_x\ud s$,

		$\EE\int_0^T  \sigma^2 \left(\pot\right)^2\vz_{xx}\ud s$,
		$\EE\left(\int_0^T \left(\sigma\pzt\vz_x\right)^2 \ud s\right)^\half$,
		$\EE\left(\int_0^T \left(\sigma\pot\vz_x\right)^2 \ud s\right)^\half$.
	\end{enumerate}
\end{assump}

\bibliographystyle{plainnat}
\bibliography{Reference}

\end{document}